\theoremstyle{definition}
\newtheorem{definition}{Definition}
\newtheorem{example}{Example}
\theoremstyle{plain}
\newtheorem{theorem}{Theorem}
\newtheorem{proposition}{Proposition}
\newtheorem{lemma}{Lemma}
\newtheorem{remark}{Remark}
\newtheorem{corollary}{Corollary}
\title{Bounding the number of common zeros of multivariate polynomials and their consecutive derivatives}
\author{Olav Geil \thanks{olav@math.aau.dk}}
\author{Umberto Mart{\'i}nez-Pe\~{n}as \thanks{umberto@math.aau.dk}}
\affil{Department of Mathematical Sciences, Aalborg University, Denmark}
\date{}
\begin{document}

\maketitle

\begin{abstract}
We upper bound the number of common zeros over a finite grid of multivariate polynomials and an arbitrary finite collection of their consecutive Hasse derivatives (in a coordinate-wise sense). To that end, we make use of the tool from Gr{\"o}bner basis theory known as footprint. Then we establish and prove extensions in this context of a family of well-known results in algebra and combinatorics. These include Alon's combinatorial Nullstellensatz \cite{alon}, existence and uniqueness of Hermite interpolating polynomials over a grid, estimations on the parameters of evaluation codes with consecutive derivatives \cite{multiplicitycodes}, and bounds on the number of zeros of a polynomial by DeMillo and Lipton \cite{demillo}, Schwartz \cite{schwartz}, Zippel \cite{zippel-first, zippel}, and Alon and F{\"u}redi \cite{alon-furedi}. As an alternative, we also extend the Schwartz-Zippel bound to weighted multiplicities and discuss its connection with our extension of the footprint bound.

\textbf{Keywords:} Footprint bound, Gr{\"o}bner basis, Hasse derivative, Hermite interpolation, multiplicity, Nullstellensatz, Schwartz-Zippel bound. 

\textbf{MSC:} 11T06, 12D10, 13P10. 
\end{abstract}

\section{Introduction}

Estimating the number of zeros of a polynomial over a field $ \mathbb{F} $ has been a central problem in algebra, where one of the main inconveniences is counting \textit{repeated zeros}, that is, \textit{multiplicities}. In the univariate case, this is easily solved by defining the multiplicity of a zero as the minimum positive integer $ r $ such that the first $ r $ \textit{consecutive derivatives} of the given polynomial vanish at that zero. In addition, Hasse derivatives \cite{hasse} are used instead of classical derivatives in order to give meaningful information over fields of positive characteristic. In this way, the number of zeros of a polynomial, counted with multiplicities, is upper bounded by its degree. Formally:
\begin{equation}
\sum_{a \in \mathbb{F}} m(F(x), a) \leq \deg(F(x)).
\label{eq in intro 1}
\end{equation}
If $ \mathcal{V}_{\geq r}(F(x)) $ denotes the set of zeros of $ F(x) $ of multiplicity at least $ r $, then a weaker, but still sharp, bound is the following:
\begin{equation}
\# \mathcal{V}_{\geq r}(F(x)) \cdot r \leq \deg(F(x)).
\label{eq in intro 2}
\end{equation}

In the multivariate case, the standard approach is to consider the first $ r $ consecutive Hasse derivatives as those whose multiindices have order less than $ r $, where the order of a multiindex $ (i_1, i_2, \ldots, i_m) $ is defined as $ \sum_{j=1}^m i_j $. We will use the terms \textit{standard multiplicities} to refer to this type of multiplicities. In this work, we consider arbitrary finite families $ \mathcal{J} $ of multiindices that are consecutive in a coordinate-wise sense: if $ (i_1, i_2, \ldots, i_m) $ belongs to $ \mathcal{J} $ and $ k_j \leq i_j $, for $ j = 1,2, \ldots, m $, then $ (k_1, k_2, \ldots, k_m) $ also belongs to $ \mathcal{J} $. Obviously, the (finite) family $ \mathcal{J} $ of multiindices of order less than a given positive integer $ r $ satisfies this property, hence is a particular case. 

Our main contribution is an upper bound on the number of common zeros over a grid of a family of polynomials and their (Hasse) derivatives corresponding to a finite set $ \mathcal{J} $ of consecutive multiindices. This upper bound makes use of the  technique from Gr{\"o}bner basis theory known as \textit{footprint} \cite{footprints, hoeholdt}, and can be seen as an extension of the classical \textit{footprint bound} \cite[Section 5.3]{clo1} in the sense of (\ref{eq in intro 2}). A first extension for standard multiplicities has been given as Lemma 2.4 in the expanded version of \cite{ruudbound}.

We will then show that this bound is sharp for ideals of polynomials, characterize those which satisfy equality, and give as applications extensions of known results in algebra and combinatorics: Alon's combinatorial Nullstellensatz \cite{alon, ball, clark, nulmultisets, shortproof}, existence and uniqueness of Hermite interpolating polynomials \cite{gasca, kopparty-multiplicity, lorentz}, estimations on the parameters of evaluation codes with consecutive derivatives \cite{weightedRM, kopparty-multiplicity, multiplicitycodes}, and the bounds by DeMillo and Lipton \cite{demillo}, Zippel \cite{zippel-first, zippel}, and Alon and F{\"u}redi \cite{alon-furedi}, and a particular case of the bound given by Schwartz in \cite[Lemma 1]{schwartz}. 

The bound in \cite[Lemma 1]{schwartz} can also be derived by those given by DeMillo and Lipton \cite{demillo}, and Zippel \cite[Theorem 1]{zippel-first}, \cite[Proposition 3]{zippel} (see Proposition \ref{proposition zippel implies schwartz} below), and is referred to as the \textit{Schwartz-Zippel bound} in many works in the literature \cite{extensions, weightedRM, kopparty-multiplicity, multiplicitycodes}. Interestingly, an extension of such bound for standard multiplicities in the sense of (\ref{eq in intro 1}) has been recently given in \cite[Lemma 8]{extensions}, but as Counterexample 7.4 in \cite{grid} shows, no straightforward extension of the footprint bound in the sense of (\ref{eq in intro 1}) seems possible (recall that we will give a footprint bound in the sense of (\ref{eq in intro 2})). To conclude this work, we give an extension of the Schwartz-Zippel bound in the sense of (\ref{eq in intro 1}) to derivatives with weighted order less than a given positive integer, which we will call \textit{weighted multiplicities}. This bound is inspired by \cite[Lemma 8]{extensions}, and we will discuss its connection with our extension of the footprint bound.

The results are organized as follows: We start with some preliminaries in Section 2. We then give the main bound in Section 3, together with some particular cases, an interpretation of the bound, and sharpness and equality conditions. In Section 4, we give a list of applications. Finally, in Section 5 we give an extension of the Schwartz-Zippel bound in the sense of (\ref{eq in intro 1}) to weighted multiplicities, and discuss the connections with the bound in Section 3.

\subsection*{Notation}

Throughout this paper, $ \mathbb{F} $ denotes an arbitrary field. We denote by $ \mathbb{F}[\mathbf{x}] = \mathbb{F}[x_1, x_2, \ldots, $ $ x_m] $ the ring of polynomials in the $ m $ variables $ x_1, x_2, \ldots, x_m $ with coefficients in $ \mathbb{F} $. A multiindex is a vector $ \mathbf{i} = (i_1, i_2, \ldots, i_m) \in \mathbb{N}^m $, where $ \mathbb{N} = \{ 0,1,2,3, \ldots \} $, and as usual we use the notation $ \mathbf{x}^\mathbf{i} = x_1^{i_1} x_2^{i_2} \cdots x_m^{i_m} $. We also denote $ \mathbb{N}_+ = \{ 1,2,3, \ldots \} $.

In this work, $ \preceq $ denotes the coordinate-wise partial ordering in $ \mathbb{N}^m $, that is, $ (i_1, i_2, \ldots, $ $ i_m) \preceq (j_1, j_2, \ldots, j_m) $ if $ i_k \leq j_k $, for all $ k = 1,2, \ldots, m $. We will use $ \preceq_m $ to denote a given monomial ordering in the set of monomials of $ \mathbb{F}[\mathbf{x}] $ (see \cite[Section 2.2]{clo1}), and we denote by $ {\rm LM}_{\preceq_m}(F(\mathbf{x})) $ the leading monomial of $ F(\mathbf{x}) \in \mathbb{F}[\mathbf{x}] $ with respect to $ \preceq_m $, or just $ {\rm LM}(F(\mathbf{x})) $ if there is no confusion about $ \preceq_m $. Finally, the notation $ \langle A \rangle $ means ideal generated by $ A $ in a ring, and $ \langle A \rangle_\mathbb{F} $ means vector space over $ \mathbb{F} $ generated by $ A $.

\section{Consecutive derivatives}

In this work, we consider Hasse derivatives, introduced first in \cite{hasse}. They coincide with usual derivatives except for multiplication with a non-zero constant factor when the corresponding multiindex contains no multiples of the characteristic of the field, and they have the advantage of not being identically zero otherwise. 

\begin{definition}[\textbf{Hasse derivative \cite{hasse}}] \label{def Hasse derivative}
Let $ F(\mathbf{x}) \in \mathbb{F}[\mathbf{x}] $ be a polynomial. Given another family of independent variables $ \mathbf{z} = (z_1, z_2, \ldots, z_m) $, the polynomial $ F(\mathbf{x} + \mathbf{z}) $ can be written uniquely as
$$ F(\mathbf{x} + \mathbf{z}) = \sum_{\mathbf{i} \in \mathbb{N}^m} F^{(\mathbf{i})}(\mathbf{x}) \mathbf{z}^\mathbf{i}, $$
for some polynomials $ F^{(\mathbf{i})}(\mathbf{x}) \in \mathbb{F}[\mathbf{x}] $, for $ \mathbf{i} \in \mathbb{N}^m $. For a given multiindex $ \mathbf{i} \in \mathbb{N}^m $, we define the $ \mathbf{i} $-th Hasse derivative of $ F(\mathbf{x}) $ as the polynomial $ F^{(\mathbf{i})}(\mathbf{x}) \in \mathbb{F}[\mathbf{x}] $.
\end{definition}

We next formalize the concept of zero of a polynomial of at least a given multiplicity as that of common zero of the given polynomial and a given finite family of its derivatives:

\begin{definition} \label{def general multiplicity}
Let $ F(\mathbf{x}) \in \mathbb{F}[\mathbf{x}] $ be a polynomial, let $ \mathbf{a} \in \mathbb{F}^m $ be an affine point, and let $ \mathcal{J} \subseteq \mathbb{N}^m $ be a finite set. We say that $ \mathbf{a} $ is a zero of $ F(\mathbf{x}) $ of multiplicity at least $ \mathcal{J} $ if $ F^{(\mathbf{i})}(\mathbf{a}) = 0 $, for all $ \mathbf{i} \in \mathcal{J} $. 
\end{definition}

The concept of \textit{consecutive derivatives}, in a coordinate-wise sense, can be formalized by the concept of \textit{decreasing sets} of multiindices (recall that $ \preceq $ denotes the coordinate-wise ordering in $ \mathbb{N}^m $):

\begin{definition} [\textbf{Decreasing sets}]
We say that the set $ \mathcal{J} \subseteq \mathbb{N}^m $ is decreasing if whenever $ \mathbf{i} \in \mathcal{J} $ and $ \mathbf{j} \in \mathbb{N}^m $ are such that $ \mathbf{j} \preceq \mathbf{i} $, it holds that $ \mathbf{j} \in \mathcal{J} $.
\end{definition}

Observe that the finite set $ \mathcal{J} = \{ (i_1, i_2, \ldots, i_m) \in \mathbb{N}^m : \sum_{j=1}^m i_j < r \} $, for a positive integer $ r $, is decreasing. Moreover, if $ m = 1 $, then these are all possible decreasing finite sets. The concept of weighted orders and weighted multiplicities shows that this is not the case when $ m > 1 $:

\begin{definition} [\textbf{Weighted multiplicities}] \label{def weighted multiplicity}
Fix a vector of positive weights $ \mathbf{w} = (w_1, $ $ w_2, $ $ \ldots, w_m) \in \mathbb{N}_+^m $. Given a multiindex $ \mathbf{i} = (i_1, i_2, \ldots, i_m) \in \mathbb{N}^m $, we define its weighted order as
\begin{equation}
 \mid \mathbf{i} \mid_\mathbf{w} = i_1 w_1 + i_2 w_2 + \cdots + i_m w_m.
\label{eq def weighted norm}
\end{equation}
Let $ F(\mathbf{x}) \in \mathbb{F}[\mathbf{x}] $ be a polynomial and let $ \mathbf{a} \in \mathbb{F}^m $ be an affine point. We say that $ \mathbf{a} $ is a zero of $ F(\mathbf{x}) $ of weighted multiplicity $ r \in \mathbb{N} $, and we write
$$ m_\mathbf{w}(F(\mathbf{x}), \mathbf{a}) = r, $$
if $ F^{(\mathbf{i})}(\mathbf{a}) = 0 $, for all $ \mathbf{i} \in \mathbb{N}^m $ with $ \mid \mathbf{i} \mid_{\mathbf{w}} < r $, and $ F^{(\mathbf{j})}(\mathbf{a}) \neq 0 $, for some $ \mathbf{j} \in \mathbb{N}^m $ with $ \mid \mathbf{j} \mid_{\mathbf{w}} = r $.
\end{definition}

We also introduce the definition of weighted degree, which will be convenient for different results in the following sections:

\begin{definition} [\textbf{Weighted degrees}] \label{def weighted degree}
Let $ F(\mathbf{x}) \in \mathbb{F}[\mathbf{x}] $ be a polynomial and let $ \mathbf{w} \in \mathbb{N}_+^m $ be a vector of positive weights. We define the weighted degree of $ F(\mathbf{x}) $ as 
$$ \deg_\mathbf{w}(F(\mathbf{x})) = \max \{ \mid \mathbf{i} \mid_\mathbf{w} : F_\mathbf{i} \neq 0 \}, $$
where $ F(\mathbf{x}) = \sum_{\mathbf{i} \in \mathbb{N}^m} F_\mathbf{i} \mathbf{x}^\mathbf{i} $ and $ F_\mathbf{i} \in \mathbb{F} $, for all $ \mathbf{i} \in \mathbb{N}^m $.
\end{definition}

Other interesting sets of consecutive derivatives that we will consider throughout the paper are those given by bounding each index separately, that is, sets of the form $ \mathcal{J} = \left\lbrace (i_1, i_2, \ldots, i_m) \in \mathbb{N}^m : i_j < r_j, j = 1,2, \ldots, m \right\rbrace $, for a given $ (r_1, r_2, \ldots, r_m) \in \mathbb{N}_+^m $, where $ \preceq $ denotes the coordinate-wise partial ordering.

\section{The footprint bound for consecutive derivatives} \label{sec footprint bounds}

In this section, we will give an extension of the footprint bound \cite[Section 5.3]{clo1} to upper bound the number of common zeros over a finite grid of a family of polynomials and a given set of their consecutive derivatives, as in Definition \ref{def general multiplicity}. We give some particular cases and an interpretation of the bound. We conclude by studying its sharpness.

Throughout the section, fix a decreasing finite set $ \mathcal{J} \subseteq \mathbb{N}^m $, an ideal $ I \subseteq \mathbb{F}[\mathbf{x}] $ and finite subsets $ S_1, S_2, \ldots, S_m \subseteq \mathbb{F} $. Write $ S = S_1 \times S_2 \times \cdots \times S_m $, and denote by $ G_j(x_j) \in \mathbb{F}[x_j] $ the defining polynomial of $ S_j $, that is, $ G_j(x_j) = \prod_{s \in S_j}(x_j-s) $, for $ j = 1,2, \ldots, m $. The three objects involved in our bound are the following:

\begin{definition} \label{def main objects in bound}
We define the ideal 
$$ I_{\mathcal{J}} = I + \left\langle \left\lbrace \prod_{j=1}^m G_j(x_j)^{r_j} : (r_1, r_2, \ldots, r_m) \notin \mathcal{J} \right\rbrace \right\rangle $$
and the set of zeros of multiplicity at least $ \mathcal{J} $ of the ideal $ I $ in the grid $ S = S_1 \times S_2 \times \cdots \times S_m $ as
$$ \mathcal{V}_{\mathcal{J}}(I) = \left\lbrace \mathbf{a} \in S : F^{(\mathbf{i})}(\mathbf{a}) = 0, \forall F(\mathbf{x}) \in I, \forall \mathbf{i} \in \mathcal{J} \right\rbrace . $$
Finally, given a monomial ordering $ \preceq_m $, we define the footprint of an ideal $ J \subseteq \mathbb{F}[\mathbf{x}] $ as
$$ \Delta_{\preceq_m} (J) = \left\lbrace \mathbf{x}^\mathbf{i} : \mathbf{x}^\mathbf{i} \notin \left\langle {\rm LM}(J) \right\rangle \right\rbrace , $$
where $ {\rm LM}(J) = \{ {\rm LM}(F(\mathbf{x})) : F(\mathbf{x}) \in J \} $ with respect to the monomial ordering $ \preceq_m $. We write $ \Delta(J) $ if there is no confusion about the monomial ordering.
\end{definition}

\subsection{The general bound}

\begin{theorem} \label{th footprint bound}
For any monomial ordering, it holds that
\begin{equation} \label{general bound}
\# \mathcal{V}_{\mathcal{J}}(I) \cdot \# \mathcal{J} \leq \# \Delta \left( I_\mathcal{J} \right).
\end{equation}
\end{theorem}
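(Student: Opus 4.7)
My plan is to sandwich the vector space dimension of the quotient $ \mathbb{F}[\mathbf{x}]/I_\mathcal{J} $ between $ \# \mathcal{V}_\mathcal{J}(I) \cdot \# \mathcal{J} $ from below and $ \# \Delta(I_\mathcal{J}) $ from above. The upper bound is the classical Gr\"obner basis fact that the cosets of the monomials in the footprint span $ \mathbb{F}[\mathbf{x}]/J $, so $ \dim_\mathbb{F}(\mathbb{F}[\mathbf{x}]/J) \leq \# \Delta(J) $ for any ideal $ J $; the footprint is automatically finite here, because for $ N $ large enough $ N \mathbf{e}_j \notin \mathcal{J} $ yields the univariate element $ G_j(x_j)^N \in I_\mathcal{J} $, which bounds the $ j $-th coordinate of any exponent in $ \Delta(I_\mathcal{J}) $. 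The lower bound is where the real content lies, and I will obtain it from a multiplicity-enriched evaluation map.

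I consider
$$ \mathrm{ev}: \mathbb{F}[\mathbf{x}]/I_\mathcal{J} \longrightarrow \mathbb{F}^{\mathcal{V}_\mathcal{J}(I) \times \mathcal{J}}, \qquad F(\mathbf{x}) + I_\mathcal{J} \longmapsto \big( F^{(\mathbf{i})}(\mathbf{a}) \big)_{(\mathbf{a},\mathbf{i})}. $$
For well-definedness I need to verify that every generator $ G(\mathbf{x}) $ of $ I_\mathcal{J} $ satisfies $ G^{(\mathbf{i}')}(\mathbf{a}) = 0 $ for every $ \mathbf{i}' \preceq \mathbf{i} $ with $ \mathbf{i} \in \mathcal{J} $; this is exactly what the Leibniz rule for Hasse derivatives needs to kill $ (H \cdot G)^{(\mathbf{i})}(\mathbf{a}) $ for an arbitrary multiplier $ H $, and the decreasing hypothesis on $ \mathcal{J} $ guarantees that each such $ \mathbf{i}' $ is itself in $ \mathcal{J} $. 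For $ G \in I $ this is the definition of $ \mathcal{V}_\mathcal{J}(I) $. For a grid generator $ G = \prod_j G_j(x_j)^{r_j} $ with $ (r_1, \ldots, r_m) \notin \mathcal{J} $, the decreasing property again forces some coordinate $ j $ with $ i_j' < r_j $ (otherwise $ (r_1, \ldots, r_m) \preceq \mathbf{i}' \in \mathcal{J} $, contradiction), and in that coordinate $ (G_j^{r_j})^{(i_j')}(a_j) = 0 $ because $ a_j \in S_j $ is a simple zero of $ G_j $; the coordinate-wise factorization of the Hasse derivative then makes $ G^{(\mathbf{i}')}(\mathbf{a}) $ vanish.

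The main obstacle will be the surjectivity of $ \mathrm{ev} $, which yields $ \dim_\mathbb{F}(\mathbb{F}[\mathbf{x}]/I_\mathcal{J}) \geq \# \mathcal{V}_\mathcal{J}(I) \cdot \# \mathcal{J} $ and completes the proof. For this I plan to produce, for each $ (\mathbf{a}, \mathbf{i}) \in \mathcal{V}_\mathcal{J}(I) \times \mathcal{J} $, a Hermite interpolating polynomial $ H_{\mathbf{a}, \mathbf{i}}(\mathbf{x}) \in \mathbb{F}[\mathbf{x}] $ with $ H_{\mathbf{a},\mathbf{i}}^{(\mathbf{j})}(\mathbf{b}) = \delta_{(\mathbf{a},\mathbf{i}),(\mathbf{b},\mathbf{j})} $ for all $ (\mathbf{b}, \mathbf{j}) \in \mathcal{V}_\mathcal{J}(I) \times \mathcal{J} $, so that an arbitrary target $ (c_{(\mathbf{a}, \mathbf{i})}) $ is realized by $ \mathrm{ev}\!\left( \sum c_{(\mathbf{a}, \mathbf{i})} H_{\mathbf{a}, \mathbf{i}} \right) $. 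These Hermite polynomials can be constructed as tensor products of the standard univariate Hermite interpolants on each $ S_j $, and the decreasing nature of $ \mathcal{J} $ is precisely what makes this tensor construction respect the prescribed order-$ \mathbf{j} $ Hasse derivative conditions. Chaining the two bounds then gives $ \# \mathcal{V}_\mathcal{J}(I) \cdot \# \mathcal{J} \leq \# \Delta(I_\mathcal{J}) $.
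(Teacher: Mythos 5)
Your proposal is correct and follows essentially the same route as the paper: extend the Hermite-derivative evaluation map to the quotient $\mathbb{F}[\mathbf{x}]/I_\mathcal{J}$ (checking well-definedness via Leibniz, the decreasing property of $\mathcal{J}$, and the fact that each $a_j$ is a simple zero of $G_j$), use tensor-product Hermite interpolation for surjectivity, and finish with the Gr\"obner-basis fact that the footprint monomials span the quotient. The only minor inaccuracy is your remark that the decreasing hypothesis is what makes the tensor Hermite construction work --- it is not needed there (the paper proves surjectivity for arbitrary finite $\mathcal{J}$ by embedding it in a box $[0,M]^m$); the decreasing property is needed only for well-definedness.
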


The rest of the subsection is devoted to the proof of this result. The first auxiliary tool is the Leibniz formula, which follows by a straightforward computation (see also \cite[pages 144--155]{torresbook}):

\begin{lemma} [\textbf{Leibniz formula}] \label{lemma leibniz formula}
Let $ F_1(\mathbf{x}), F_2(\mathbf{x}), \ldots, F_s(\mathbf{x}) \in \mathbb{F}[\mathbf{x}] $ and let $ \mathbf{i} \in \mathbb{N}^m $. It holds that
$$ \left( \prod_{j=1}^s F_j(\mathbf{x}) \right)^{(\mathbf{i})} = \sum_{\mathbf{i}_1 + \mathbf{i}_2 + \cdots + \mathbf{i}_s = \mathbf{i}} \left( \prod_{j=1}^s F_j^{(\mathbf{i}_j)}(\mathbf{x}) \right). $$
\end{lemma}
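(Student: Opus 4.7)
The plan is to prove the Leibniz formula as a direct identity in the polynomial ring $\mathbb{F}[\mathbf{x}, \mathbf{z}]$ by exploiting the defining property of the Hasse derivative from Definition \ref{def Hasse derivative} together with the uniqueness of the expansion of a polynomial in powers of $\mathbf{z}$. The key observation is that the map sending $F(\mathbf{x})$ to $F(\mathbf{x}+\mathbf{z})$ is a ring homomorphism from $\mathbb{F}[\mathbf{x}]$ to $\mathbb{F}[\mathbf{x},\mathbf{z}]$, so products of polynomials go to products of shifts; the Leibniz formula is then just the statement of how a product expands when each factor is already written as a series in $\mathbf{z}$.

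Concretely, the steps I would carry out are as follows. First, apply Definition \ref{def Hasse derivative} to each individual factor $F_j(\mathbf{x})$ to obtain the identity $F_j(\mathbf{x}+\mathbf{z}) = \sum_{\mathbf{i}_j \in \mathbb{N}^m} F_j^{(\mathbf{i}_j)}(\mathbf{x}) \mathbf{z}^{\mathbf{i}_j}$. Next, because evaluation at $\mathbf{x} + \mathbf{z}$ respects products, write
$$ \left( \prod_{j=1}^s F_j \right)(\mathbf{x}+\mathbf{z}) \;=\; \prod_{j=1}^s F_j(\mathbf{x}+\mathbf{z}) \;=\; \prod_{j=1}^s \left( \sum_{\mathbf{i}_j \in \mathbb{N}^m} F_j^{(\mathbf{i}_j)}(\mathbf{x}) \mathbf{z}^{\mathbf{i}_j} \right). $$
Then expand the right-hand product as a single sum indexed by tuples $(\mathbf{i}_1, \ldots, \mathbf{i}_s) \in (\mathbb{N}^m)^s$, and group terms according to $\mathbf{i} = \mathbf{i}_1 + \cdots + \mathbf{i}_s$, yielding
$$ \left( \prod_{j=1}^s F_j \right)(\mathbf{x}+\mathbf{z}) \;=\; \sum_{\mathbf{i} \in \mathbb{N}^m} \left( \sum_{\mathbf{i}_1 + \cdots + \mathbf{i}_s = \mathbf{i}} \prod_{j=1}^s F_j^{(\mathbf{i}_j)}(\mathbf{x}) \right) \mathbf{z}^{\mathbf{i}}. $$
Finally, apply Definition \ref{def Hasse derivative} once more, this time to the single polynomial $\prod_{j=1}^s F_j(\mathbf{x})$, and invoke the \emph{uniqueness} of the coefficients $(\prod_j F_j)^{(\mathbf{i})}(\mathbf{x})$ in the expansion in powers of $\mathbf{z}$; matching coefficients of $\mathbf{z}^{\mathbf{i}}$ on both sides gives the stated formula.

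There is essentially no serious obstacle here; the mild technical point is the legitimacy of rearranging the multi-indexed product-of-sums into a sum grouped by the multi-index total $\mathbf{i}$. Since for fixed $\mathbf{i}$ there are only finitely many tuples $(\mathbf{i}_1, \ldots, \mathbf{i}_s)$ with $\mathbf{i}_1 + \cdots + \mathbf{i}_s = \mathbf{i}$ (each $\mathbf{i}_j \preceq \mathbf{i}$ in the coordinate-wise order $\preceq$), and since every polynomial $F_j$ has only finitely many nonzero Hasse derivatives, all sums are actually finite and the rearrangement is purely formal. A two- or three-line induction on $s$ starting from the case $s = 2$ is an equally valid alternative, using $\left(F_1 F_2\right)^{(\mathbf{i})}(\mathbf{x}) = \sum_{\mathbf{i}_1 + \mathbf{i}_2 = \mathbf{i}} F_1^{(\mathbf{i}_1)}(\mathbf{x}) F_2^{(\mathbf{i}_2)}(\mathbf{x})$, but the direct expansion above avoids an induction and makes the combinatorial origin of the convolution-type sum transparent.
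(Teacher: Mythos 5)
Your proposal is correct and is precisely the "straightforward computation" the paper alludes to without spelling out: expand each $F_j(\mathbf{x}+\mathbf{z})$ via Definition \ref{def Hasse derivative}, multiply, regroup by the total multiindex, and match coefficients of $\mathbf{z}^{\mathbf{i}}$ using uniqueness. The finiteness remark justifying the regrouping is a sound (if standard) touch.
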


The second auxiliary tool is the existence of Hermite interpolating polynomials with Hasse derivatives. For our purposes, a \textit{separated-variables} extension of univariate Hermite interpolation over grids is enough. This extension is straightforward and seems to be known in the literature (see \cite[Section 3.1]{lorentz}), but we give a short proof in the Appendix for convenience of the reader.

\begin{definition} \label{def evaluation map}
We define the evaluation map on a finite set $ T \subseteq \mathbb{F}^m $ with derivatives corresponding to multiindices in $ \mathcal{J} $ as
\begin{equation} \label{evaluation map definition}
\begin{split}
{\rm Ev} : & \mathbb{F}[\mathbf{x}] \longrightarrow \mathbb{F}^{\# T \cdot \# \mathcal{J}} \\
 & F(\mathbf{x}) \mapsto \left( \left( F^{(\mathbf{i})}(\mathbf{a}) \right) _{\mathbf{i} \in \mathcal{J}} \right) _{\mathbf{a} \in T} . 
\end{split}
\end{equation}
\end{definition}

\begin{lemma}[\textbf{Hermite interpolation}] \label{lemma hermite}
The evaluation map $ {\rm Ev} : \mathbb{F}[\mathbf{x}] \longrightarrow \mathbb{F}^{\# T \cdot \# \mathcal{J}} $ defined in (\ref{evaluation map definition}) is surjective, for all finite sets $ T \subseteq \mathbb{F}^m $ and $ \mathcal{J} \subseteq \mathbb{N}^m $.
\end{lemma}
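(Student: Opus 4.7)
The plan is to prove surjectivity by producing, for each $(\mathbf{a}_0, \mathbf{i}_0) \in T \times \mathcal{J}$, a polynomial $L_{\mathbf{a}_0, \mathbf{i}_0}(\mathbf{x}) \in \mathbb{F}[\mathbf{x}]$ whose image under $\mathrm{Ev}$ is the standard basis vector indexed by $(\mathbf{a}_0, \mathbf{i}_0)$. Since $\mathrm{Ev}$ is $\mathbb{F}$-linear and these vectors span $\mathbb{F}^{\# T \cdot \# \mathcal{J}}$, the existence of such polynomials immediately yields surjectivity.

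First I would reduce to a product grid. Let $S_j \subseteq \mathbb{F}$ be the image of $T$ under the $j$-th coordinate projection, so that $T \subseteq S_1 \times S_2 \times \cdots \times S_m$, put $r_j = 1 + \max\{i_j : (i_1, i_2, \ldots, i_m) \in \mathcal{J}\}$, and define $\mathcal{J}' = \{\mathbf{i} \in \mathbb{N}^m : 0 \le i_j < r_j,\ j = 1, 2, \ldots, m\}$, which contains $\mathcal{J}$. If $\mathrm{Ev}$ is surjective for the enlarged pair $(S_1 \times \cdots \times S_m, \mathcal{J}')$, then for any target $v \in \mathbb{F}^{\# T \cdot \# \mathcal{J}}$ I would extend by zeros to a vector in $\mathbb{F}^{\#(S_1 \times \cdots \times S_m) \cdot \# \mathcal{J}'}$, pull back to a polynomial via surjectivity on the grid, and read off its restriction to recover $v$. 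So it suffices to treat the grid case.

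Next I would handle the univariate case, where $T = S \subseteq \mathbb{F}$ has cardinality $n$ and $\mathcal{J} = \{0, 1, \ldots, r-1\}$. I would restrict $\mathrm{Ev}$ to the $nr$-dimensional subspace of polynomials of degree less than $nr$. By (\ref{eq in intro 2}), any nonzero polynomial in the kernel of this restriction would have at least $n$ zeros of multiplicity at least $r$ and therefore degree at least $nr$, a contradiction; so the restriction is injective. As it is a linear map between spaces of equal dimension, it is bijective, and in particular the preimages $\lambda_{S, a, k}(x) \in \mathbb{F}[x]$ of the standard basis vectors satisfy $\lambda_{S, a, k}^{(l)}(b) = \delta_{a, b} \delta_{k, l}$ for every $(a, k), (b, l) \in S \times \{0, 1, \ldots, r-1\}$.

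Finally, I would combine these via the separated-variables trick. Apply the univariate step to each pair $(S_j, r_j)$ to obtain polynomials $\lambda_{S_j, a, k}(x_j) \in \mathbb{F}[x_j]$, and set
$$L_{\mathbf{a}_0, \mathbf{i}_0}(\mathbf{x}) = \prod_{j=1}^m \lambda_{S_j,\, a_{0,j},\, i_{0,j}}(x_j).$$
Because each factor depends on a single variable, any Hasse derivative of $\lambda_{S_j, a, k}(x_j)$ along a multiindex with a nonzero entry outside the $j$-th coordinate vanishes. Hence in the Leibniz expansion of $L_{\mathbf{a}_0, \mathbf{i}_0}^{(\mathbf{i})}$ provided by Lemma \ref{lemma leibniz formula}, the only surviving summand is the one in which the $j$-th partition block equals $i_j \mathbf{e}_j$ for each $j$, giving
$$L_{\mathbf{a}_0, \mathbf{i}_0}^{(\mathbf{i})}(\mathbf{a}) = \prod_{j=1}^m \lambda_{S_j,\, a_{0,j},\, i_{0,j}}^{(i_j)}(a_j) = \prod_{j=1}^m \delta_{a_{0,j}, a_j} \delta_{i_{0,j}, i_j} = \delta_{\mathbf{a}_0, \mathbf{a}} \delta_{\mathbf{i}_0, \mathbf{i}}$$
for all $(\mathbf{a}, \mathbf{i}) \in (S_1 \times \cdots \times S_m) \times \mathcal{J}'$, which is exactly the required basis element. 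The main obstacle is this last Leibniz computation: verifying that every mixed-variable summand really does vanish and that the surviving diagonal term assembles into the desired Kronecker delta. Everything else is a routine combination of dimension counting and the extension argument of the first step.
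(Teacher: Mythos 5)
Your proposal is correct and follows essentially the same route as the paper's appendix: construct a dual basis of separated-variables products of univariate Hermite interpolating polynomials, then verify via the Leibniz formula (Lemma \ref{lemma leibniz formula}) that only the diagonal term survives. You go slightly further than the paper in actually proving the univariate existence via the degree bound (\ref{eq in intro 2}) and a dimension count, whereas the paper simply cites classical univariate Hermite interpolation.
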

\begin{proof}
See the Appendix. 
\end{proof}

With these tools, we may now prove Theorem \ref{th footprint bound}:

\begin{proof}[Proof of Theorem \ref{th footprint bound}]
Fix multiindices $ \mathbf{r} = (r_1, r_2, \ldots, r_m) \notin \mathcal{J} $ and $ \mathbf{i} = (i_1, i_2, \ldots, i_m) \in \mathcal{J} $, and define $ G(\mathbf{x}) = \prod_{j=1}^m G_j(x_j)^{r_j} $. By Lemma \ref{lemma leibniz formula}, it holds that
\begin{equation}
 G^{(\mathbf{i})}(\mathbf{x}) = \prod_{j=1}^m \left( G_j(x_j)^{r_j} \right)^{(i_j)}.
\label{eq in proof of footprint 1}
\end{equation}
Furthermore, if $ r > i $ and $ F(x) \in \mathbb{F}[x] $, then there exists $ H(x) \in \mathbb{F}[x] $ such that
\begin{equation}
\left( F(x)^r \right)^{(i)} = \sum_{i_1 + i_2 + \cdots + i_r = i} \left( \prod_{j=1}^r F^{(i_j)}(x) \right) = H(x) F(x)^{r-i},
\label{eq in proof of footprint 2}
\end{equation}
again by Lemma \ref{lemma leibniz formula}, since at least $ r-i > 0 $ indices $ i_j $ must be equal to $ 0 $, for each $ (i_1, i_2, \ldots, i_m) \in \mathbb{N}^m $ such that $ \sum_{j=1}^m i_j = i $. Finally, since $ \mathcal{J} $ is decreasing, it holds that $ \mathbf{r} - \mathbf{i} $ has at least one positive coordinate. Hence, combining (\ref{eq in proof of footprint 1}) and (\ref{eq in proof of footprint 2}), we see that $ G^{(\mathbf{i})}(\mathbf{a}) = 0 $, for all $ \mathbf{a} \in \mathcal{V}_{\mathcal{J}}(I) \subseteq S $. This implies that
$$ {\rm Ev}(F(\mathbf{x})) = \mathbf{0}, \quad \forall F(\mathbf{x}) \in I_\mathcal{J}, $$
by the definition of the ideal $ I_\mathcal{J} $ and the set $ \mathcal{V}_{\mathcal{J}}(I) $, and where we consider $ T = \mathcal{V}_{\mathcal{J}}(I) $ in the definition of $ {\rm Ev} $ (Definition \ref{def evaluation map}).

Therefore, the evaluation map $ {\rm Ev} $ can be extended to the quotient ring
$$ {\rm Ev} : \mathbb{F}[\mathbf{x}] / I_\mathcal{J} \longrightarrow \mathbb{F}^{\# \mathcal{V}_{\mathcal{J}}(I) \cdot \# \mathcal{J}}, $$
which is again surjective, since the original evaluation map is surjective by Lemma \ref{lemma hermite}. Since the domain and codomain of this map are $ \mathbb{F} $-linear vector spaces and the map itself is also $ \mathbb{F} $-linear, we conclude that
$$ \# \mathcal{V}_{\mathcal{J}}(I) \cdot \# \mathcal{J} = \dim_\mathbb{F} \left( \mathbb{F}^{\# \mathcal{V}_{\mathcal{J}}(I) \cdot \# \mathcal{J}} \right) \leq \dim_\mathbb{F} \left( \mathbb{F}[\mathbf{x}] / I_\mathcal{J} \right). $$
Finally, Proposition 4 in \cite[Section 5.3]{clo1} says that the monomials in $ \Delta(J) $ constitute a basis of $ \mathbb{F}[\mathbf{x}] / J $, for an ideal $ J \subseteq \mathbb{F}[\mathbf{x}] $. This fact implies that
$$ \dim_\mathbb{F} \left( \mathbb{F}[\mathbf{x}] / I_\mathcal{J} \right) = \# \Delta \left( I_\mathcal{J} \right), $$
and the result follows.
\end{proof}

\subsection{Some particular cases}

In this subsection, we derive some particular cases of Theorem \ref{th footprint bound}. We start with the classical form of the footprint bound (see Proposition 8 in \cite[Section 5.3]{clo1}, and \cite{footprints, hoeholdt}):

\begin{corollary} [\textbf{\cite{clo1, footprints, hoeholdt}}] \label{usual footprint bound}
Setting $ \mathcal{J} = \{ \mathbf{0} \} $, we obtain that
$$ \#  \mathcal{V}(I) \leq \# \Delta \left( I + \left\langle G_1(x_1), G_2(x_2), \ldots, G_m(x_m) \right\rangle \right), $$
where $ \mathcal{V}(I) $ denotes the set of zeros of the ideal $ I $ in $ S $.
\end{corollary}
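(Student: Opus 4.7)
The plan is to specialize Theorem \ref{th footprint bound} directly to the case $ \mathcal{J} = \{ \mathbf{0} \} $ and verify that each of the three quantities appearing in the bound (\ref{general bound}) collapses to the expected one. Since the substantive work is already encapsulated in Theorem \ref{th footprint bound}, the entire argument is a bookkeeping exercise: I would identify $ \# \mathcal{J} $, then $ \mathcal{V}_{\mathcal{J}}(I) $, then $ I_{\mathcal{J}} $, and conclude.

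First, $ \# \mathcal{J} = 1 $, so the left-hand side of (\ref{general bound}) becomes simply $ \# \mathcal{V}_{\{ \mathbf{0} \}}(I) $. Because the only multiindex in $ \mathcal{J} $ is $ \mathbf{0} $ and $ F^{(\mathbf{0})}(\mathbf{x}) = F(\mathbf{x}) $ by Definition \ref{def Hasse derivative}, the set $ \mathcal{V}_{\{ \mathbf{0} \}}(I) $ is precisely the set of points $ \mathbf{a} \in S $ at which every $ F(\mathbf{x}) \in I $ vanishes, which is the $ \mathcal{V}(I) $ of the corollary.

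Next, I would simplify the ideal $ I_{\mathcal{J}} $. The complement $ \mathbb{N}^m \setminus \{ \mathbf{0} \} $ consists of all non-zero multiindices $ \mathbf{r} = (r_1, r_2, \ldots, r_m) $, and for any such $ \mathbf{r} $ there is some coordinate $ j $ with $ r_j \geq 1 $, so the generator $ \prod_{k=1}^m G_k(x_k)^{r_k} $ is divisible by $ G_j(x_j) $ and hence already lies in $ \langle G_1(x_1), G_2(x_2), \ldots, G_m(x_m) \rangle $. Conversely, the standard basis multiindices $ \mathbf{e}_j \notin \mathcal{J} $ contribute exactly the polynomials $ G_j(x_j) $ among the generators of $ I_\mathcal{J} $. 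Therefore $ I_\mathcal{J} = I + \langle G_1(x_1), G_2(x_2), \ldots, G_m(x_m) \rangle $, and substituting everything into (\ref{general bound}) yields the claimed inequality. No genuine obstacle is expected; the only point requiring a moment of care is the divisibility argument showing that the higher-power products $ \prod_k G_k^{r_k} $ add nothing beyond $ \langle G_1, \ldots, G_m \rangle $.
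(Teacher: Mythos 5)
Your proof is correct and takes exactly the approach the paper intends: the corollary is stated without explicit proof as a direct specialization of Theorem \ref{th footprint bound}, and you have supplied precisely the routine verifications (that $\#\mathcal{J}=1$, that the zeroth Hasse derivative recovers $F$ so $\mathcal{V}_{\{\mathbf{0}\}}(I)=\mathcal{V}(I)$, and that the generators $\prod_k G_k(x_k)^{r_k}$ for $\mathbf{r}\neq\mathbf{0}$ collapse to $\langle G_1(x_1),\ldots,G_m(x_m)\rangle$ by the divisibility observation) that justify it. Nothing is missing.
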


The case of zeros of standard multiplicity at least a given positive integer was first obtained as Lemma 2.4 in the extended version of \cite{ruudbound}, and reads as follows:

\begin{corollary}[\textbf{\cite{ruudbound}}] \label{footprint bound with multi}
Given an integer $ r \in \mathbb{N}_+ $, and setting $ \mathcal{J} = \{ (i_1, i_2, \ldots, i_m) \in \mathbb{N}^m : \sum_{j=1}^m i_j < r \} $, we obtain that
$$ \#  \mathcal{V}_{\geq r}(I) \cdot \binom{m + r - 1}{m} \leq \# \Delta \left( I + \left\langle \left\lbrace \prod_{j=1}^m G_j(x_j)^{r_j} : \sum_{j=1}^m r_j = r \right\rbrace \right\rangle \right), $$
where $ \mathcal{V}_{\geq r}(I) $ denotes the set of zeros of multiplicity at least $ r $ of the ideal $ I $ in $ S $.
\end{corollary}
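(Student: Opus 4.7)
The plan is to derive Corollary \ref{footprint bound with multi} as a direct specialization of Theorem \ref{th footprint bound}, so the work consists entirely in identifying each of the three objects $\mathcal{J}$, $\mathcal{V}_\mathcal{J}(I)$, and $I_\mathcal{J}$ in the setting at hand. I would first check that $\mathcal{J} = \{\mathbf{i} \in \mathbb{N}^m : \sum_j i_j < r\}$ is decreasing: this is immediate since $\mathbf{j} \preceq \mathbf{i}$ implies $\sum_j j_k \leq \sum_j i_k < r$. Consequently $\mathcal{J}$ satisfies the hypothesis of Theorem \ref{th footprint bound}, and by definition $\mathcal{V}_\mathcal{J}(I)$ coincides with $\mathcal{V}_{\geq r}(I)$, the set of zeros of standard multiplicity at least $r$.

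Second, I would compute $\#\mathcal{J}$. The number of multiindices in $\mathbb{N}^m$ with $\sum_j i_j = k$ is $\binom{m+k-1}{m-1}$, so summing over $k = 0, 1, \ldots, r-1$ and applying the hockey-stick identity yields $\#\mathcal{J} = \sum_{k=0}^{r-1}\binom{m+k-1}{m-1} = \binom{m+r-1}{m}$, exactly the factor appearing on the left-hand side of the corollary.

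Third, I would reconcile the ideal $I_\mathcal{J}$ of Definition \ref{def main objects in bound} with the ideal appearing in the corollary. By definition, $I_\mathcal{J}$ is $I$ augmented by all products $\prod_j G_j(x_j)^{r_j}$ with $(r_1, \ldots, r_m) \notin \mathcal{J}$, i.e.\ with $\sum_j r_j \geq r$; the corollary uses only the generators with $\sum_j r_j = r$. These two ideals coincide: given any $(r_1, \ldots, r_m)$ with $\sum_j r_j \geq r$, one can choose $(r_1', \ldots, r_m')$ with $0 \leq r_j' \leq r_j$ and $\sum_j r_j' = r$, and then
\[
\prod_{j=1}^m G_j(x_j)^{r_j} = \left(\prod_{j=1}^m G_j(x_j)^{r_j - r_j'}\right) \cdot \prod_{j=1}^m G_j(x_j)^{r_j'},
\]
showing that the larger generating set lies in the ideal generated by the smaller one. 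With these three identifications in place, the bound of Theorem \ref{th footprint bound} becomes verbatim the inequality of Corollary \ref{footprint bound with multi}.

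There is no real obstacle here, since every step is a combinatorial unpacking of definitions; the only mildly nontrivial points are the hockey-stick count of $\#\mathcal{J}$ and the verification that trimming the generating set of $I_\mathcal{J}$ to $\sum_j r_j = r$ does not change the ideal. Both are elementary, so the corollary should follow immediately from Theorem \ref{th footprint bound}.
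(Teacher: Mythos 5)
Your proof is correct and is exactly the specialization the paper intends: the corollary is stated without explicit proof, and you supply the three routine identifications (that $\mathcal{J}$ is decreasing, that $\#\mathcal{J}=\binom{m+r-1}{m}$ via the hockey-stick identity, and that the generators with $\sum_j r_j > r$ are redundant because each is a multiple of one with $\sum_j r_j = r$). Nothing is missing; this matches the paper's intent.
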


Another particular case is obtained when upper bounding each coordinate of the multiindices separately:

\begin{corollary} 
Given a multiindex $ (r_1, r_2, \ldots, r_m) \in \mathbb{N}_+^m $, and setting $ \mathcal{J} = \{ (i_1, i_2, \ldots, i_m) $ $ \in \mathbb{N}^m : i_j < r_j, j = 1,2, \ldots, m \} $, we obtain that
$$ \# \mathcal{V}_{\mathcal{J}}(I) \cdot \prod_{j=1}^m r_j  \leq \# \Delta \left( I + \left\langle G_1(x_1)^{r_1}, G_2(x_2)^{r_2}, \ldots, G_m(x_m)^{r_m} \right\rangle \right). $$
\end{corollary}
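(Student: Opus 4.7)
The plan is to deduce this corollary directly from Theorem \ref{th footprint bound} by identifying the three ingredients (the decreasing set $\mathcal{J}$, its cardinality, and the ideal $I_\mathcal{J}$) with their explicit forms in this box-shaped situation.

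First I would check that $\mathcal{J} = \{\mathbf{i} \in \mathbb{N}^m : i_j < r_j, \; j=1,\ldots,m\}$ is a decreasing set: if $\mathbf{j} \preceq \mathbf{i} \in \mathcal{J}$, then $j_k \leq i_k < r_k$ for every $k$, so $\mathbf{j} \in \mathcal{J}$. This set is a product of intervals, so $\# \mathcal{J} = \prod_{j=1}^m r_j$, which is the multiplicative factor that must appear on the left-hand side.

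The main (and only nontrivial) step is to identify the ideal $I_\mathcal{J}$ from Definition \ref{def main objects in bound} with $I + \langle G_1(x_1)^{r_1}, \ldots, G_m(x_m)^{r_m}\rangle$. For one inclusion, any generator $\prod_{j=1}^m G_j(x_j)^{s_j}$ indexed by $(s_1,\ldots,s_m) \notin \mathcal{J}$ satisfies $s_k \geq r_k$ for some $k$, so $G_k(x_k)^{r_k}$ divides that generator, placing it in $I + \langle G_1^{r_1}, \ldots, G_m^{r_m}\rangle$. For the reverse inclusion, for each $k$ I would consider the specific tuple $\mathbf{s} = (0, \ldots, 0, r_k, 0, \ldots, 0)$ with $r_k$ in the $k$-th entry; since $s_k = r_k \not< r_k$, this tuple is not in $\mathcal{J}$, and the corresponding generator collapses to exactly $G_k(x_k)^{r_k} \in I_\mathcal{J}$.

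With the ideal identification in hand, Theorem \ref{th footprint bound} immediately yields
\[
\# \mathcal{V}_{\mathcal{J}}(I) \cdot \prod_{j=1}^m r_j = \# \mathcal{V}_{\mathcal{J}}(I) \cdot \# \mathcal{J} \leq \# \Delta(I_\mathcal{J}) = \# \Delta\bigl( I + \langle G_1(x_1)^{r_1}, \ldots, G_m(x_m)^{r_m}\rangle \bigr),
\]
which is the claimed inequality. The hard part, such as it is, is recognizing that the seemingly larger generating set in the definition of $I_\mathcal{J}$ collapses to the $m$ pure-power generators $G_k(x_k)^{r_k}$, because the coordinate-wise box structure of $\mathcal{J}$ makes the complement of $\mathcal{J}$ a union of $m$ coordinate half-spaces.
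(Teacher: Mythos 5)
Your proof is correct and is essentially the argument the paper implicitly intends (the paper states this corollary without proof, as an immediate specialization of Theorem \ref{th footprint bound}). The identification of $I_\mathcal{J}$ with $I + \langle G_1(x_1)^{r_1}, \ldots, G_m(x_m)^{r_m}\rangle$ via the two inclusions, together with the count $\#\mathcal{J} = \prod_j r_j$, is exactly the bookkeeping needed.
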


Finally, we obtain a footprint bound for weighted multiplicities:

\begin{corollary} \label{corollary footprint for weighted multi}
Given an integer $ r \in \mathbb{N}_+ $, a vector of positive weights $ \mathbf{w} = (w_1, w_2, \ldots, $ $ w_m) $ $ \in \mathbb{N}_+ $, and setting $ \mathcal{J} = \{ \mathbf{i} \in \mathbb{N}^m : \mid \mathbf{i} \mid_{\mathbf{w}} < r \} $, we obtain that
$$ \#  \mathcal{V}_{\geq r, \mathbf{w}}(I) \cdot {\rm B}(\mathbf{w}; r) \leq \# \Delta \left( I + \left\langle \left\lbrace \prod_{j=1}^m G_j(x_j)^{r_j} : \sum_{j=1}^m r_jw_j \geq r \right\rbrace \right\rangle \right), $$
where $ \mathcal{V}_{\geq r, \mathbf{w}}(I) $ denotes the set of zeros of weighted multiplicity at least $ r $ of the ideal $ I $ in $ S $, and where $ {\rm B}(\mathbf{w}; r) = \# \left\lbrace \mathbf{i} \in \mathbb{N}^m : \mid \mathbf{i} \mid_{\mathbf{w}} < r \right\rbrace $.
\end{corollary}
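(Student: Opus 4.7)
The plan is to verify that Corollary \ref{corollary footprint for weighted multi} is a direct specialization of Theorem \ref{th footprint bound} applied to the particular decreasing set $\mathcal{J} = \{ \mathbf{i} \in \mathbb{N}^m : \mid \mathbf{i} \mid_{\mathbf{w}} < r \}$, together with a dictionary that identifies the three objects appearing in the general bound with those in the statement of the corollary.

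First, I would check the hypotheses of Theorem \ref{th footprint bound}. The set $\mathcal{J}$ is finite because each $w_j \geq 1$, so any $\mathbf{i} \in \mathcal{J}$ satisfies $i_j \leq r-1$ for every $j$. It is decreasing since, whenever $\mathbf{j} \preceq \mathbf{i}$, positivity of the weights gives $\mid \mathbf{j} \mid_\mathbf{w} \leq \mid \mathbf{i} \mid_\mathbf{w}$, so $\mathbf{i} \in \mathcal{J}$ forces $\mathbf{j} \in \mathcal{J}$. Hence Theorem \ref{th footprint bound} applies.

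Next, I would match the three quantities. By definition, $ \# \mathcal{J} = \# \{\mathbf{i} \in \mathbb{N}^m : \mid \mathbf{i} \mid_\mathbf{w} < r\} = {\rm B}(\mathbf{w};r) $. The equality $\mathcal{V}_{\mathcal{J}}(I) = \mathcal{V}_{\geq r, \mathbf{w}}(I)$ is immediate from Definition \ref{def weighted multiplicity}: the condition $F^{(\mathbf{i})}(\mathbf{a}) = 0$ for every $F \in I$ and every $\mathbf{i} \in \mathcal{J}$ is exactly the statement that each $F \in I$ vanishes at $\mathbf{a}$ with weighted multiplicity at least $r$. Finally, the complement of $\mathcal{J}$ inside $\mathbb{N}^m$ is precisely $\{ (r_1,\ldots,r_m) \in \mathbb{N}^m : \sum_{j=1}^m r_j w_j \geq r \}$, so the ideal $I_\mathcal{J}$ of Definition \ref{def main objects in bound} coincides with the ideal appearing in the corollary.

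With these identifications in place, plugging into inequality (\ref{general bound}) of Theorem \ref{th footprint bound} yields the claimed inequality. There is essentially no technical obstacle: the only thing to be careful about is the monotonicity check that makes $\mathcal{J}$ decreasing, which relies crucially on $\mathbf{w} \in \mathbb{N}_+^m$ (strictly positive weights). If any $w_j$ were allowed to vanish, both the decreasing property and finiteness of $\mathcal{J}$ could fail, so this positivity assumption is what makes the reduction to Theorem \ref{th footprint bound} work.
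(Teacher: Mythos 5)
Your proposal is correct and takes the same approach the paper intends: the corollary is stated without a separate proof precisely because it is the direct specialization of Theorem \ref{th footprint bound} to $\mathcal{J} = \{\mathbf{i} : \mid\mathbf{i}\mid_\mathbf{w} < r\}$, and your verification of finiteness, the decreasing property, and the identifications $\#\mathcal{J} = {\rm B}(\mathbf{w};r)$, $\mathcal{V}_\mathcal{J}(I) = \mathcal{V}_{\geq r, \mathbf{w}}(I)$, and $I_\mathcal{J}$ with the displayed ideal is exactly what is needed.
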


To conclude, we give a more explicit form of the bound in the previous corollary by estimating the number $ B(\mathbf{w}; r) $:

\begin{corollary}
Given an integer $ r \in \mathbb{N}_+ $ and a vector of positive weights $ \mathbf{w} = (w_1, w_2, \ldots, $ $ w_m) $ $ \in \mathbb{N}_+ $, it holds that
\begin{equation}
\binom{m + r - 1}{m} \leq w_1 w_2 \cdots w_m B(\mathbf{w}; r).
\label{eq estimate on weighted binomial coeffs}
\end{equation}
In particular, we deduce from the previous corollary that
$$ \#  \mathcal{V}_{\geq r, \mathbf{w}}(I) \cdot \binom{m + r - 1}{m} \leq w_1 w_2 \cdots w_m \cdot \# \Delta \left( I + \left\langle \left\lbrace \prod_{j=1}^m G_j(x_j)^{r_j} : \sum_{j=1}^m r_jw_j \geq r \right\rbrace \right\rangle \right). $$
\end{corollary}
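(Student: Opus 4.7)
The plan is to prove the key inequality (\ref{eq estimate on weighted binomial coeffs}) by a direct counting argument, after which the displayed inequality follows at once from Corollary \ref{corollary footprint for weighted multi} by multiplying both sides by $ w_1 w_2 \cdots w_m $.

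For the counting argument, I would first rewrite both quantities as cardinalities of explicit sets of multiindices: $ \binom{m+r-1}{m} = \# \{ \mathbf{j} \in \mathbb{N}^m : j_1 + j_2 + \cdots + j_m < r \} $ (the number of monomials of standard degree less than $ r $), while $ B(\mathbf{w}; r) = \# \{ \mathbf{i} \in \mathbb{N}^m : \mid \mathbf{i} \mid_\mathbf{w} < r \} $ by definition. The natural map to consider is the coordinate-wise floor with respect to the weights:
$$ \varphi : \{ \mathbf{j} \in \mathbb{N}^m : j_1 + \cdots + j_m < r \} \longrightarrow \{ \mathbf{i} \in \mathbb{N}^m : \mid \mathbf{i} \mid_{\mathbf{w}} < r \}, \quad \varphi(\mathbf{j}) = \left( \lfloor j_1/w_1 \rfloor, \ldots, \lfloor j_m/w_m \rfloor \right). $$

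Next I would check that $ \varphi $ is well-defined and estimate its fibers. Well-definedness uses $ w_k \lfloor j_k/w_k \rfloor \leq j_k $, which gives $ \mid \varphi(\mathbf{j}) \mid_{\mathbf{w}} \leq j_1 + \cdots + j_m < r $. For a fixed $ \mathbf{i} $ in the codomain, the fiber $ \varphi^{-1}(\mathbf{i}) $ is contained in the Cartesian product of the integer intervals $ \{ i_k w_k, i_k w_k + 1, \ldots, (i_k+1)w_k - 1 \} $ for $ k = 1, \ldots, m $, so $ \# \varphi^{-1}(\mathbf{i}) \leq w_1 w_2 \cdots w_m $. Summing over $ \mathbf{i} $ in the codomain gives
$$ \binom{m+r-1}{m} = \sum_{\mid \mathbf{i} \mid_{\mathbf{w}} < r} \# \varphi^{-1}(\mathbf{i}) \leq w_1 w_2 \cdots w_m \cdot B(\mathbf{w}; r), $$
which is (\ref{eq estimate on weighted binomial coeffs}).

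For the second assertion, I would simply multiply the inequality of Corollary \ref{corollary footprint for weighted multi} by $ w_1 w_2 \cdots w_m $ and use (\ref{eq estimate on weighted binomial coeffs}) to replace $ w_1 \cdots w_m B(\mathbf{w}; r) $ by the larger-or-equal quantity $ \binom{m+r-1}{m} $ on the left-hand side, yielding the stated bound. There is no real obstacle here: the only mildly delicate point is confirming that the floor map lands inside the correct codomain, which is immediate from the estimate $ w_k \lfloor j_k/w_k \rfloor \leq j_k $ combined with the hypothesis on $ \mathbf{j} $.
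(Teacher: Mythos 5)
Your argument is correct and is essentially the paper's own proof phrased in the opposite direction: the paper covers $\{\mathbf{j}:\sum j_k<r\}$ by the $w_1\cdots w_m$ translates $T_{\mathbf{j}}(\mathcal{J}(\mathbf{w};r))$ coming from Euclidean division, whereas you instead exhibit the floor map $\varphi$ whose fibers have size at most $w_1\cdots w_m$. Both are the same Euclidean-division bookkeeping, and your derivation of the final displayed inequality from Corollary \ref{corollary footprint for weighted multi} is exactly as intended.
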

\begin{proof}
Define the map $ T_\mathbf{j} : \mathbb{N}^m \longrightarrow \mathbb{N}^m $ by 
$$ T_\mathbf{j}(\mathbf{i}) = (i_1w_1 + j_1, i_2w_2 + j_2, \ldots, i_mw_m + j_m), $$
for all $ \mathbf{i} = (i_1, i_2, \ldots, i_m), \mathbf{j} = (j_1, j_2, \ldots, j_m) \in \mathbb{N}^m $. Now define $ \mathcal{J}(\mathbf{w}; r) = \{ \mathbf{i} \in \mathbb{N}^m : \\ \mid \mathbf{i} \mid_{\mathbf{w}} < r \} $. By the Euclidean division, we see that
$$ \mathcal{J}((1,1, \ldots, 1); r) \subseteq \bigcup_{\mathbf{j} \in \prod_{k=1}^m [0,w_k)} T_\mathbf{j} \left( \mathcal{J} \left( \mathbf{w}; r \right) \right) . $$
By counting elements on both sides of the inclusion, the result follows.
\end{proof}

\subsection{Interpretation of the bound and illustration of the set $ \Delta(I_\mathcal{J}) $} \label{subsec interpretation bound}

In this subsection, we give a graphical description of the footprint $ \Delta(I_\mathcal{J}) $ which will allow us to provide an interpretation of the bound (\ref{general bound}).

First, we observe that by adding the polynomials $ \prod_{i=1}^m G_i(x_i)^{r_i} $, for $ (r_1, r_2, \ldots, r_m) \notin \mathcal{J} $, we are bounding the set of points $ \Delta(I_\mathcal{J}) $ by a certain subset $ \mathcal{J}_S \subseteq \mathbb{N}^m $, which we now define:

\begin{definition}
We define the set
$$ \mathcal{J}_S =  \left\lbrace \mathbf{i} \in \mathbb{N}^m : \mathbf{i} \nsucceq (r_1 \# S_1, r_2 \# S_2, \ldots, r_m \# S_m), \forall (r_1, r_2, \ldots, r_m) \notin \mathcal{J} \right\rbrace . $$
\end{definition}

For clarity, we now give a description of this set by a positive defining condition that follows from the properties of the Euclidean division and the fact that $ \mathcal{J} $ is decreasing.

\begin{lemma} \label{lemma auxiliary applications footprint 0} 
It holds that
\begin{equation*}
\begin{split}
\mathcal{J}_S = \{ & \left( p_1 \# S_1 + t_1, p_2 \# S_2 + t_2, \ldots, p_m \# S_m + t_m \right) \in \mathbb{N}^m : \\
 & (p_1, p_2, \ldots, p_m) \in \mathcal{J}, 0 \leq t_j < \# S_j, \forall j = 1,2, \ldots, m \}.
\end{split}
\end{equation*}
\end{lemma}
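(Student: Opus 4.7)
The proof is a direct double inclusion, the whole content being a combination of the Euclidean division with the fact that the complement of a decreasing set is upward closed under $\preceq$.

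The plan is to first record the elementary but crucial observation that since $\mathcal{J}$ is decreasing, its complement is upward closed: if $(r_1,\ldots,r_m) \notin \mathcal{J}$ and $(p_1,\ldots,p_m) \succeq (r_1,\ldots,r_m)$, then $(p_1,\ldots,p_m) \notin \mathcal{J}$. Next, for any $\mathbf{i} = (i_1,\ldots,i_m) \in \mathbb{N}^m$, the Euclidean division yields unique $p_j \in \mathbb{N}$ and $0 \le t_j < \#S_j$ with $i_j = p_j \#S_j + t_j$, for $j=1,\ldots,m$. All of the argument will just compare the ``quotient vector'' $(p_1,\ldots,p_m)$ with multiindices outside $\mathcal{J}$.

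For the inclusion $\supseteq$, I take $\mathbf{i}$ of the stated form with $(p_1,\ldots,p_m) \in \mathcal{J}$ and argue by contradiction: if $\mathbf{i} \succeq (r_1 \#S_1,\ldots,r_m \#S_m)$ for some $(r_1,\ldots,r_m) \notin \mathcal{J}$, then $p_j \#S_j + t_j \geq r_j \#S_j$ gives $(p_j - r_j)\#S_j \geq -t_j > -\#S_j$, hence $p_j \geq r_j$ for every $j$, so $(p_1,\ldots,p_m) \succeq (r_1,\ldots,r_m)$. By the upward-closedness observation above, this forces $(p_1,\ldots,p_m) \notin \mathcal{J}$, a contradiction. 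Hence $\mathbf{i} \in \mathcal{J}_S$.

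For the reverse inclusion $\subseteq$, I take $\mathbf{i} \in \mathcal{J}_S$ and write $i_j = p_j \#S_j + t_j$ with $0 \le t_j < \#S_j$ by Euclidean division. It suffices to show $(p_1,\ldots,p_m) \in \mathcal{J}$. If not, then $(p_1,\ldots,p_m) \notin \mathcal{J}$ plays the role of the forbidden multiindex in the definition of $\mathcal{J}_S$: since $i_j = p_j \#S_j + t_j \geq p_j \#S_j$ for every $j$, we get $\mathbf{i} \succeq (p_1 \#S_1,\ldots,p_m \#S_m)$, contradicting $\mathbf{i} \in \mathcal{J}_S$. This yields the claimed inclusion, and no step poses any real obstacle; the only thing to be careful about is the strict inequality $t_j < \#S_j$, which is exactly what makes the quotient comparison $p_j \geq r_j$ (rather than $p_j+1 \geq r_j$) work in the first direction.
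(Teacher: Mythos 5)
Your proof is correct and uses exactly the two ingredients the paper cites without spelling out (Euclidean division and the decreasingness of $\mathcal{J}$); the paper states this lemma without proof, and your double-inclusion argument, including the careful use of $t_j < \#S_j$ to get $p_j \geq r_j$ from $(p_j - r_j)\#S_j > -\#S_j$, is precisely the intended filling-in of that gap.
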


We may then state the fact that the footprint is bounded by this set as follows:

\begin{lemma}
It holds that
$$ \Delta(I_\mathcal{J}) \subseteq \{ \mathbf{x}^\mathbf{i} : \mathbf{i} \in \mathcal{J}_S \}. $$
\end{lemma}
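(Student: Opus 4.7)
The plan is to argue by contrapositive: I will show that if $\mathbf{x}^\mathbf{i} \notin \{\mathbf{x}^\mathbf{j} : \mathbf{j} \in \mathcal{J}_S\}$, then $\mathbf{x}^\mathbf{i}$ lies in $\langle \mathrm{LM}(I_\mathcal{J}) \rangle$, hence $\mathbf{x}^\mathbf{i} \notin \Delta(I_\mathcal{J})$. So suppose $\mathbf{i} \notin \mathcal{J}_S$. By definition of $\mathcal{J}_S$, there exists a multiindex $(r_1, r_2, \ldots, r_m) \notin \mathcal{J}$ such that $i_j \geq r_j \# S_j$ for every $j = 1, 2, \ldots, m$.

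The key elementary observation is that for each $j$, the polynomial $G_j(x_j) = \prod_{s \in S_j}(x_j - s)$ is monic of degree $\#S_j$, so $G_j(x_j)^{r_j}$ is monic of degree $r_j \# S_j$ in the single variable $x_j$. Therefore, for \emph{any} monomial ordering $\preceq_m$ on $\mathbb{F}[\mathbf{x}]$, we have $\mathrm{LM}(G_j(x_j)^{r_j}) = x_j^{r_j \# S_j}$. Since monomial orderings are compatible with multiplication, the leading monomial of the product is the product of the leading monomials:
$$ \mathrm{LM}\left(\prod_{j=1}^m G_j(x_j)^{r_j}\right) = \prod_{j=1}^m x_j^{r_j \# S_j}. $$

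Because $(r_1, \ldots, r_m) \notin \mathcal{J}$, the polynomial $\prod_{j=1}^m G_j(x_j)^{r_j}$ belongs to $I_\mathcal{J}$ by the definition of $I_\mathcal{J}$ in Definition \ref{def main objects in bound}. Thus $\prod_{j=1}^m x_j^{r_j \# S_j} \in \mathrm{LM}(I_\mathcal{J})$, and since $i_j \geq r_j \# S_j$ for all $j$, the monomial $\mathbf{x}^\mathbf{i}$ is divisible by $\prod_{j=1}^m x_j^{r_j \# S_j}$. Hence $\mathbf{x}^\mathbf{i} \in \langle \mathrm{LM}(I_\mathcal{J}) \rangle$, which means $\mathbf{x}^\mathbf{i} \notin \Delta(I_\mathcal{J})$, as desired.

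No real obstacle is expected here: the only point requiring a hint of care is the independence of $\mathrm{LM}(\prod_j G_j(x_j)^{r_j})$ from the choice of monomial ordering, which follows from the fact that each $G_j(x_j)^{r_j}$ is a univariate monic polynomial and that monomial orderings respect products.
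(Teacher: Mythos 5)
Your proof is correct, and it fills in precisely the argument the paper leaves implicit: the paper states this lemma without proof, presenting it as an immediate consequence of having added the generators $\prod_{j=1}^m G_j(x_j)^{r_j}$ for $(r_1,\ldots,r_m)\notin\mathcal{J}$ to $I_\mathcal{J}$, which is exactly what your contrapositive argument makes rigorous. The one point worth keeping is the observation you already flag — that $\mathrm{LM}(G_j(x_j)^{r_j}) = x_j^{r_j \# S_j}$ for every monomial ordering because the polynomial is univariate and monic, and leading monomials are multiplicative — which is what makes the lemma hold uniformly over all monomial orderings, as required.
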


Moreover, the set $ \mathcal{J}_S $ can be easily seen as the union of $ \# \mathcal{J} $ $ m $-dimensional rectangles in $ \mathbb{N}^m $ whose sides have lengths $ \# S_1 $, $ \# S_2 $, $ \ldots $, $ \# S_m $, respectively. In particular, we obtain the following:

\begin{lemma}\label{lemma auxiliary applications footprint 1} 
It holds that
\begin{equation}
\# \mathcal{J}_S = \# S \cdot \# \mathcal{J}.
\label{eq computation elements not in}
\end{equation}
\end{lemma}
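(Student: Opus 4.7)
The plan is to derive this cardinality identity directly from the explicit description of $\mathcal{J}_S$ furnished by Lemma \ref{lemma auxiliary applications footprint 0}. That lemma expresses $\mathcal{J}_S$ as the image of the product set $\mathcal{J} \times \prod_{j=1}^m [0, \#S_j)$ under the map
$$ \Phi : (\mathbf{p}, \mathbf{t}) \longmapsto \left( p_1 \#S_1 + t_1, p_2 \#S_2 + t_2, \ldots, p_m \#S_m + t_m \right), $$
where $\mathbf{p} = (p_1, \ldots, p_m) \in \mathcal{J}$ and $\mathbf{t} = (t_1, \ldots, t_m)$ ranges over tuples with $0 \leq t_j < \#S_j$ for each $j$. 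The surjectivity of $\Phi$ onto $\mathcal{J}_S$ is exactly the content of Lemma \ref{lemma auxiliary applications footprint 0}, so only injectivity needs verification.

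For injectivity, I would invoke the uniqueness part of the Euclidean division coordinate-wise: given $\mathbf{i} = (i_1, \ldots, i_m) \in \mathcal{J}_S$, for each $j$ there is a unique pair $(p_j, t_j) \in \mathbb{N} \times [0, \#S_j)$ with $i_j = p_j \#S_j + t_j$, namely $p_j = \lfloor i_j / \#S_j \rfloor$ and $t_j = i_j \bmod \#S_j$. Hence $\Phi$ is a bijection between $\mathcal{J} \times \prod_{j=1}^m [0, \#S_j)$ and $\mathcal{J}_S$.

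Counting the cardinality of the domain, we obtain
$$ \# \mathcal{J}_S \;=\; \# \mathcal{J} \cdot \prod_{j=1}^m \#S_j \;=\; \# \mathcal{J} \cdot \# S, $$
which is the asserted equality. There is no genuine obstacle here: the only subtlety is that one must be sure every $\mathbf{p}$ produced as $(\lfloor i_1 / \#S_1 \rfloor, \ldots, \lfloor i_m / \#S_m \rfloor)$ actually lies in $\mathcal{J}$, but this is guaranteed by membership of $\mathbf{i}$ in $\mathcal{J}_S$ together with the decreasing property of $\mathcal{J}$, which was already used to obtain the description in Lemma \ref{lemma auxiliary applications footprint 0}.
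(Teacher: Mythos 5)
Your proof is correct and matches the paper's (implicit) argument: the paper states just before the lemma that $\mathcal{J}_S$ decomposes as a disjoint union of $\#\mathcal{J}$ rectangles of size $\#S_1 \times \cdots \times \#S_m$, which is precisely your bijection $\Phi$ built from Lemma \ref{lemma auxiliary applications footprint 0} and the uniqueness of Euclidean division. You have simply written out the details that the paper leaves to the reader.
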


The footprint bound (\ref{general bound}) can then be interpreted as follows: Consider the set $ \mathcal{J}_S \subseteq \mathbb{N}^m $. For each $ \mathbf{x}^\mathbf{i} \in {\rm LM}(I_\mathcal{J}) $, remove from $ \mathcal{J}_S $ all points $ \mathbf{j} $ such that $ \mathbf{i} \preceq \mathbf{j} $. The remaining points correspond to the multiindices in $ \Delta(I_\mathcal{J}) $, and thus there are $ \# \Delta(I_\mathcal{J}) $ of them.

In particular, if $ F_1(\mathbf{x}), F_2(\mathbf{x}), \ldots, F_t(\mathbf{x}) \in I $, then we may only remove the points corresponding to $ {\rm LM}(F_i(\mathbf{x})) $, for $ i = 1,2, \ldots, t $, and we obtain an upper bound on $ \# \Delta(I_\mathcal{J}) $. 

\begin{example}
Let us assume now that $ m = 2 $, $ \# S_1 = \# S_2 = 2 $, and $ \mathcal{J} = \{ (0,1), $ $ (1,1),$ $ (2,1),$ $ (0,0), $ $ (1,0), $ $ (2,0), $ $ (3,0), $ $ (4,0), $ $ (5,0) \} $.

In Figure \ref{fig interpretation}, top image, we represent by black dots the monomials whose multiindices belong to $ \mathcal{J}_S $, among which medium-sized dots correspond to multiindices that belong to $ \mathcal{J} $ when each coordinate is multiplied by $ 2 $. Blank dots correspond to multiindices that do not belong to $ \mathcal{J}_S $, and the largest ones correspond to minimal multiindices that do not belong to $ \mathcal{J}_S $.

In Figure \ref{fig interpretation}, bottom image, we represent in the same way the set $ \Delta(I_\mathcal{J}) $, whenever $ \langle {\rm LM}(I_\mathcal{J}) \rangle $ is generated by $ x_1^2x_2^3 $, $ x_1^8x_2 $, and the leading monomials of $ G_1(x_1)^{r_1} G_2(x_2)^{r_2} $, for minimal $ (r_1,r_2) \notin \mathcal{J} $, which in this case are $ x_2^4 $, $ x_1^6x_2^2 $ and $ x_1^{12} $.

In conclusion, the bound (\ref{general bound}) says that the number of zeros in $ S $ of $ I $ of multiplicity at least $ \mathcal{J} $ is at most $ 3 $.
\end{example}

\begin{figure}[h]

\begin{center}

\begin{tabular}{c@{\extracolsep{1cm}}c}
	\begin{tikzpicture}[line width=1pt, scale=1]
		\tikzstyle{every node}=[inner sep=0pt, minimum width=4.5pt]
		
		\draw (0,0) node (O) [draw, circle, fill=black] {};
		\draw (0,5) node (topy) {};
		\draw (14,0) node (topx) {};
		\draw[->] (O) -- (topy);
		\draw[->] (O) -- (topx);
		
		\foreach \p in {(0,4),(6,2),(12,0)}{\draw \p node [draw, scale=2, circle, fill=white] {};}		
		
		\foreach \p in {(0,2),(2,2),(2,0),(4,2),(4,0),(6,0),(8,0),(10,0)}{\draw \p node [draw, circle, fill=black] {};}
		\foreach \p in {(2,4),(4,4),(6,4),(8,2),(10,2),(12,2)}{\draw \p node [draw, circle, fill=white] {};}
		
		\foreach \p in {(0,3),(0,1),(1,3),(1,2),(1,1),(1,0),(2,3),(2,1),(3,3),(3,2),(3,1),(3,0),(4,3),(4,1),(5,3),(5,2),(5,1),(5,0),(6,1),(7,1),(7,0),(8,1),(9,1),(9,0),(10,1),(11,1),(11,0)}{\draw \p node [draw, scale=0.5, circle, fill=black] {};}
		\foreach \p in {(1,4),(3,4),(5,4),(6,3),(7,2),(9,2),(11,2),(12,1)}{\draw \p node [draw, scale=0.7, circle, fill=white] {};}
		
		\draw (-1,3.5) -- (5.5,3.5) -- (5.5,1.5) -- (11.5,1.5) -- (11.5,-1);
		
		\draw (-0.7,-0.7) node {$ 1 $};
		\draw (-0.7,1) node {$ x_2 $};
		\draw (-0.7,2) node {$ x_2^2 $};
		\draw (-0.7,3) node {$ x_2^3 $};
		\draw (-0.7,4) node {$ x_2^4 $};
		
		\draw (1,-0.7) node {$ x_1 $};
		\draw (2,-0.7) node {$ x_1^2 $};
		\draw (3,-0.7) node {$ x_1^3 $};
		\draw (4,-0.7) node {$ x_1^4 $};
		\draw (5,-0.7) node {$ x_1^5 $};
		\draw (6,-0.7) node {$ x_1^6 $};
		\draw (7,-0.7) node {$ x_1^7 $};
		\draw (8,-0.7) node {$ x_1^8 $};
		\draw (9,-0.7) node {$ x_1^9 $};
		\draw (10,-0.7) node {$ x_1^{10} $};
		\draw (11,-0.7) node {$ x_1^{11} $};
		\draw (12,-0.7) node {$ x_1^{12} $};
		
		\draw (10,4) node {The set $ \mathcal{J}_S $};
		
	\end{tikzpicture}
\end{tabular}

\begin{tabular}{c@{\extracolsep{1cm}}c}
	\begin{tikzpicture}[line width=1pt, scale=1]
		\tikzstyle{every node}=[inner sep=0pt, minimum width=4.5pt]
		
		\draw (0,0) node (O) [draw, circle, fill=black] {};
		\draw (0,5) node (topy) {};
		\draw (14,0) node (topx) {};
		\draw[->] (O) -- (topy);
		\draw[->] (O) -- (topx);
		
		\foreach \p in {(0,4),(6,2),(12,0),(2,3),(8,1)}{\draw \p node [draw, scale=2, circle, fill=white] {};}		
		
		\foreach \p in {(0,2),(2,2),(2,0),(4,2),(4,0),(6,0),(8,0),(10,0)}{\draw \p node [draw, circle, fill=black] {};}
		\foreach \p in {(2,4),(4,4),(6,4),(8,2),(10,2),(12,2)}{\draw \p node [draw, circle, fill=white] {};}
		
		\foreach \p in {(0,3),(0,1),(1,3),(1,2),(1,1),(1,0),(2,1),(3,2),(3,1),(3,0),(4,1),(5,2),(5,1),(5,0),(6,1),(7,1),(7,0),(9,0),(11,0)}{\draw \p node [draw, scale=0.5, circle, fill=black] {};}
		\foreach \p in {(1,4),(3,3),(3,4),(4,3),(5,3),(5,4),(6,3),(7,2),(9,1),(9,2),(10,1),(11,1),(11,2),(12,1)}{\draw \p node [draw, scale=0.7, circle, fill=white] {};}
		
		\draw (-1,3.5) -- (1.5,3.5) -- (1.5,2.5) -- (5.5,2.5) -- (5.5,1.5) -- (7.5,1.5) -- (7.5,0.5) -- (11.5,0.5) -- (11.5,-1);
		\draw [dashed] (1.5,3.5) -- (5.5,3.5) -- (5.5,2.5); 
		\draw [dashed] (7.5,1.5) -- (11.5,1.5) -- (11.5,0.5); 		
		
		\draw (-0.7,-0.7) node {$ 1 $};
		\draw (-0.7,1) node {$ x_2 $};
		\draw (-0.7,2) node {$ x_2^2 $};
		\draw (-0.7,3) node {$ x_2^3 $};
		\draw (-0.7,4) node {$ x_2^4 $};
		
		\draw (1,-0.7) node {$ x_1 $};
		\draw (2,-0.7) node {$ x_1^2 $};
		\draw (3,-0.7) node {$ x_1^3 $};
		\draw (4,-0.7) node {$ x_1^4 $};
		\draw (5,-0.7) node {$ x_1^5 $};
		\draw (6,-0.7) node {$ x_1^6 $};
		\draw (7,-0.7) node {$ x_1^7 $};
		\draw (8,-0.7) node {$ x_1^8 $};
		\draw (9,-0.7) node {$ x_1^9 $};
		\draw (10,-0.7) node {$ x_1^{10} $};
		\draw (11,-0.7) node {$ x_1^{11} $};
		\draw (12,-0.7) node {$ x_1^{12} $};
		
		\draw (10,4) node {The set $ \Delta(I_\mathcal{J}) $};
		
	\end{tikzpicture}
\end{tabular}

\end{center}
\caption{Illustration of the sets $ \mathcal{J}_S $ and $ \Delta(I_\mathcal{J}) $ in $ \mathbb{N}^m $.}
\label{fig interpretation}
\end{figure}
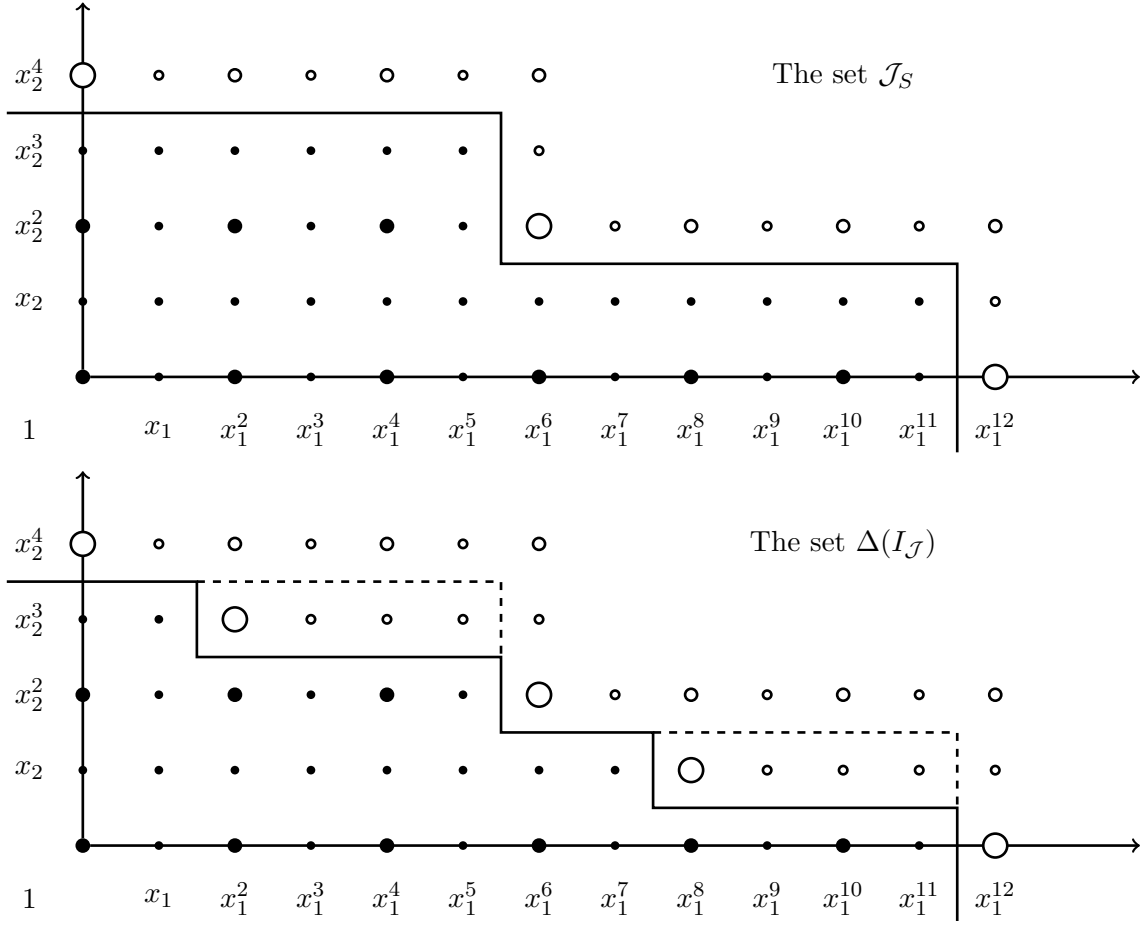

As a consequence of this interpretation, we may deduce the following useful fact:

\begin{lemma} \label{lemma auxiliary applications footprint 2}
Assume that the finite set $ \mathcal{J} \subseteq \mathbb{N}^m $ is decreasing and $ \mathbf{x}^\mathbf{i} = {\rm LM}(F(\mathbf{x})) $ with respect to some monomial ordering, for some polynomial $ F(\mathbf{x}) \in \mathbb{F}[\mathbf{x}] $. If $ \mathbf{i} \in \mathcal{J}_S $, then it holds that
\begin{equation}
\# \Delta(\langle F(\mathbf{x}) \rangle_\mathcal{J}) < \# S \cdot \# \mathcal{J}.
\label{eq suficient condigion not max footprint}
\end{equation}
\end{lemma}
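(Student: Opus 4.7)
The plan is to apply the graphical interpretation developed in Subsection \ref{subsec interpretation bound}. By Definition \ref{def main objects in bound}, we have
$$ \langle F(\mathbf{x}) \rangle_\mathcal{J} = \langle F(\mathbf{x}) \rangle + \left\langle \left\lbrace \prod_{j=1}^m G_j(x_j)^{r_j} : (r_1, r_2, \ldots, r_m) \notin \mathcal{J} \right\rbrace \right\rangle, $$
and I would first invoke the containment stated just before the example, namely
$$ \Delta(\langle F(\mathbf{x}) \rangle_\mathcal{J}) \subseteq \{ \mathbf{x}^\mathbf{j} : \mathbf{j} \in \mathcal{J}_S \}. $$
This is the key fact on which everything rests, and it holds because each $ G_j(x_j)^{r_j} $ is a univariate monic polynomial in $ x_j $, so $ \prod_{j=1}^m G_j(x_j)^{r_j} $ has leading monomial $ \prod_{j=1}^m x_j^{r_j \# S_j} $ with respect to any monomial ordering; thus any $ \mathbf{x}^\mathbf{j} \in \Delta(\langle F(\mathbf{x})\rangle_\mathcal{J}) $ must satisfy $ \mathbf{j} \nsucceq (r_1 \# S_1, \ldots, r_m \# S_m) $ for every $ (r_1, \ldots, r_m) \notin \mathcal{J} $, which is precisely the defining condition of $ \mathcal{J}_S $.

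Next, I would observe that the containment above is strict under the hypothesis of the lemma. Indeed, $ F(\mathbf{x}) \in \langle F(\mathbf{x}) \rangle_\mathcal{J} $ implies $ \mathbf{x}^\mathbf{i} = {\rm LM}(F(\mathbf{x})) \in \langle {\rm LM}(\langle F(\mathbf{x}) \rangle_\mathcal{J}) \rangle $, so $ \mathbf{x}^\mathbf{i} \notin \Delta(\langle F(\mathbf{x}) \rangle_\mathcal{J}) $. On the other hand, the assumption $ \mathbf{i} \in \mathcal{J}_S $ gives $ \mathbf{x}^\mathbf{i} \in \{ \mathbf{x}^\mathbf{j} : \mathbf{j} \in \mathcal{J}_S \} $. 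Hence at least one element of $ \{ \mathbf{x}^\mathbf{j} : \mathbf{j} \in \mathcal{J}_S \} $ is missing from $ \Delta(\langle F(\mathbf{x}) \rangle_\mathcal{J}) $, so
$$ \# \Delta(\langle F(\mathbf{x}) \rangle_\mathcal{J}) \leq \# \mathcal{J}_S - 1. $$
Combining this with the cardinality formula $ \# \mathcal{J}_S = \# S \cdot \# \mathcal{J} $ from Lemma \ref{lemma auxiliary applications footprint 1} yields the desired strict inequality (\ref{eq suficient condigion not max footprint}).

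There is no real obstacle here: the statement is essentially a formal consequence of the interpretation of the footprint described in the preceding subsection. The only points requiring a line of justification are the identification of the leading monomial of $ \prod_j G_j(x_j)^{r_j} $ (which is ordering-independent because the $ G_j $ are univariate in distinct variables) and the observation that the extra element $ \mathbf{x}^\mathbf{i} $ is indeed removed from $ \mathcal{J}_S $ by the generator $ F(\mathbf{x}) $ of $ \langle F(\mathbf{x}) \rangle_\mathcal{J} $.
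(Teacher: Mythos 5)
Your proof is correct and follows essentially the same route the paper intends: the containment $\Delta(\langle F(\mathbf{x})\rangle_\mathcal{J}) \subseteq \{\mathbf{x}^\mathbf{j} : \mathbf{j}\in\mathcal{J}_S\}$ from the preceding lemma, strictness because $\mathbf{x}^\mathbf{i} = \mathrm{LM}(F(\mathbf{x}))$ lies in $\mathcal{J}_S$ by hypothesis yet is excluded from the footprint, and the count $\#\mathcal{J}_S = \#S\cdot\#\mathcal{J}$ from Lemma~\ref{lemma auxiliary applications footprint 1}. The paper leaves this as an informal consequence of the graphical interpretation in Subsection~\ref{subsec interpretation bound}; you have merely written out the same one-element-removal argument explicitly, including the (correct) observation that the leading monomial of $\prod_j G_j(x_j)^{r_j}$ is $\prod_j x_j^{r_j\#S_j}$ for any monomial ordering since each factor is monic and univariate in a distinct variable.
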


We conclude with a simple description of $ \mathcal{J}_S $ in the cases of multiindices bounded by weighted orders and multiindices bounded on each coordinate separately, which follow by straightforward calculations:

\begin{remark}
Given a vector of positive weights $ \mathbf{w} = (w_1, w_2, \ldots, w_m) \in \mathbb{N}_+^m $, a positive integer $ r \in \mathbb{N}_+ $, and $ \mathcal{J} = \{ \mathbf{r} \in \mathbb{N}^m : \mid \mathbf{r} \mid_\mathbf{w} < r \} $, it holds that
$$ \mathcal{J}_S = \left\lbrace (i_1, i_2, \ldots, i_m) \in \mathbb{N}^m : \sum_{j=1}^m \left\lfloor \frac{i_j}{\# S_j} \right\rfloor w_j < r \right\rbrace . $$

On the other hand, given $ (r_1, r_2, \ldots, r_m) \in \mathbb{N}_+^m $ and $ \mathcal{J} = \left\lbrace (i_1, i_2, \ldots, i_m) \in \mathbb{N}^m : \right. $ $ i_j < r_j, $ $ \left. j = 1,2, \ldots, m \right\rbrace $, it holds that
$$ \mathcal{J}_S = \left\lbrace (i_1, i_2, \ldots, i_m) \in \mathbb{N}^m : i_j < r_j \# S_j, j = 1,2, \ldots, m \right\rbrace . $$
\end{remark}

\subsection{Sharpness and equality conditions}

To conclude the section, we study the sharpness of the bound (\ref{general bound}). We will give sufficient and necessary conditions on the ideal $ I $ for (\ref{general bound}) to be an equality, and we will see that (\ref{general bound}) is the sharpest bound that can be obtained as a strictly increasing function of the size of the footprint $ \Delta(I_\mathcal{J}) $.

We start by defining the ideal associated to a set of points and a set of multiindices.

\begin{definition} \label{def ideal of a set with multi}
Given $ \mathcal{V} \subseteq \mathbb{F}^m $, we define
$$ I(\mathcal{V}; \mathcal{J}) = \left\lbrace F(\mathbf{x}) \in \mathbb{F}[\mathbf{x}] : F^{(\mathbf{i})}(\mathbf{a}) = 0, \forall \mathbf{a} \in \mathcal{V}, \forall \mathbf{i} \in \mathcal{J} \right\rbrace . $$
\end{definition}

In the next proposition we show that this set is indeed an ideal and gather other properties similar to those of ideals and algebraic sets in algebraic geometry.

\begin{proposition} \label{prop properties of ideals of zeros}
Given a set of points $ \mathcal{V} \subseteq \mathbb{F}^m $, the set $ I(\mathcal{V}; \mathcal{J}) $ in the previous definition is an ideal in $ \mathbb{F}[\mathbf{x}] $. Moreover, the following properties hold:
\begin{enumerate}
\item
$ I \subseteq I(\mathcal{V}_\mathcal{J}(I); \mathcal{J}) $.
\item
$ \mathcal{V} \subseteq \mathcal{V}_\mathcal{J}(I(\mathcal{V}; \mathcal{J})) $.
\item
$ I = I(\mathcal{V}_\mathcal{J}(I); \mathcal{J}) $ if, and only if, $ I = I(\mathcal{W}; \mathcal{J}) $ for some set $ \mathcal{W} \subseteq \mathbb{F}^m $.
\item
$ \mathcal{V} = \mathcal{V}_\mathcal{J}(I(\mathcal{V}; \mathcal{J})) $ if, and only if, $ \mathcal{V} = \mathcal{V}_\mathcal{J}(K) $, for some ideal $ K \subseteq \mathbb{F}[\mathbf{x}] $.
\end{enumerate}
\end{proposition}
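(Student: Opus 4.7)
The proposition describes a Galois-connection-style correspondence between subsets of $ \mathbb{F}^m $ and ideals of $ \mathbb{F}[\mathbf{x}] $, mediated by the two operations $ I(\cdot;\mathcal{J}) $ and $ \mathcal{V}_\mathcal{J}(\cdot) $. The plan is to first verify that $ I(\mathcal{V};\mathcal{J}) $ is an ideal, then establish the two inclusions (2) and (3) directly from the definitions, and finally derive (4) and (5) from the general abstract nonsense of Galois connections (closure operators).

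For the ideal claim in (1), stability under addition and under scalar multiplication is immediate, since the assignment $ F(\mathbf{x}) \mapsto F^{(\mathbf{i})}(\mathbf{x}) $ is $ \mathbb{F} $-linear for every $ \mathbf{i} $. The only non-trivial point is closure under multiplication by an arbitrary $ G(\mathbf{x}) \in \mathbb{F}[\mathbf{x}] $. For this I will invoke the Leibniz formula (Lemma \ref{lemma leibniz formula}):
$$ (GF)^{(\mathbf{i})}(\mathbf{x}) = \sum_{\mathbf{j}+\mathbf{k}=\mathbf{i}} G^{(\mathbf{j})}(\mathbf{x}) F^{(\mathbf{k})}(\mathbf{x}). $$
For any $ \mathbf{i} \in \mathcal{J} $ and any splitting $ \mathbf{j}+\mathbf{k}=\mathbf{i} $, we have $ \mathbf{k}\preceq \mathbf{i} $, and since $ \mathcal{J} $ is assumed decreasing this forces $ \mathbf{k} \in \mathcal{J} $. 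For $ \mathbf{a} \in \mathcal{V} $ we then get $ F^{(\mathbf{k})}(\mathbf{a}) = 0 $, hence every summand vanishes at $ \mathbf{a} $. This is precisely the step that uses the standing hypothesis that $ \mathcal{J} $ is decreasing, and it is the main (indeed essentially the only) technical obstacle in the whole proof.

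Properties (2) and (3) are then direct translations of definitions: if $ F \in I $ then $ F^{(\mathbf{i})}(\mathbf{a}) = 0 $ for all $ \mathbf{a} \in \mathcal{V}_\mathcal{J}(I) $ and $ \mathbf{i}\in\mathcal{J} $ by the very meaning of $ \mathcal{V}_\mathcal{J}(I) $, giving (2); and symmetrically for (3). It is also immediate from the definitions that both operations are inclusion-reversing: $ \mathcal{V}\subseteq\mathcal{V}' $ implies $ I(\mathcal{V}';\mathcal{J}) \subseteq I(\mathcal{V};\mathcal{J}) $, and $ I\subseteq I' $ implies $ \mathcal{V}_\mathcal{J}(I')\subseteq\mathcal{V}_\mathcal{J}(I) $. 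I will record these two monotonicity facts as a brief remark before the final step.

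For (4) and (5) I will use the standard Galois-connection argument. The forward direction in each is trivial: take $ \mathcal{W}=\mathcal{V}_\mathcal{J}(I) $ or $ K=I(\mathcal{V};\mathcal{J}) $. For the converse of (4), assume $ I = I(\mathcal{W};\mathcal{J}) $. By (3) applied to $ \mathcal{W} $, $ \mathcal{W}\subseteq \mathcal{V}_\mathcal{J}(I(\mathcal{W};\mathcal{J})) = \mathcal{V}_\mathcal{J}(I) $; applying the inclusion-reversing map $ I(\cdot;\mathcal{J}) $ and using (2) gives
$$ I(\mathcal{V}_\mathcal{J}(I);\mathcal{J}) \subseteq I(\mathcal{W};\mathcal{J}) = I \subseteq I(\mathcal{V}_\mathcal{J}(I);\mathcal{J}), $$
which yields equality. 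The converse of (5) is completely analogous, swapping the roles of the two operations and using (2) applied to $ K $ together with (3). This finishes the proof.
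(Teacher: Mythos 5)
Your proposal is correct and follows essentially the same route as the paper: the ideal property comes from the Leibniz formula combined with the standing hypothesis that $\mathcal{J}$ is decreasing, and items 1--4 are the standard Galois-connection (ideal--variety) correspondence arguments that the paper dismisses as "as in classical algebraic geometry." The paper gives no further detail, so your filled-in proof is a faithful expansion of the same idea.
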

\begin{proof}
The fact that $ I(\mathcal{V}; \mathcal{J}) $ is an ideal follows from the Leibniz formula (Lemma \ref{lemma leibniz formula}) and the fact that $ \mathcal{J} $ is decreasing. The properties in items 1, 2, 3, and 4 follow as in classical algebraic geometry and are left to the reader. 
\end{proof}

The following is the main result of the subsection:

\begin{theorem} \label{th equality conditions}
Fixing a monomial ordering, the bound (\ref{general bound}) is an equality if, and only if, 
\begin{equation}
I_\mathcal{J} = I \left( \mathcal{V}_\mathcal{J}(I); \mathcal{J} \right).
\end{equation}
In particular, for any choice of decreasing finite set $ \mathcal{J} \subseteq \mathbb{N}^m $ and a finite set of points $ \mathcal{V} \subseteq \mathbb{F}^m $, there exists an ideal, $ I = I(\mathcal{V}; \mathcal{J}) $, satisfying equality in (\ref{general bound}).
\end{theorem}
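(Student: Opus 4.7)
The approach is to recast equality in (\ref{general bound}) as a dimension count for the induced evaluation map and then to identify its kernel with a quotient of ideals.

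In the proof of Theorem \ref{th footprint bound}, the evaluation map descends to a surjective $\mathbb{F}$-linear map $\overline{\mathrm{Ev}}: \mathbb{F}[\mathbf{x}]/I_\mathcal{J} \longrightarrow \mathbb{F}^{\#\mathcal{V}_\mathcal{J}(I) \cdot \#\mathcal{J}}$, whose domain has $\mathbb{F}$-dimension $\#\Delta(I_\mathcal{J})$ by Proposition 4 of \cite[Section 5.3]{clo1}. Hence equality in (\ref{general bound}) is equivalent to $\overline{\mathrm{Ev}}$ being an isomorphism, i.e., $\ker(\overline{\mathrm{Ev}}) = 0$. Now the kernel of the unquotiented map $\mathrm{Ev}: \mathbb{F}[\mathbf{x}] \to \mathbb{F}^{\#\mathcal{V}_\mathcal{J}(I)\cdot\#\mathcal{J}}$ (taken with $T = \mathcal{V}_\mathcal{J}(I)$) is by Definition \ref{def ideal of a set with multi} exactly $I(\mathcal{V}_\mathcal{J}(I); \mathcal{J})$, and the containment $I_\mathcal{J} \subseteq I(\mathcal{V}_\mathcal{J}(I); \mathcal{J})$ is already established in the proof of Theorem \ref{th footprint bound}. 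Therefore $\ker(\overline{\mathrm{Ev}}) = I(\mathcal{V}_\mathcal{J}(I); \mathcal{J})/I_\mathcal{J}$, which vanishes if and only if $I_\mathcal{J} = I(\mathcal{V}_\mathcal{J}(I); \mathcal{J})$, giving the asserted equivalence.

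For the \emph{in particular} claim, I would take $I = I(\mathcal{V}; \mathcal{J})$ with $\mathcal{V} \subseteq S$ (if $\mathcal{V} \not\subseteq S$, replacing $\mathcal{V}$ by $\mathcal{V} \cap S$ leaves $I(\mathcal{V}; \mathcal{J})$ unchanged as far as $\mathcal{V}_\mathcal{J}$ is concerned) and verify two things. First, $I_\mathcal{J} = I$: each extra generator $\prod_{j=1}^m G_j(x_j)^{r_j}$ with $\mathbf{r} \notin \mathcal{J}$ already lies in $I(\mathcal{V}; \mathcal{J})$ by the very Leibniz-and-divisibility computation used in the proof of Theorem \ref{th footprint bound}, applied pointwise at each $\mathbf{a} \in \mathcal{V} \subseteq S$. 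Second, $\mathcal{V}_\mathcal{J}(I) = \mathcal{V}$: the inclusion $\mathcal{V} \subseteq \mathcal{V}_\mathcal{J}(I)$ is immediate from the definition of $I$, and for the reverse inclusion, given $\mathbf{a} \in S \setminus \mathcal{V}$, Lemma \ref{lemma hermite} applied with $T = \mathcal{V} \cup \{\mathbf{a}\}$ yields a polynomial $F \in \mathbb{F}[\mathbf{x}]$ with $F^{(\mathbf{i})}(\mathbf{b}) = 0$ for all $\mathbf{b} \in \mathcal{V}$, $\mathbf{i} \in \mathcal{J}$ (so $F \in I$) while $F(\mathbf{a}) \neq 0$, showing $\mathbf{a} \notin \mathcal{V}_\mathcal{J}(I)$. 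Combining these, $I_\mathcal{J} = I = I(\mathcal{V}; \mathcal{J}) = I(\mathcal{V}_\mathcal{J}(I); \mathcal{J})$, and the first part of the theorem delivers equality in (\ref{general bound}).

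The only genuinely nontrivial step is the Hermite-interpolation argument that $\mathcal{V}_\mathcal{J}(I(\mathcal{V}; \mathcal{J})) = \mathcal{V}$; the rest is surjectivity-plus-dimension repackaged from Theorem \ref{th footprint bound}, together with the set-theoretic identification of kernels.
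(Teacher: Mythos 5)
Your argument is correct and follows the same route as the paper: both recast equality in (\ref{general bound}) as the induced evaluation map on $\mathbb{F}[\mathbf{x}]/I_\mathcal{J}$ being an isomorphism (using that $\dim_\mathbb{F}(\mathbb{F}[\mathbf{x}]/I_\mathcal{J}) = \#\Delta(I_\mathcal{J})$ and that the full map $\mathrm{Ev}$ has kernel $I(\mathcal{V}_\mathcal{J}(I);\mathcal{J})$), and both handle the \emph{in particular} claim by observing $I_\mathcal{J} = I$ and $I = I(\mathcal{V}_\mathcal{J}(I);\mathcal{J})$ when $I = I(\mathcal{V};\mathcal{J})$. The one difference is cosmetic: where the paper simply cites Proposition \ref{prop properties of ideals of zeros} for $I = I(\mathcal{V}_\mathcal{J}(I);\mathcal{J})$, you supply the Hermite-interpolation argument directly, which is a self-contained and arguably cleaner presentation; just be aware that your parenthetical reduction to $\mathcal{V} \subseteq S$ only shows $\mathcal{V}_\mathcal{J}(I(\mathcal{V};\mathcal{J})) = \mathcal{V}_\mathcal{J}(I(\mathcal{V}\cap S;\mathcal{J}))$ and does not by itself show $I_\mathcal{J}$ is unchanged, though the paper's own treatment of $\mathcal{V} \not\subseteq S$ is equally implicit.
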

\begin{proof}
With notation as in the proof of Theorem \ref{th footprint bound}, the evaluation map $ {\rm Ev} : \mathbb{F}[\mathbf{x}] \longrightarrow \mathbb{F}^{\# \mathcal{V}_\mathcal{J}(I) \cdot \# \mathcal{J}} $ from Definition \ref{def evaluation map} is $ \mathbb{F} $-linear and surjective by Lemma \ref{lemma hermite}. By definition, its kernel is
$$ {\rm Ker}({\rm Ev}) = I(\mathcal{V}_\mathcal{J}(I); \mathcal{J}). $$

On the other hand, we saw in the proof of Theorem \ref{th footprint bound} that $ I_\mathcal{J} \subseteq {\rm Ker}({\rm Ev}) $. This means that the evaluation map 
$$ {\rm Ev} : \mathbb{F}[\mathbf{x}] / I_\mathcal{J} \longrightarrow \mathbb{F}^{\# \mathcal{V}_\mathcal{J}(I) \cdot \# \mathcal{J}} $$
is an isomorphism if, and only if, $ I_\mathcal{J} = I(\mathcal{V}_\mathcal{J}(I); \mathcal{J}) $. 

Finally, the fact that this evaluation map is an isomorphism is equivalent to (\ref{general bound}) being an equality, by the proof of Theorem \ref{th footprint bound}. Together with Proposition \ref{prop properties of ideals of zeros} and the fact that $ I = I_\mathcal{J} $ if $ I = I(\mathcal{V}; \mathcal{J}) $ by the proof of Theorem \ref{th footprint bound}, the theorem follows.
\end{proof}

Thanks to this result, we may establish that the bound (\ref{general bound}) is the sharpest bound that is a strictly increasing function of the size of the footprint $ \Delta(I_\mathcal{J}) $, in the following sense: If equality holds for such a bound, then it holds in (\ref{general bound}).

\begin{corollary}
Let $ f : \mathbb{N} \longrightarrow \mathbb{R} $ be a strictly increasing function, and assume that
\begin{equation}
\# \mathcal{V}_\mathcal{J}(I) \leq f(\# \Delta (I_\mathcal{J})),
\label{eq alternative footprint bounds}
\end{equation}
for all ideals $ I \subseteq \mathbb{F}[\mathbf{x}] $. If equality holds in (\ref{eq alternative footprint bounds}) for a given ideal $ I \subseteq \mathbb{F}[\mathbf{x}] $, then equality holds in (\ref{general bound}) for such ideal.
\end{corollary}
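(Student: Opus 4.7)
The plan is to leverage Theorem \ref{th equality conditions}, which guarantees that the ideal $ I' := I(\mathcal{V}_\mathcal{J}(I); \mathcal{J}) $ attains equality in (\ref{general bound}), and then squeeze $ I $ between $ I' $ and the hypothesis using the strict monotonicity of $ f $.

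First I would set $ \mathcal{V} = \mathcal{V}_\mathcal{J}(I) $ and define the auxiliary ideal $ I' = I(\mathcal{V}; \mathcal{J}) $. By item 1 of Proposition \ref{prop properties of ideals of zeros}, $ I \subseteq I' $, and therefore $ I_\mathcal{J} \subseteq I'_\mathcal{J} $. Since for any inclusion of ideals $ J_1 \subseteq J_2 $ we have $ \langle {\rm LM}(J_1) \rangle \subseteq \langle {\rm LM}(J_2) \rangle $, and hence $ \Delta(J_1) \supseteq \Delta(J_2) $, this gives the size inequality $ \# \Delta(I_\mathcal{J}) \geq \# \Delta(I'_\mathcal{J}) $. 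Next, using items 1--2 of Proposition \ref{prop properties of ideals of zeros} together with the fact that $ \mathcal{V}_\mathcal{J} $ is inclusion-reversing, I would verify that $ \mathcal{V}_\mathcal{J}(I') = \mathcal{V} = \mathcal{V}_\mathcal{J}(I) $.

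Now I would invoke Theorem \ref{th equality conditions} to conclude that equality holds in (\ref{general bound}) for $ I' $, i.e.\
\[
\# \mathcal{V}_\mathcal{J}(I') \cdot \# \mathcal{J} = \# \Delta(I'_\mathcal{J}).
\]
Combining the assumption that equality holds in (\ref{eq alternative footprint bounds}) for $ I $, the same bound applied to $ I' $, and the monotonicity of $ f $ yields the chain
\[
\# \mathcal{V}_\mathcal{J}(I) = f(\# \Delta(I_\mathcal{J})) \geq f(\# \Delta(I'_\mathcal{J})) \geq \# \mathcal{V}_\mathcal{J}(I') = \# \mathcal{V}_\mathcal{J}(I),
\]
so all inequalities collapse to equalities. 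By \emph{strict} monotonicity (hence injectivity) of $ f $, this forces $ \# \Delta(I_\mathcal{J}) = \# \Delta(I'_\mathcal{J}) $, and consequently
\[
\# \mathcal{V}_\mathcal{J}(I) \cdot \# \mathcal{J} = \# \mathcal{V}_\mathcal{J}(I') \cdot \# \mathcal{J} = \# \Delta(I'_\mathcal{J}) = \# \Delta(I_\mathcal{J}),
\]
which is exactly equality in (\ref{general bound}) for $ I $.

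There is no real obstacle here; the only point requiring care is the use of strict monotonicity, without which the chain of inequalities would only give $ \# \Delta(I_\mathcal{J}) \geq \# \Delta(I'_\mathcal{J}) $ and a size gap could remain. The fact that $ \Delta(I_\mathcal{J}) $ is finite (it is contained in $ \{ \mathbf{x}^\mathbf{i} : \mathbf{i} \in \mathcal{J}_S \} $ by Section \ref{subsec interpretation bound}) ensures all the cardinalities considered are finite natural numbers on which $ f $ is injective, so the argument closes cleanly.
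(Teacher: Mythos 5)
Your proof is correct and follows essentially the same route as the paper: introduce $I' = I(\mathcal{V}_\mathcal{J}(I);\mathcal{J})$, use the inclusion of footprints and strict monotonicity of $f$ to force $\#\Delta(I_\mathcal{J}) = \#\Delta(I'_\mathcal{J})$, and then conclude via Theorem \ref{th equality conditions}. The only cosmetic difference is in the endgame, where you appeal explicitly to the sharpness statement for $I'$ and then compute directly, whereas the paper closes by invoking the equality criterion $I_\mathcal{J} = I(\mathcal{V}_\mathcal{J}(I);\mathcal{J})$; both are valid and equivalent.
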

\begin{proof}
First we have that $ I_\mathcal{J} \subseteq I(\mathcal{V}_\mathcal{J}(I); \mathcal{J}) $ as we saw in the proof of the previous theorem. Hence the reverse inclusion holds for their footprints and thus
\begin{equation}
f \left( \# \Delta \left( I \left( \mathcal{V}_\mathcal{J}(I); \mathcal{J} \right) \right) \right) \leq f(\# \Delta (I_\mathcal{J})). 
\label{eq footprint sharper proof 1}
\end{equation}
Now, since $ \mathcal{V}_\mathcal{J}(I) = \mathcal{V}_\mathcal{J}(I(\mathcal{V}_\mathcal{J}(I); \mathcal{J})) $ by Proposition \ref{prop properties of ideals of zeros}, and equality holds in (\ref{eq alternative footprint bounds}) for $ I $, we have that
\begin{equation}
f(\# \Delta (I_\mathcal{J})) = \# \mathcal{V}_\mathcal{J}(I) = \# \mathcal{V}_\mathcal{J}(I(\mathcal{V}_\mathcal{J}(I); \mathcal{J})) \leq f(\# \Delta (I(\mathcal{V}_\mathcal{J}(I); \mathcal{J}))). 
\label{eq footprint sharper proof 2}
\end{equation}
Combining (\ref{eq footprint sharper proof 1}) and (\ref{eq footprint sharper proof 2}), and using that $ f $ is strictly increasing, we conclude that 
$$ \# \Delta (I(\mathcal{V}_\mathcal{J}(I); \mathcal{J}))) = \# \Delta (I_\mathcal{J}), $$
which implies that equality holds in (\ref{general bound}) for $ I $ by Theorem \ref{th equality conditions}, and we are done.
\end{proof}

\section{Applications of the footprint bound for consecutive derivatives}

In this section, we present a brief collection of applications of Theorem \ref{th footprint bound}, which are extensions to consecutive derivatives of well-known important results from the literature. Throughout the section, we will fix again finite sets $ S_1, S_2, \ldots, S_m \subseteq \mathbb{F} $ and $ S = S_1 \times S_2 \times \cdots \times S_m $.

\subsection{Alon's combinatorial Nullstellensatz}

The combinatorial Nullstellensatz is a non-vanishing theorem by Alon \cite[Theorem 1.2]{alon} with many applications in combinatorics. It has been extended to non-vanishing theorems for standard multiplicities in \cite[Corollary 3.2]{ball} and for multisets (sets with multiplicities) in \cite[Theorem 6]{nulmultisets}. 

In this subsection, we establish and prove a combinatorial Nullstellensatz for consecutive derivatives and derive the well-known particular cases as corollaries. The formulation in \cite[Theorem 1.1]{alon} is equivalent in essence. We will extend that result in the next subsection in terms of Gr{\"o}bner bases.

\begin{theorem} \label{th combinatorial nullstellensatz general}
Let $ \mathcal{J} \subseteq \mathbb{N}^m $ be a decreasing finite set, let $ F(\mathbf{x}) \in \mathbb{F}[\mathbf{x}] $ be a non-zero polynomial, and let $ \mathbf{x}^\mathbf{i} = {\rm LM}(F(\mathbf{x})) $ for some monomial ordering. If $ \mathbf{i} \in \mathcal{J}_S $, then there exist $ \mathbf{s} \in S $ and $ \mathbf{j} \in \mathcal{J} $ such that 
$$ F^{(\mathbf{j})}(\mathbf{s}) \neq 0. $$
\end{theorem}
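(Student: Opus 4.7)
The plan is to argue by contradiction and reduce everything to the footprint bound (Theorem \ref{th footprint bound}) combined with the strict inequality provided by Lemma \ref{lemma auxiliary applications footprint 2}. Concretely, I would suppose that the conclusion fails, i.e., $ F^{(\mathbf{j})}(\mathbf{s}) = 0 $ for every $ \mathbf{s} \in S $ and every $ \mathbf{j} \in \mathcal{J} $, and then aim to derive a contradiction between two bounds on the size of $ \Delta(\langle F(\mathbf{x}) \rangle_\mathcal{J}) $.

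The first step is to pass from the vanishing assumption on $ F $ itself to vanishing on the whole principal ideal $ I = \langle F(\mathbf{x}) \rangle $. Let $ G(\mathbf{x}) \in \mathbb{F}[\mathbf{x}] $ be arbitrary and $ \mathbf{j} \in \mathcal{J} $. By the Leibniz formula (Lemma \ref{lemma leibniz formula}),
$$ (G(\mathbf{x}) F(\mathbf{x}))^{(\mathbf{j})} = \sum_{\mathbf{j}_1 + \mathbf{j}_2 = \mathbf{j}} G^{(\mathbf{j}_1)}(\mathbf{x}) F^{(\mathbf{j}_2)}(\mathbf{x}), $$
and since $ \mathbf{j}_2 \preceq \mathbf{j} \in \mathcal{J} $ with $ \mathcal{J} $ decreasing, each $ \mathbf{j}_2 $ also lies in $ \mathcal{J} $; hence each summand evaluates to $ 0 $ on $ S $. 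This shows $ S \subseteq \mathcal{V}_\mathcal{J}(I) $ and in particular $ \# \mathcal{V}_\mathcal{J}(I) \geq \# S $.

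Now I would invoke Theorem \ref{th footprint bound} applied to $ I = \langle F(\mathbf{x}) \rangle $ to obtain
$$ \# S \cdot \# \mathcal{J} \leq \# \mathcal{V}_\mathcal{J}(I) \cdot \# \mathcal{J} \leq \# \Delta(I_\mathcal{J}) = \# \Delta(\langle F(\mathbf{x}) \rangle_\mathcal{J}). $$
On the other hand, the hypothesis gives $ \mathbf{x}^\mathbf{i} = {\rm LM}(F(\mathbf{x})) $ with $ \mathbf{i} \in \mathcal{J}_S $, so Lemma \ref{lemma auxiliary applications footprint 2} yields the strict inequality
$$ \# \Delta(\langle F(\mathbf{x}) \rangle_\mathcal{J}) < \# S \cdot \# \mathcal{J}. $$
The two displayed inequalities are incompatible, which delivers the sought contradiction and completes the argument.

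I do not anticipate a substantive obstacle: the heavy lifting is already carried out by Lemma \ref{lemma auxiliary applications footprint 2}, which translates the leading-monomial condition $ \mathbf{i} \in \mathcal{J}_S $ into the strict drop of the footprint, and by Theorem \ref{th footprint bound} itself. The only point that deserves some care is the reduction from the vanishing of the derivatives of $ F $ to the vanishing of the derivatives of every element of $ \langle F(\mathbf{x}) \rangle $, and this is exactly where the hypothesis that $ \mathcal{J} $ is decreasing is used in combination with the Leibniz formula.
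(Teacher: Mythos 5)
Your proof is correct and follows essentially the same path as the paper's: combine Lemma \ref{lemma auxiliary applications footprint 2} with Theorem \ref{th footprint bound} to obtain incompatible bounds on $\#\Delta(\langle F(\mathbf{x})\rangle_\mathcal{J})$. The only addition is that you spell out, via the Leibniz formula and the decreasing property of $\mathcal{J}$, the reduction from the vanishing of $F$'s derivatives on $S$ to $S \subseteq \mathcal{V}_\mathcal{J}(\langle F(\mathbf{x})\rangle)$, a step the paper leaves implicit (it is in effect the statement that $I(\mathcal{V};\mathcal{J})$ is an ideal, from Proposition \ref{prop properties of ideals of zeros}).
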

\begin{proof}
By Lemma \ref{lemma auxiliary applications footprint 2}, the assumptions imply that
$$ \# \Delta(\langle F(\mathbf{x}) \rangle_\mathcal{J}) < \# S \cdot \# \mathcal{J}. $$
On the other hand, Theorem \ref{th footprint bound} implies that
$$ \# \mathcal{V}_\mathcal{J}(F(\mathbf{x})) \cdot \# \mathcal{J} \leq \# \Delta(\langle F(\mathbf{x}) \rangle_\mathcal{J}). $$
Therefore not all points in $ S $ are zeros of $ F(\mathbf{x}) $ of multiplicity at least $ \mathcal{J} $, and the result follows.
\end{proof}

We now derive the original theorem \cite[Theorem 1.2]{alon}. This constitutes an alternative proof. See also \cite{shortproof} for another recent short proof.

\begin{corollary} [\textbf{\cite{alon}}]
Let $ F(\mathbf{x}) \in \mathbb{F}[\mathbf{x}] $. Assume that the coefficient of $ \mathbf{x}^\mathbf{i} $ in $ F(\mathbf{x}) $ is not zero and $ {\rm deg}(F(\mathbf{x})) = \mid \mathbf{i} \mid $. If $ \# S_j > i_j $ for all $ j = 1,2, \ldots, m $, then there exist $ s_1 \in S_1 $, $ s_2 \in S_2 $, $ \ldots $, $ s_m \in S_m $, such that
$$ F(s_1, s_2, \ldots, s_m) \neq 0. $$
\end{corollary}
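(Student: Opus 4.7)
The plan is to invoke Theorem~\ref{th combinatorial nullstellensatz general} with $\mathcal{J} = \{\mathbf{0}\}$, which is clearly a decreasing finite set. For this choice of $\mathcal{J}$, a non-vanishing Hasse derivative $F^{(\mathbf{0})}(\mathbf{s})$ is just $F(\mathbf{s})$, and by Lemma~\ref{lemma auxiliary applications footprint 0} (or by direct inspection) $\mathcal{J}_S = \prod_{j=1}^m [0, \#S_j)$, so the condition $\mathbf{i} \in \mathcal{J}_S$ of Theorem~\ref{th combinatorial nullstellensatz general} matches exactly the hypothesis $i_j < \#S_j$ for all $j$. The only gap is that the theorem demands $\mathbf{x}^\mathbf{i} = {\rm LM}(F(\mathbf{x}))$ for some monomial ordering, whereas we only assume the coefficient of $\mathbf{x}^\mathbf{i}$ is nonzero at top total degree; under any graded ordering a rival monomial of the same total degree with some exponent $\geq \#S_j$ could dominate $\mathbf{x}^\mathbf{i}$.

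To close this gap I would replace $F(\mathbf{x})$ by its normal form $\widetilde{F}(\mathbf{x})$ modulo the ideal $\langle G_1(x_1), \ldots, G_m(x_m) \rangle$. The polynomial $\widetilde{F}$ has every variable $x_j$ appearing to degree at most $\#S_j - 1$, and since each $G_j$ vanishes on $S_j$, we have $\widetilde{F}(\mathbf{s}) = F(\mathbf{s})$ for all $\mathbf{s} \in S$. Therefore it suffices to prove that $\widetilde{F} \ne 0$ and then apply Theorem~\ref{th combinatorial nullstellensatz general} to $\widetilde{F}$ with any graded monomial ordering: the leading monomial $\mathbf{x}^{\mathbf{i}'}$ of $\widetilde{F}$ will automatically satisfy $i'_j < \#S_j$ for all $j$, hence lie in $\mathcal{J}_S$, yielding some $\mathbf{s} \in S$ with $F(\mathbf{s}) = \widetilde{F}(\mathbf{s}) \ne 0$.

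The main obstacle, then, is showing that the coefficient of $\mathbf{x}^\mathbf{i}$ is preserved under the reduction. The key observation is that $G_j(x_j) = x_j^{\#S_j} - (\text{lower-order terms in } x_j)$, so the rewriting rule $x_j^{\#S_j} \mapsto (\text{polynomial of }x_j\text{-degree at most }\#S_j - 1)$ strictly decreases the $x_j$-degree, and therefore the total degree, of any monomial requiring reduction. Since $|\mathbf{i}| = \deg(F)$, each monomial $\mathbf{x}^\mathbf{k}$ of $F$ that needs any reduction can only contribute to monomials in $\widetilde{F}$ of total degree strictly less than $|\mathbf{k}| \leq |\mathbf{i}|$. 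As $\mathbf{x}^\mathbf{i}$ itself is already fully reduced (because $i_j < \#S_j$ for every $j$), the coefficient of $\mathbf{x}^\mathbf{i}$ in $\widetilde{F}$ equals the coefficient of $\mathbf{x}^\mathbf{i}$ in $F$, which is nonzero by hypothesis. This degree-monotonicity argument is the only routine verification; everything else follows by direct appeal to the general theorem.
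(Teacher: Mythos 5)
Your proposal is correct and takes a genuinely different (and more careful) route than the paper's own proof. The paper disposes of the leading-monomial hypothesis in one line, asserting that since $\deg(F(\mathbf{x})) = |\mathbf{i}|$ there is a graded monomial ordering for which $\mathbf{x}^\mathbf{i} = {\rm LM}(F(\mathbf{x}))$, and then applies Theorem~\ref{th combinatorial nullstellensatz general} directly to $F$. That assertion is not true in general: for any monomial ordering one of $x_1 \succ x_2$ or $x_2 \succ x_1$ holds, hence by multiplicativity $x_1^2 \succ x_1 x_2$ or $x_2^2 \succ x_1 x_2$, so for $F(\mathbf{x}) = x_1 x_2 + x_1^2 + x_2^2$ with $\mathbf{i} = (1,1)$ no monomial ordering, graded or otherwise, makes $\mathbf{x}^\mathbf{i}$ the leading monomial (in general $\mathbf{i}$ would have to be a vertex of the Newton polytope of the top-degree form). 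Your route repairs exactly this gap: you pass to the normal form $\widetilde{F}$ of $F$ modulo $\langle G_1(x_1), \ldots, G_m(x_m)\rangle$ (a Gr{\"o}bner basis, since the leading monomials $x_j^{\#S_j}$ are pairwise coprime), observe that the rewriting of $x_j^{\#S_j}$ strictly decreases total degree so the already-reduced top-degree monomial $\mathbf{x}^\mathbf{i}$ keeps its nonzero coefficient, hence $\widetilde{F} \neq 0$, and note that every monomial of $\widetilde{F}$ (in particular ${\rm LM}(\widetilde{F})$ under any ordering) has exponent in $\mathcal{J}_S$; Theorem~\ref{th combinatorial nullstellensatz general} then applies to $\widetilde{F}$, which agrees with $F$ on $S$. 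Both proofs ultimately feed a top-degree exponent lying inside $\mathcal{J}_S$ into the same theorem, but your reduction step supplies precisely what the paper's terse justification is missing.
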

\begin{proof}
First, there exists a graded monomial ordering such that $ \mathbf{x}^{\mathbf{i}} = {\rm LM}(F(\mathbf{x})) $ since $ {\rm deg}(F(\mathbf{x})) = \mid \mathbf{i} \mid $. Now, the assumption implies that 
$$ \mathbf{i} \nsucceq (r_1 \#S_1, r_2 \# S_2, \ldots, r_m \# S_m), $$
for all $ \mathbf{r} = (r_1, r_2, \ldots, r_m) $ such that $ r_j = 1 $ for some $ j $, and the rest are zero. These are in fact all minimal multiindices not in $ \mathcal{J} = \{ \mathbf{0} \} $. Thus the result follows from the previous theorem.
\end{proof}

The next consequence is a combinatorial Nullstellensatz for weighted multiplicities, where the particular case $ w_1 = w_2 = \ldots = w_m = 1 $ coincides with \cite[Corollary 3.2]{ball} (recall the definition of weighted degree from Definition \ref{def weighted degree}):

\begin{corollary} \label{theorem comb nulls weighted}
Let $ F(\mathbf{x}) \in \mathbb{F}[\mathbf{x}] $, let $ \mathbf{w} = (w_1, w_2, \ldots, w_m) \in \mathbb{N}_+^m $ and let $ r \in \mathbb{N}_+ $. Assume that the coefficient of $ \mathbf{x}^\mathbf{i} $ in $ F(\mathbf{x}) $ is not zero and $ {\rm deg}_{\mathbf{w}}(F(\mathbf{x})) = \mid \mathbf{i} \mid_{\mathbf{w}} $. 

Assume also that, for all $ \mathbf{r} = (r_1, r_2, \ldots, r_m) $ with $ \mid \mathbf{r} \mid_{\mathbf{w}} \geq r $, there exists a $ j $ such that $ r_j \# S_j > i_j $. Then there exist $ s_1 \in S_1 $, $ s_2 \in S_2 $, $ \ldots $, $ s_m \in S_m $, and some $ \mathbf{j} \in \mathbb{N}^m $ with $ \mid \mathbf{j} \mid_{\mathbf{w}} < r $, such that
$$ F^{(\mathbf{j})}(s_1, s_2, \ldots, s_m) \neq 0. $$
\end{corollary}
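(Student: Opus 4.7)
The plan is to deduce this corollary directly from Theorem \ref{th combinatorial nullstellensatz general} by choosing the decreasing set $\mathcal{J} = \{\mathbf{j} \in \mathbb{N}^m : |\mathbf{j}|_{\mathbf{w}} < r\}$, a suitable monomial ordering for which $\mathbf{x}^{\mathbf{i}} = {\rm LM}(F(\mathbf{x}))$, and then verifying that the membership condition $\mathbf{i} \in \mathcal{J}_S$ is exactly what the hypothesis on the sizes $r_j \# S_j$ encodes.

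First I would set $\mathcal{J} = \{\mathbf{j} \in \mathbb{N}^m : |\mathbf{j}|_{\mathbf{w}} < r\}$ and note that this is decreasing in the coordinate-wise order, since $\mathbf{k} \preceq \mathbf{j}$ implies $|\mathbf{k}|_{\mathbf{w}} \leq |\mathbf{j}|_{\mathbf{w}}$. Next, I would choose a $\mathbf{w}$-weighted monomial ordering, that is, one which first compares monomials by their $\mathbf{w}$-weighted degree and breaks ties with any fixed monomial ordering (such as lexicographic). Since by hypothesis $F_{\mathbf{i}} \neq 0$ and $\deg_{\mathbf{w}}(F(\mathbf{x})) = |\mathbf{i}|_{\mathbf{w}}$, such an ordering can be refined so that among all monomials $\mathbf{x}^{\mathbf{k}}$ with $F_{\mathbf{k}} \neq 0$ and $|\mathbf{k}|_{\mathbf{w}} = |\mathbf{i}|_{\mathbf{w}}$, the monomial $\mathbf{x}^{\mathbf{i}}$ is the largest. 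Hence $\mathbf{x}^{\mathbf{i}} = {\rm LM}(F(\mathbf{x}))$ for this ordering.

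The remaining step is to show that $\mathbf{i} \in \mathcal{J}_S$, i.e.\ that for every $\mathbf{r} = (r_1, \ldots, r_m) \notin \mathcal{J}$ we have $\mathbf{i} \nsucceq (r_1 \# S_1, \ldots, r_m \# S_m)$. But $\mathbf{r} \notin \mathcal{J}$ means $|\mathbf{r}|_{\mathbf{w}} \geq r$, and the hypothesis of the corollary says precisely that in this case there exists some index $j$ with $r_j \# S_j > i_j$, which forbids $\mathbf{i} \succeq (r_1 \# S_1, \ldots, r_m \# S_m)$. Applying Theorem \ref{th combinatorial nullstellensatz general} then yields $\mathbf{s} = (s_1, \ldots, s_m) \in S$ and $\mathbf{j} \in \mathcal{J}$ (so $|\mathbf{j}|_{\mathbf{w}} < r$) with $F^{(\mathbf{j})}(\mathbf{s}) \neq 0$, as required.

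I do not expect any real obstacle here; the only delicate point is to verify that a refinement of the $\mathbf{w}$-weighted ordering exists making $\mathbf{x}^{\mathbf{i}}$ the leading monomial of $F(\mathbf{x})$, but this is standard once one knows $|\mathbf{i}|_{\mathbf{w}} = \deg_{\mathbf{w}}(F(\mathbf{x}))$. Everything else is just matching the two defining inequalities (the $\mathbf{w}$-weighted order on $\mathbf{r}$, and the coordinate-wise comparison of $\mathbf{i}$ with $(r_1 \# S_1, \ldots, r_m \# S_m)$) to the condition $\mathbf{i} \in \mathcal{J}_S$.
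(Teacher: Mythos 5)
Your plan is the one the paper gestures at (its proof reads ``It follows from Theorem~\ref{th combinatorial nullstellensatz general} as the previous corollary''), but the step you flag as the only delicate point and then dismiss as standard --- constructing a monomial ordering under which $\mathbf{x}^{\mathbf{i}}$ is the leading monomial of $F(\mathbf{x})$ --- is actually false in general, and it carries all the weight of the argument. For any monomial ordering, $ {\rm LM}(F(\mathbf{x})) $ is a vertex of the Newton polytope of $F(\mathbf{x})$; the hypothesis $F_{\mathbf{i}}\neq 0$ with $ \mid\mathbf{i}\mid_{\mathbf{w}} = \deg_{\mathbf{w}}(F(\mathbf{x}))$ does not make $\mathbf{i}$ a vertex. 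Concretely, take $m=2$, $\mathbf{w}=(1,1)$, $r=1$, $\#S_1=\#S_2=3$, $F=x_1^4+x_1^2x_2^2+x_2^4$ and $\mathbf{i}=(2,2)$: all hypotheses of the corollary hold, yet if $(2,2)\succ(4,0)$ for a monomial ordering $\succ$ then adding $(2,2)$ gives $(4,4)\succ(6,2)$, while if also $(2,2)\succ(0,4)$ then adding $(4,0)$ gives $(6,2)\succ(4,4)$, a contradiction; so for \emph{every} monomial ordering $ {\rm LM}(F)$ is $x_1^4$ or $x_2^4$, neither of which lies in $\mathcal{J}_S=\{(k_1,k_2):k_1,k_2<3\}$, and Theorem~\ref{th combinatorial nullstellensatz general} cannot be applied to $F$ directly. (The paper's own one-line proof inherits this gap from the preceding corollary, whose opening claim ``there exists a graded monomial ordering such that $\mathbf{x}^{\mathbf{i}}={\rm LM}(F(\mathbf{x}))$'' has the same problem.) Your matching of the remaining hypothesis to the condition $\mathbf{i}\in\mathcal{J}_S$ is correct, but that is the routine part.

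The gap can be closed by reducing first. Fix a $\mathbf{w}$-weighted monomial ordering and let $\tilde F(\mathbf{x})$ be the remainder on dividing $F(\mathbf{x})$ by the Gr{\"o}bner basis $\mathcal{F}$ of $I(S;\mathcal{J})$ from Theorem~\ref{th groebner general}. Each subtraction in the division is of the form $H(\mathbf{x})\prod_{j} G_j(x_j)^{r_j}$ with $\mathbf{r}\in\mathcal{B}_{\mathcal{J}}$ and $\mathbf{w}$-degree at most $\deg_{\mathbf{w}}(F)=\mid\mathbf{i}\mid_{\mathbf{w}}$; every monomial of top $\mathbf{w}$-degree in such a product is $\succeq(r_1\#S_1,\ldots,r_m\#S_m)$ and hence outside $\mathcal{J}_S$, so the coefficient of $\mathbf{x}^{\mathbf{i}}$ is unchanged throughout. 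Therefore $\tilde F\neq 0$; moreover every monomial of $\tilde F$, in particular ${\rm LM}(\tilde F)$, lies in $\mathcal{J}_S$, since none is divisible by a leading monomial of $\mathcal{F}$; and $F-\tilde F\in I(S;\mathcal{J})$, so $F$ and $\tilde F$ have the same derivatives in $\mathcal{J}$ on $S$. Applying Theorem~\ref{th combinatorial nullstellensatz general} to $\tilde F$ and transferring back to $F$ gives the conclusion.
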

\begin{proof}
It follows from Theorem \ref{th combinatorial nullstellensatz general} as the previous corollary.
\end{proof}

We conclude with a combinatorial Nullstellensatz for multiindices bounded on each coordinate separately:

\begin{corollary}
Let $ F(\mathbf{x}) \in \mathbb{F}[\mathbf{x}] $, let $ (r_1, r_2, \ldots, r_m) \in \mathbb{N}_+^m $, and assume that $ \mathbf{x}^\mathbf{i} = $ $ {\rm LM}(F(\mathbf{x})) $, $ \mathbf{i} = (i_1, i_2, \ldots, i_m) $, for some monomial ordering and $ i_j < r_j \# S_j $, for all $ j = 1,2, \ldots, m $. There exist $ s_1 \in S_1 $, $ s_2 \in S_2 $, $ \ldots $, $ s_m \in S_m $, and some $ \mathbf{j} = (j_1, j_2, \ldots, j_m) \in \mathbb{N}^m $ with $ j_k < r_k $, for all $ k = 1,2, \ldots, m $, such that
$$ F^{(\mathbf{j})}(s_1, s_2, \ldots, s_m) \neq 0. $$
\end{corollary}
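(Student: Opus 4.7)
The plan is to deduce this corollary directly from Theorem \ref{th combinatorial nullstellensatz general} by choosing the natural decreasing set of multiindices, namely
$$ \mathcal{J} = \left\lbrace (j_1, j_2, \ldots, j_m) \in \mathbb{N}^m : j_k < r_k, \; k = 1,2, \ldots, m \right\rbrace. $$
First I would verify that this $ \mathcal{J} $ is indeed a decreasing finite set: if $ \mathbf{j} \in \mathcal{J} $ and $ \mathbf{k} \preceq \mathbf{j} $, then $ k_\ell \leq j_\ell < r_\ell $ for all $ \ell $, so $ \mathbf{k} \in \mathcal{J} $; finiteness is immediate from $ \mathbf{r} \in \mathbb{N}_+^m $.

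Next I would identify $ \mathcal{J}_S $ explicitly. The second part of the final remark in Subsection \ref{subsec interpretation bound} states that, for this specific $ \mathcal{J} $,
$$ \mathcal{J}_S = \left\lbrace (i_1, i_2, \ldots, i_m) \in \mathbb{N}^m : i_j < r_j \# S_j, \; j = 1,2, \ldots, m \right\rbrace. $$
Hence the hypothesis $ i_j < r_j \# S_j $ for all $ j $ is exactly the statement that the exponent $ \mathbf{i} $ of $ {\rm LM}(F(\mathbf{x})) $ lies in $ \mathcal{J}_S $.

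With these two observations in hand, Theorem \ref{th combinatorial nullstellensatz general} applies directly to $ F(\mathbf{x}) $ and the chosen $ \mathcal{J} $, yielding some $ \mathbf{s} = (s_1, s_2, \ldots, s_m) \in S = S_1 \times S_2 \times \cdots \times S_m $ and some $ \mathbf{j} = (j_1, j_2, \ldots, j_m) \in \mathcal{J} $ such that $ F^{(\mathbf{j})}(s_1, s_2, \ldots, s_m) \neq 0 $. By the definition of $ \mathcal{J} $, the multiindex $ \mathbf{j} $ automatically satisfies $ j_k < r_k $ for all $ k = 1,2, \ldots, m $, which is exactly the desired conclusion.

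There is essentially no obstacle here: the corollary is a direct translation of Theorem \ref{th combinatorial nullstellensatz general} once $ \mathcal{J}_S $ is recognized, via the remark, as the rectangular product $ \prod_{j=1}^m [0, r_j \# S_j) $. The only step requiring mild care is invoking the right clause of that remark, and noting that the chosen $ \mathcal{J} $ is indeed decreasing so that Theorem \ref{th combinatorial nullstellensatz general} applies.
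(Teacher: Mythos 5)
Your proof is correct and is exactly the intended argument: the paper omits the proof of this corollary because, just like the two preceding corollaries in the same subsection, it is a direct specialization of Theorem \ref{th combinatorial nullstellensatz general} with the box-shaped decreasing set $\mathcal{J}$, using the explicit description of $\mathcal{J}_S$ from the remark at the end of Subsection \ref{subsec interpretation bound}.
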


\subsection{Gr{\"o}bner bases of ideals of zeros in a grid}

An equivalent but more refined consequence is obtaining a Gr{\"o}bner basis for ideals $ I(S; \mathcal{J}) $ associated to the whole grid $ S $ and to a decreasing finite set of multiindices (recall Definition \ref{def ideal of a set with multi}). This result is also usually referred to as combinatorial Nullstellensatz in many works in the literature (see \cite[Theorem 1.1]{alon}, \cite[Theorem 3.1]{ball} and \cite[Theorem 1]{nulmultisets}). We briefly recall the notion of Gr{\"o}bner basis. We will also make repeated use of the Euclidean division on the multivariate polynomial ring and its properties. See \cite[Chapter 2]{clo1} for more details. 

\begin{definition}[\textbf{Gr{\"o}bner bases}]
Given a monomial ordering $ \preceq_m $ and an ideal $ I \subseteq \mathbb{F}[\mathbf{x}] $, we say that a finite family of polynomials $ \mathcal{F} \subseteq I $ is a Gr{\"o}bner basis of $ I $ with respect to $ \preceq_m $ if 
$$ \left\langle {\rm LM}_{\preceq_m}(I) \right\rangle = \left\langle {\rm LM}_{\preceq_m}(\mathcal{F}) \right\rangle. $$
Moreover, we say that $ \mathcal{F} $ is reduced if, for any two distinct $ F(\mathbf{x}), G(\mathbf{x}) \in \mathcal{F} $, it holds that $ {\rm LM}_{\preceq_m}(F(\mathbf{x})) $ does not divide any monomial in $ G(\mathbf{x}) $.
\end{definition}

Recall that a Gr{\"o}bner basis of an ideal generates it as an ideal. To obtain reduced Gr{\"o}bner bases, we need a way to minimally generate decreasing finite sets in $ \mathbb{N}^m $, which is given by the following object:

\begin{definition} \label{def generators decreasing finite set}
For any decreasing finite set $ \mathcal{J} \subseteq \mathbb{N}^m $, we define
$$ \mathcal{B}_\mathcal{J} = \{ \mathbf{i} \notin \mathcal{J} : \mathbf{j} \notin \mathcal{J} \textrm{ and } \mathbf{j} \preceq \mathbf{i} \Longrightarrow \mathbf{i} = \mathbf{j} \}. $$
\end{definition}

The main result of this subsection is the following:

\begin{theorem} \label{th groebner general}
For any decreasing finite set $ \mathcal{J} \subseteq \mathbb{N}^m $, the family
$$ \mathcal{F} = \left\lbrace \prod_{j=1}^m G_j(x_j)^{r_j} : (r_1, r_2, \ldots, r_m) \in \mathcal{B}_\mathcal{J} \right\rbrace $$
is a reduced Gr{\"o}bner basis of the ideal $ I(S; \mathcal{J}) $ with respect to any monomial ordering. In particular, for any $ F(\mathbf{x}) \in I(S; \mathcal{J}) $, there exist polynomials $ H_\mathbf{r}(\mathbf{x}) \in \mathbb{F}[\mathbf{x}] $ such that 
$$ \deg(H_\mathbf{r}(\mathbf{x})) + \sum_{j=1}^m r_j \deg(G_j(x_j)) \leq \deg(F(\mathbf{x})), $$
for $ \mathbf{r} = (r_1, r_2, \ldots, r_m) \in \mathcal{B}_\mathcal{J} $, and
$$ F(\mathbf{x}) = \sum_{\mathbf{r} \in \mathcal{B}_\mathcal{J}} \left( H_\mathbf{r}(\mathbf{x}) \prod_{j=1}^m G_j(x_j)^{r_j} \right). $$
\end{theorem}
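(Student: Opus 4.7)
The plan is to verify that every element of $ \mathcal{F} $ lies in $ I(S;\mathcal{J}) $, identify its leading monomials, and then use the footprint bound (Theorem \ref{th footprint bound}) to match the two leading-term ideals. Given $ \mathbf{r} = (r_1,\ldots,r_m) \in \mathcal{B}_\mathcal{J} $, set $ G(\mathbf{x}) = \prod_{j=1}^m G_j(x_j)^{r_j} $. Because $ \mathbf{r} \notin \mathcal{J} $ and $ \mathcal{J} $ is decreasing, for every $ \mathbf{i} \in \mathcal{J} $ some coordinate must satisfy $ r_j > i_j $; the Leibniz computation already carried out in the proof of Theorem \ref{th footprint bound} (equations (\ref{eq in proof of footprint 1}) and (\ref{eq in proof of footprint 2})) then leaves $ G_j(x_j) $ as a factor of $ G^{(\mathbf{i})} $, so $ G^{(\mathbf{i})} $ vanishes on $ S $, which gives $ \mathcal{F} \subseteq I(S;\mathcal{J}) $. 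Since each $ G_j(x_j) $ is monic of degree $ \# S_j $, the leading monomial of $ G $ with respect to \emph{any} monomial ordering is $ \prod_j x_j^{r_j \# S_j} $.

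Next I will check that $ \Delta(\langle {\rm LM}(\mathcal{F}) \rangle) = \{\mathbf{x}^\mathbf{i} : \mathbf{i} \in \mathcal{J}_S\} $. Every $ \mathbf{r} \notin \mathcal{J} $ dominates some $ \mathbf{r}' \in \mathcal{B}_\mathcal{J} $ (by descent, since $ \preceq $ well-orders antichains on $ \mathbb{N}^m $), so the divisibility conditions cutting out $ \mathcal{J}_S $ from $ \mathbb{N}^m \setminus \mathcal{J} $ are equivalent to those indexed only by $ \mathcal{B}_\mathcal{J} $; by Lemma \ref{lemma auxiliary applications footprint 1} the resulting footprint has size $ \# S \cdot \# \mathcal{J} $. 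On the other hand, applying Theorem \ref{th footprint bound} to $ I = I(S;\mathcal{J}) $, for which $ \mathcal{V}_\mathcal{J}(I) = S $ by item 2 of Proposition \ref{prop properties of ideals of zeros}, and for which $ I_\mathcal{J} = I $ because all the generators defining $ I_\mathcal{J} $ already lie in $ I $, yields $ \# S \cdot \# \mathcal{J} \leq \# \Delta(I(S;\mathcal{J})) $. The inclusion $ \langle {\rm LM}(\mathcal{F}) \rangle \subseteq \langle {\rm LM}(I(S;\mathcal{J})) \rangle $ forces the reverse inequality on footprints, so $ \# \Delta(I(S;\mathcal{J})) = \# S \cdot \# \mathcal{J} = \# \Delta(\langle {\rm LM}(\mathcal{F})\rangle) $. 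Combined with the inclusion $ \Delta(I(S;\mathcal{J})) \subseteq \Delta(\langle {\rm LM}(\mathcal{F})\rangle) $, this forces $ \langle {\rm LM}(\mathcal{F}) \rangle = \langle {\rm LM}(I(S;\mathcal{J})) \rangle $; i.e., $ \mathcal{F} $ is a Gröbner basis for every monomial ordering.

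Reducedness follows immediately from the minimality in the definition of $ \mathcal{B}_\mathcal{J} $: for distinct $ \mathbf{r}, \mathbf{r}' \in \mathcal{B}_\mathcal{J} $ one cannot have $ \mathbf{r} \preceq \mathbf{r}' $, so the leading monomial $ \prod_j x_j^{r_j \# S_j} $ fails to divide any monomial of $ \prod_j G_j(x_j)^{r_j'} $, whose exponent vectors are componentwise bounded by $ (r_1' \# S_1, \ldots, r_m' \# S_m) $. For the final division identity with the degree bound, I will fix a \emph{graded} monomial ordering so that total degree coincides with the degree of the leading monomial, and run the standard multivariate division of $ F(\mathbf{x}) $ by $ \mathcal{F} $. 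The intermediate remainder then has total degree at most $ \deg(F(\mathbf{x})) $ throughout, and whenever a monomial contribution $ q $ is added to $ H_\mathbf{r} $ one has $ \deg(q) + \sum_j r_j \deg(G_j) = \deg\bigl(q \prod_j G_j(x_j)^{r_j}\bigr) \leq \deg(F(\mathbf{x})) $; since $ \mathcal{F} $ is a Gröbner basis and $ F \in I(S;\mathcal{J}) $, the division terminates with zero remainder, and the accumulated $ H_\mathbf{r} $ inherit the announced degree bound.

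I expect the main obstacle to be the identification $ \Delta(\langle {\rm LM}(\mathcal{F})\rangle) = \{\mathbf{x}^\mathbf{i} : \mathbf{i} \in \mathcal{J}_S\} $ together with sandwiching $ \# \Delta(I(S;\mathcal{J})) $ precisely between two equal bounds via Theorem \ref{th footprint bound}; once that footprint count is in place, the Leibniz verification, the minimality argument for reducedness, and the degree tracking inside the division algorithm are essentially mechanical.
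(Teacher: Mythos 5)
Your proof is correct and takes a genuinely different route from the paper's. The paper argues by contradiction through the division algorithm: dividing an arbitrary $F(\mathbf{x}) \in I(S;\mathcal{J})$ by $\mathcal{F}$, a nonzero remainder $R(\mathbf{x})$ would have leading exponent lying in $\mathcal{J}_S$, and Theorem \ref{th combinatorial nullstellensatz general} would then exhibit a point of $S$ at which some derivative in $\mathcal{J}$ of $R(\mathbf{x})$ fails to vanish, contradicting $R(\mathbf{x}) \in I(S;\mathcal{J})$. You instead pin down the footprint of $I(S;\mathcal{J})$ exactly by a two-sided count: the inclusion $\mathcal{F} \subseteq I(S;\mathcal{J})$ together with the identification of the monomials outside $\langle \mathrm{LM}(\mathcal{F})\rangle$ with those indexed by $\mathcal{J}_S$ gives $\#\Delta(I(S;\mathcal{J})) \leq \#S\cdot\#\mathcal{J}$, while Theorem \ref{th footprint bound} applied to $I = I(S;\mathcal{J})$ (for which $\mathcal{V}_\mathcal{J}(I)=S$ and $I_\mathcal{J}=I$) gives the reverse inequality, and the resulting equality of finite sets forces $\langle \mathrm{LM}(\mathcal{F})\rangle = \langle \mathrm{LM}(I(S;\mathcal{J}))\rangle$. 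Both arguments rest ultimately on Theorem \ref{th footprint bound} and the count in Lemma \ref{lemma auxiliary applications footprint 1}, but yours bypasses Theorem \ref{th combinatorial nullstellensatz general} and is correspondingly more self-contained, while the paper's route has the advantage of running the division explicitly, which both proofs need anyway for the degree estimate. Your treatments of reducedness (incomparability of distinct elements of $\mathcal{B}_\mathcal{J}$) and of the degree bound (a graded ordering making total degree non-increasing along the division) fill in details the paper leaves to the reader; the only blemish is the loose phrasing about $\preceq$ ``well-ordering antichains,'' where you really mean Dickson's lemma, namely that every nonempty subset of $\mathbb{N}^m$ has (finitely many) $\preceq$-minimal elements.
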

\begin{proof}
It suffices to prove that, if $ F(\mathbf{x}) \in I(S; \mathcal{J}) $ and we divide it by the family $ \mathcal{F} $ (in an arbitrary order), then the remainder must be the zero polynomial.

Performing such division, we obtain $ F(\mathbf{x}) = G(\mathbf{x}) + R(\mathbf{x}) $, where $ R(\mathbf{x}) $ is the remainder of the division and $ G(\mathbf{x}) \in I(S; \mathcal{J}) $. Assume that $ R(\mathbf{x}) \neq 0 $ and let $ \mathbf{x}^\mathbf{i} $ be the leading monomial of $ R(\mathbf{x}) $ with respect to the chosen monomial ordering. Since no leading monomial of the polynomials in $ \mathcal{F} $ divides $ \mathbf{x}^\mathbf{i} $, we conclude that
$$ \mathbf{i} \nsucceq (r_1 \#S_1, r_2 \#S_2, \ldots, r_m \# S_m), $$
for all minimal $ \mathbf{r} = (r_1, r_2, \ldots, r_m) \notin \mathcal{J} $, that is, for all $ \mathbf{r} \in \mathcal{B}_\mathcal{J} $. Thus by Theorem \ref{th combinatorial nullstellensatz general}, we conclude that not all points in $ S $ are zeros of $ R(\mathbf{x}) $ of multiplicity at least $ \mathcal{J} $, which is absurd since $ R(\mathbf{x}) = F(\mathbf{x}) - G(\mathbf{x}) \in I(S; \mathcal{J}) $, and we are done.

The fact that $ \mathcal{F} $ is reduced follows from observing that the multiindices $ \mathbf{r} \in \mathcal{B}_\mathcal{J} $ are minimal among those not in $ \mathcal{J} $. The last part of the theorem follows by performing the Euclidean division.
\end{proof}

The following particular case is \cite[Theorem 1.1]{alon}:

\begin{corollary} [\textbf{\cite{alon}}]
If $ F(\mathbf{x}) \in \mathbb{F}[\mathbf{x}] $ vanishes at all points in $ S $, then there exist polynomials $ H_j(\mathbf{x}) \in \mathbb{F}[\mathbf{x}] $ such that $ \deg(H_j(\mathbf{x})) + \deg(G_j(x_j)) \leq \deg(F(\mathbf{x})) $, for $ j = 1,2, \ldots, m $, and
$$ F(\mathbf{x}) = \sum_{j=1}^m H_j(\mathbf{x}) G_j(x_j). $$
\end{corollary}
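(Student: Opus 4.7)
The plan is to derive this corollary as the $\mathcal{J} = \{\mathbf{0}\}$ specialization of Theorem~\ref{th groebner general}. First I would observe that the hypothesis that $F(\mathbf{x})$ vanishes at every point of $S$ is exactly the condition $F(\mathbf{x}) \in I(S; \{\mathbf{0}\})$: in Definition~\ref{def ideal of a set with multi} the only multiindex to check is $\mathbf{0}$, and $F^{(\mathbf{0})}(\mathbf{a}) = F(\mathbf{a})$ by Definition~\ref{def Hasse derivative}.

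Next I would identify the set $\mathcal{B}_{\{\mathbf{0}\}}$ from Definition~\ref{def generators decreasing finite set}. The multiindices outside $\{\mathbf{0}\}$ are the nonzero vectors of $\mathbb{N}^m$, and the minimal such vectors with respect to $\preceq$ are exactly the $m$ coordinate unit vectors $\mathbf{e}_j = (0, \ldots, 0, 1, 0, \ldots, 0)$ with the $1$ in position $j$, for $j = 1, 2, \ldots, m$. For $\mathbf{r} = \mathbf{e}_j$ the product $\prod_{k=1}^{m} G_k(x_k)^{r_k}$ collapses to $G_j(x_j)$, so the reduced Gr{\"o}bner basis produced by Theorem~\ref{th groebner general} for $I(S; \{\mathbf{0}\})$ is precisely $\{G_1(x_1), G_2(x_2), \ldots, G_m(x_m)\}$.

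Finally I would invoke the representation part of Theorem~\ref{th groebner general}: there exist polynomials $H_{\mathbf{e}_j}(\mathbf{x}) \in \mathbb{F}[\mathbf{x}]$ such that $F(\mathbf{x}) = \sum_{j=1}^{m} H_{\mathbf{e}_j}(\mathbf{x}) G_j(x_j)$, and the degree inequality $\deg(H_{\mathbf{e}_j}) + \sum_{k=1}^{m} r_k \deg(G_k(x_k)) \leq \deg(F)$ with $\mathbf{r} = \mathbf{e}_j$ collapses to $\deg(H_{\mathbf{e}_j}) + \deg(G_j(x_j)) \leq \deg(F)$, since only the $k = j$ summand is nonzero. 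Relabeling $H_j(\mathbf{x}) := H_{\mathbf{e}_j}(\mathbf{x})$ then yields the corollary verbatim. I expect no genuine obstacle: all the substantive content (the reduced Gr{\"o}bner basis property and the degree-controlled division algorithm) has already been supplied by Theorem~\ref{th groebner general}, and only the elementary combinatorial step of identifying $\mathcal{B}_{\{\mathbf{0}\}}$ with the coordinate unit vectors remains to be checked.
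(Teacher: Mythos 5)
Your proposal is correct and matches the paper's intended derivation: the paper states this corollary immediately after Theorem~\ref{th groebner general} without further comment, and the argument it has in mind is precisely the specialization $\mathcal{J} = \{\mathbf{0}\}$, together with the observation that $\mathcal{B}_{\{\mathbf{0}\}}$ consists of the coordinate unit vectors so that the Gr\"obner basis collapses to $\{G_1(x_1), \ldots, G_m(x_m)\}$. Your identification of $\mathcal{B}_{\{\mathbf{0}\}}$ and the resulting degree bound are both checked correctly.
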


To study the case of weighted multiplicities, we observe the following:

\begin{remark}
Given a vector of positive weights $ \mathbf{w} = (w_1, w_2, \ldots, w_m) \in \mathbb{N}_+^m $, a positive integer $ r \in \mathbb{N}_+ $, and the set $ \mathcal{J} = \{ \mathbf{i} \in \mathbb{N}^m : \mid \mathbf{i} \mid_\mathbf{w} < r \} $, it holds that $ \mathcal{B}_\mathcal{J} = \mathcal{B}_\mathbf{w} $, where
$$ \mathcal{B}_\mathbf{w} = \left\lbrace (i_1, i_2, \ldots, i_m) \in \mathbb{N}^m : r \leq \sum_{j=1}^m i_j w_j < r + \min \left\lbrace w_j : i_j \neq 0 \right\rbrace   \right\rbrace . $$
\end{remark}

We then obtain the next consequence, where the particular case $ w_1 = w_2 = \ldots = w_m = 1 $ coincides with \cite[Theorem 3.1]{ball}.

\begin{corollary} \label{corollary groebner weighted}
Given a vector of positive weights $ \mathbf{w} = (w_1, w_2, \ldots, w_m) \in \mathbb{N}_+^m $ and a positive integer $ r \in \mathbb{N}_+ $, if $ F(\mathbf{x}) \in \mathbb{F}[\mathbf{x}] $ vanishes at all points in $ S $ with weighted multiplicity at least $ r $, then there exist polynomials $ H_\mathbf{r}(\mathbf{x}) \in \mathbb{F}[\mathbf{x}] $ such that $ \deg(H_\mathbf{r}(\mathbf{x})) + \sum_{j=1}^m r_j \deg(G_j(x_j)) \leq \deg(F(\mathbf{x})) $, for all $ \mathbf{r} = (r_1, r_2, \ldots, r_m) \in \mathcal{B}_\mathbf{w} $, and
$$ F(\mathbf{x}) = \sum_{\mathbf{r} \in \mathcal{B}_\mathbf{w}} \left( H_\mathbf{r}(\mathbf{x}) \prod_{j=1}^m G_j(x_j)^{r_j} \right). $$
\end{corollary}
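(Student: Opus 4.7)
The plan is to reduce Corollary \ref{corollary groebner weighted} to a direct application of Theorem \ref{th groebner general}. First I would let $\mathcal{J} = \{\mathbf{i} \in \mathbb{N}^m : \mid \mathbf{i} \mid_\mathbf{w} < r\}$ and check that $\mathcal{J}$ is a decreasing finite set. Decreasingness follows because $\mathbf{j} \preceq \mathbf{i}$ implies $\mid \mathbf{j} \mid_\mathbf{w} \leq \mid \mathbf{i} \mid_\mathbf{w}$ (since the weights $w_j$ are positive), and finiteness follows from $\mid \mathbf{i} \mid_\mathbf{w} < r$ together with $w_j \geq 1$, which forces each coordinate to be bounded.

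Next I would translate the hypothesis. The statement ``$F(\mathbf{x})$ vanishes at all points of $S$ with weighted multiplicity at least $r$'' means that $F^{(\mathbf{i})}(\mathbf{a}) = 0$ for every $\mathbf{a} \in S$ and every $\mathbf{i}$ with $\mid \mathbf{i} \mid_\mathbf{w} < r$, i.e.\ for every $\mathbf{i} \in \mathcal{J}$. By Definition \ref{def ideal of a set with multi} this is exactly the condition $F(\mathbf{x}) \in I(S; \mathcal{J})$.

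At this point I would invoke Theorem \ref{th groebner general}, which gives a reduced Gr\"obner basis of $I(S; \mathcal{J})$ consisting of the polynomials $\prod_{j=1}^m G_j(x_j)^{r_j}$ for $(r_1, \ldots, r_m) \in \mathcal{B}_\mathcal{J}$, together with the existence of polynomials $H_\mathbf{r}(\mathbf{x})$ with the claimed degree bounds such that $F(\mathbf{x}) = \sum_{\mathbf{r} \in \mathcal{B}_\mathcal{J}} H_\mathbf{r}(\mathbf{x}) \prod_{j=1}^m G_j(x_j)^{r_j}$. Finally, the preceding remark identifies $\mathcal{B}_\mathcal{J}$ with $\mathcal{B}_\mathbf{w}$ in this weighted setting, so substituting yields precisely the decomposition and degree bounds stated in the corollary.

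There is essentially no serious obstacle: the proof is a bookkeeping step that specializes Theorem \ref{th groebner general} to the decreasing set defined by the weighted order and then rewrites $\mathcal{B}_\mathcal{J}$ using the remark. The only point that warrants a sentence of justification in the final write-up is the verification that $\mathcal{J}$ is decreasing and finite, as above; everything else is a direct citation.
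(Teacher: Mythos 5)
Your proposal is correct and follows exactly the route the paper intends: specialize Theorem \ref{th groebner general} to $\mathcal{J} = \{\mathbf{i} \in \mathbb{N}^m : \mid \mathbf{i}\mid_\mathbf{w} < r\}$, check decreasingness and finiteness, translate the hypothesis to $F(\mathbf{x}) \in I(S;\mathcal{J})$, and invoke the preceding remark to rewrite $\mathcal{B}_\mathcal{J}$ as $\mathcal{B}_\mathbf{w}$. The paper gives no separate proof precisely because the corollary is this bookkeeping step, and your write-up supplies the small verifications the paper leaves implicit.
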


We conclude with the case of multiindices bounded on each coordinate separately:

\begin{corollary} \label{corollary groebner coordinate}
Given a vector $ (r_1, r_2, \ldots, r_m) \in \mathbb{N}_+^m $, if $ F(\mathbf{x}) \in \mathbb{F}[\mathbf{x}] $ is such that $ F^{(\mathbf{j})}(\mathbf{s}) = 0 $, for all $ \mathbf{s} \in S $ and all $ \mathbf{j} = (j_1, j_2, \ldots, j_m) \in \mathbb{N}^m $ satisfying $ j_k < r_k $, for all $ k = 1,2, \ldots, m $, then there exist polynomials $ H_j(\mathbf{x}) \in \mathbb{F}[\mathbf{x}] $ such that $ \deg(H_j(\mathbf{x})) + r_j \deg(G_j(x_j)) \leq \deg(F(\mathbf{x})) $, for all $ j = 1,2, \ldots, m $, and
$$ F(\mathbf{x}) = \sum_{j = 1}^m H_j(\mathbf{x}) G_j(x_j)^{r_j}. $$
\end{corollary}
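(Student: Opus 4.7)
The plan is to recognize that this corollary is simply Theorem \ref{th groebner general} applied to the specific decreasing set
\[ \mathcal{J} = \{ (j_1, j_2, \ldots, j_m) \in \mathbb{N}^m : j_k < r_k, \, \forall k = 1, 2, \ldots, m \}, \]
which is obviously decreasing with respect to $\preceq$. The hypothesis on $F(\mathbf{x})$ says exactly that $F(\mathbf{x}) \in I(S; \mathcal{J})$, so all the machinery of Theorem \ref{th groebner general} applies directly.

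The only substantive step is identifying the set $\mathcal{B}_\mathcal{J}$ of minimal multiindices outside $\mathcal{J}$ for this particular choice. A multiindex $\mathbf{s} = (s_1, s_2, \ldots, s_m)$ lies outside $\mathcal{J}$ precisely when $s_k \geq r_k$ for some $k$. Since any such $\mathbf{s}$ dominates (in the coordinate-wise ordering $\preceq$) the multiindex $r_k \mathbf{e}_k$ obtained by setting the $k$-th coordinate to $r_k$ and all others to zero, and since the multiindices $r_k \mathbf{e}_k$ are pairwise incomparable, I would conclude
\[ \mathcal{B}_\mathcal{J} = \{ r_1 \mathbf{e}_1, r_2 \mathbf{e}_2, \ldots, r_m \mathbf{e}_m \}, \]
where $\mathbf{e}_k$ is the $k$-th standard basis vector.

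Feeding this description into Theorem \ref{th groebner general} immediately yields the desired decomposition: the elements of the Gröbner basis $\mathcal{F}$ are exactly the polynomials $G_k(x_k)^{r_k}$ for $k = 1, 2, \ldots, m$, so there exist polynomials $H_k(\mathbf{x}) \in \mathbb{F}[\mathbf{x}]$ (renaming $H_{r_k \mathbf{e}_k}$ as $H_k$) such that
\[ F(\mathbf{x}) = \sum_{k=1}^m H_k(\mathbf{x}) G_k(x_k)^{r_k}. \]
The degree bound $\deg(H_k(\mathbf{x})) + r_k \deg(G_k(x_k)) \leq \deg(F(\mathbf{x}))$ is likewise the specialization of the degree bound in Theorem \ref{th groebner general} to the present $\mathcal{B}_\mathcal{J}$, since $\sum_{j=1}^m s_j \deg(G_j(x_j)) = r_k \deg(G_k(x_k))$ when $\mathbf{s} = r_k \mathbf{e}_k$.

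There is no real obstacle here; the only point that requires any thought at all is the computation of $\mathcal{B}_\mathcal{J}$, which is an elementary exercise in the partial order $\preceq$. The rest is bookkeeping and quotation of Theorem \ref{th groebner general}.
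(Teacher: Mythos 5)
Your proof is correct and takes exactly the same route as the paper: apply Theorem \ref{th groebner general} to the box-shaped decreasing set $\mathcal{J}$ and observe that $\mathcal{B}_\mathcal{J} = \{r_1\mathbf{e}_1, \ldots, r_m\mathbf{e}_m\}$, from which the decomposition and degree bounds follow immediately. Nothing to add.
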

\begin{proof}
It follows from Theorem \ref{th groebner general} observing that, if $ \mathcal{J} = \left\lbrace (j_1, j_2, \ldots, j_m) \in \mathbb{N}^m : j_k < r_k, \right. $ $ \left. k = 1,2, \ldots, m \right\rbrace $, then
$$ \mathcal{B}_\mathcal{J} = \left\lbrace r_j \mathbf{e}_j \in \mathbb{N}^m : j = 1,2, \ldots, m \right\rbrace, $$
where $ \mathbf{e}_1, \mathbf{e}_2, \ldots, \mathbf{e}_m \in \mathbb{N}^m $ are the vectors in the canonical basis.
\end{proof}

\subsection{Hermite interpolation over grids with consecutive derivatives}

In the appendix we show that the evaluation map (Definition \ref{def evaluation map}) is surjective. This has been used to prove Theorem \ref{th footprint bound}. In this subsection, we see that the combinatorial Nullstellensatz (Theorem \ref{th combinatorial nullstellensatz general}) implies that the evaluation map over the whole grid $ S $, with consecutive derivatives, is an isomorphism when taking an appropriate domain. More concretely, we show the existence and uniqueness of Hermite interpolating polynomials over $ S $ with derivatives in $ \mathcal{J} $ when choosing monomials in $ \mathcal{J}_S $. Finding appropriate sets of points, derivatives and polynomials to guarantee existence and uniqueness of Hermite interpolating polynomials has been extensively studied \cite{gasca, kopparty-multiplicity, lorentz}. The next result is new to the best of our knowledge:

\begin{theorem} \label{th uniqueness hermite}
Given a decreasing finite set $ \mathcal{J} \subseteq \mathbb{N}^m $, the evaluation map in Definition \ref{def evaluation map} for the finite set $ S = S_1 \times S_2 \times \cdots \times S_m $, defined as
$$ {\rm Ev} : \left\langle \mathcal{J}_S \right\rangle_\mathbb{F} \longrightarrow \mathbb{F}^{\# S \cdot \# \mathcal{J}}, $$
is a vector space isomorphism. In other words, for all $ b_{\mathbf{j}, \mathbf{a}} \in \mathbb{F} $, where $ \mathbf{j} \in \mathcal{J} $ and $ \mathbf{a} \in S $, there exists a unique polynomial of the form 
$$ F(\mathbf{x}) = \sum_{\mathbf{i} \in \mathcal{J}_S} F_\mathbf{i} \mathbf{x}^\mathbf{i} \in \mathbb{F}[\mathbf{x}], $$
where $ F_\mathbf{i} \in \mathbb{F} $ for all $ \mathbf{i} \in \mathcal{J}_S $, such that $ F^{(\mathbf{j})}(\mathbf{a}) = b_{\mathbf{j}, \mathbf{a}} $, for all $ \mathbf{j} \in \mathcal{J} $ and all $ \mathbf{a} \in S $.
\end{theorem}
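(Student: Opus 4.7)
The plan is to reduce the claim to an injectivity statement, since both sides of the map have the same finite $\mathbb{F}$-dimension. The codomain is visibly $\#S \cdot \#\mathcal{J}$-dimensional, while the domain $\langle \mathcal{J}_S \rangle_\mathbb{F}$ has dimension $\# \mathcal{J}_S = \# S \cdot \# \mathcal{J}$ by Lemma \ref{lemma auxiliary applications footprint 1}. Because $ {\rm Ev} $ is $\mathbb{F}$-linear between finite-dimensional spaces of equal dimension, proving that it is injective will automatically yield that it is an isomorphism, which is exactly the Hermite existence-and-uniqueness statement.

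For the injectivity step I would argue by contradiction, using the combinatorial Nullstellensatz of Theorem \ref{th combinatorial nullstellensatz general}. Suppose some nonzero polynomial $F(\mathbf{x}) = \sum_{\mathbf{i} \in \mathcal{J}_S} F_\mathbf{i} \mathbf{x}^\mathbf{i}$ lies in the kernel, i.e.\ $F^{(\mathbf{j})}(\mathbf{a}) = 0$ for every $\mathbf{j} \in \mathcal{J}$ and every $\mathbf{a} \in S$. Pick any monomial ordering and set $\mathbf{x}^\mathbf{i} = {\rm LM}(F(\mathbf{x}))$. Since every monomial of $F$ has exponent in $\mathcal{J}_S$ by construction of the domain, in particular $\mathbf{i} \in \mathcal{J}_S$. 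Theorem \ref{th combinatorial nullstellensatz general} then produces some $\mathbf{s} \in S$ and $\mathbf{j} \in \mathcal{J}$ with $F^{(\mathbf{j})}(\mathbf{s}) \neq 0$, contradicting the assumption that $F$ is in the kernel. Hence $F = 0$, and injectivity follows.

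I do not anticipate a serious obstacle here, since every ingredient is already in place: the cardinality identity $\# \mathcal{J}_S = \# S \cdot \# \mathcal{J}$, the combinatorial Nullstellensatz, and the equality of finite dimensions. The only thing to be careful about is that the leading monomial of $F$ really does lie in $\mathcal{J}_S$, which is immediate from the way $F$ is expressed. An essentially equivalent, alternative route would be to invoke Theorem \ref{th groebner general} to identify $\mathbb{F}[\mathbf{x}]/I(S;\mathcal{J}) \cong \langle \mathcal{J}_S \rangle_\mathbb{F}$ as $\mathbb{F}$-vector spaces and then to combine this with the surjectivity of $ {\rm Ev} $ on the full polynomial ring provided by Lemma \ref{lemma hermite}; but the injectivity argument via the combinatorial Nullstellensatz is the shorter of the two.
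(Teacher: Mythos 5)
Your proposal is correct and follows essentially the same route as the paper: injectivity is deduced from Theorem \ref{th combinatorial nullstellensatz general} applied to the leading monomial of a hypothetical nonzero kernel element, and equality of finite $\mathbb{F}$-dimensions via Lemma \ref{lemma auxiliary applications footprint 1} then upgrades injectivity to an isomorphism. The paper's proof is simply a terser statement of this same argument.
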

\begin{proof}
The map is one to one by Theorem \ref{th combinatorial nullstellensatz general}, and both vector spaces have the same dimension over $ \mathbb{F} $ by Lemma \ref{lemma auxiliary applications footprint 1}, hence the map is a vector space isomorphism.
\end{proof}

\begin{remark}
Observe that we may similarly prove that the following two maps are vector space isomorphisms:
$$ \left\langle \mathcal{J}_S \right\rangle_\mathbb{F} \stackrel{\rho}{\longrightarrow} \mathbb{F}[\mathbf{x}] / I(S; \mathcal{J}) \stackrel{{\rm Ev}}{\longrightarrow} \mathbb{F}^{\# S \cdot \# \mathcal{J}}, $$
where $ \rho $ is the projection to the quotient ring. We may then extend the notion of \textit{reduction} of a polynomial as follows (see \cite[Section 3.1]{clark} and \cite[Section 6.3]{gasca}, for instance): Given $ F(\mathbf{x}) \in \mathbb{F}[\mathbf{x}] $, we define its reduction over the set $ S $ with derivatives in $ \mathcal{J} $ as
$$ G(\mathbf{x}) = \rho^{-1} \left( F(\mathbf{x}) + I(S; \mathcal{J}) \right) . $$
\end{remark}

As an immediate consequence, we obtain the following result on Hermite interpolation with weighted multiplicities:

\begin{corollary} \label{cor uniqueness hermite weighted}
For every vector of positive weights $ \mathbf{w} = (w_1, w_2, \ldots, w_m) \in \mathbb{N}_+^m $, every positive integer $ r \in \mathbb{N}_+ $, and elements $ b_{\mathbf{j}, \mathbf{a}} \in \mathbb{F} $, for $ \mathbf{j} \in \mathbb{N}^m $ with $ \mid \mathbf{j} \mid_\mathbf{w} < r $ and for $ \mathbf{a} \in S $, there exists a unique polynomial of the form 
$$ F(\mathbf{x}) = \sum_{\mathbf{i} \in \mathbb{N}^m} F_\mathbf{i} \mathbf{x}^\mathbf{i}, $$
where $ F_\mathbf{i} \in \mathbb{F} $ for all $ \mathbf{i} = (i_1, i_2, \ldots, i_m) \in \mathbb{N}^m $, and $ F_\mathbf{i} = 0 $ whenever 
$$ \sum_{j=1}^m \left\lfloor \frac{i_j}{\# S_j} \right\rfloor w_j \geq r, $$ 
such that $ F^{(\mathbf{j})}(\mathbf{a}) = b_{\mathbf{j}, \mathbf{a}} $, for all $ \mathbf{j} \in \mathbb{N}^m $ with $ \mid \mathbf{j} \mid_\mathbf{w} < r $ and all $ \mathbf{a} \in S $.
\end{corollary}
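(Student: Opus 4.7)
The plan is to deduce the corollary as a direct specialization of Theorem \ref{th uniqueness hermite} to the decreasing finite set
$$\mathcal{J} = \{\mathbf{i} \in \mathbb{N}^m : \mid \mathbf{i} \mid_\mathbf{w} < r\},$$
which is decreasing because $\mathbf{j} \preceq \mathbf{i}$ and positivity of the weights imply $\mid \mathbf{j} \mid_\mathbf{w} \leq \mid \mathbf{i} \mid_\mathbf{w}$. The two ingredients needed are (i) the description of $\mathcal{J}_S$ for this particular $\mathcal{J}$ and (ii) the isomorphism statement in Theorem \ref{th uniqueness hermite}.

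First I would invoke the Remark at the end of Subsection \ref{subsec interpretation bound}, which identifies
$$\mathcal{J}_S = \left\{(i_1, i_2, \ldots, i_m) \in \mathbb{N}^m : \sum_{j=1}^m \left\lfloor \frac{i_j}{\# S_j}\right\rfloor w_j < r\right\}.$$
Consequently, the condition $F_\mathbf{i} = 0 $ whenever $\sum_{j=1}^m \lfloor i_j/\# S_j\rfloor w_j \geq r$ is exactly the condition that $F(\mathbf{x}) \in \langle \mathcal{J}_S \rangle_\mathbb{F}$, since every $\mathbf{i}$ with this sum $\geq r$ lies outside $\mathcal{J}_S$, while the remaining multiindices span $\langle \mathcal{J}_S \rangle_\mathbb{F}$.

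Then I would apply Theorem \ref{th uniqueness hermite} directly with the above $\mathcal{J}$: the evaluation map
$$ {\rm Ev} : \left\langle \mathcal{J}_S \right\rangle_\mathbb{F} \longrightarrow \mathbb{F}^{\# S \cdot \# \mathcal{J}} $$
is a vector space isomorphism, so for the prescribed data $(b_{\mathbf{j},\mathbf{a}})_{\mathbf{j} \in \mathcal{J},\, \mathbf{a} \in S}$ there is a unique preimage $F(\mathbf{x}) \in \langle \mathcal{J}_S \rangle_\mathbb{F}$. Combining this with the translation from the previous paragraph yields the statement of the corollary.

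There is no serious obstacle: the only step requiring verification is the translation between the support condition on the coefficients and membership in $\langle \mathcal{J}_S \rangle_\mathbb{F}$, and that is immediate from the explicit form of $\mathcal{J}_S$ recalled above. The result is therefore a one-line deduction once Theorem \ref{th uniqueness hermite} and the Remark are in hand.
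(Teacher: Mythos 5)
Your proposal is correct and matches the paper's intent exactly: the paper presents this corollary as an ``immediate consequence'' of Theorem \ref{th uniqueness hermite} without spelling out a proof, and the argument it implicitly relies on is precisely the specialization to $\mathcal{J} = \{\mathbf{i} : \mid\mathbf{i}\mid_\mathbf{w} < r\}$ combined with the explicit description of $\mathcal{J}_S$ from the Remark at the end of Subsection \ref{subsec interpretation bound}. You have simply made explicit the translation between the coefficient-support condition and membership in $\langle \mathcal{J}_S \rangle_\mathbb{F}$, which is exactly what one would write out.
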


We conclude with the case of multiindices bounded on each coordinate separately:

\begin{corollary} \label{cor uniqueness hermite coordintate}
Given $ (r_1, r_2, \ldots, r_m) \in \mathbb{N}_+^m $ and given elements $ b_{\mathbf{j}, \mathbf{a}} \in \mathbb{F} $, for $ \mathbf{j} = (j_1, j_2, \ldots, j_m) \in \mathbb{N}^m $ with $ j_k < r_k $, for all $ k = 1,2, \ldots, m $, and for $ \mathbf{a} \in S $, there exists a unique polynomial of the form 
$$ F(\mathbf{x}) = \sum_{i_1 = 0}^{r_1\# S_1 - 1} \sum_{i_2 = 0}^{r_2\# S_2 - 1} \cdots \sum_{i_m = 0}^{r_m\# S_m - 1} F_\mathbf{i} \mathbf{x}^\mathbf{i}, $$
such that $ F^{(\mathbf{j})}(\mathbf{a}) = b_{\mathbf{j}, \mathbf{a}} $, for all $ \mathbf{j} = (j_1, j_2, \ldots, j_m) \in \mathbb{N}^m $ with $ j_k < r_k $, for all $ k = 1,2, \ldots, m $, and all $ \mathbf{a} \in S $.
\end{corollary}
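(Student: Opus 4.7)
The plan is to obtain this corollary as a direct specialization of Theorem \ref{th uniqueness hermite} to the decreasing finite set $ \mathcal{J} = \{ (j_1, j_2, \ldots, j_m) \in \mathbb{N}^m : j_k < r_k, \forall k = 1,2,\ldots,m \} $. First I would verify that this $ \mathcal{J} $ is indeed decreasing with respect to the coordinate-wise ordering: if $ (j_1, \ldots, j_m) \in \mathcal{J} $ and $ (j'_1, \ldots, j'_m) \preceq (j_1, \ldots, j_m) $, then $ j'_k \leq j_k < r_k $ for every $ k $, hence $ (j'_1, \ldots, j'_m) \in \mathcal{J} $. Clearly $ \mathcal{J} $ is finite, so the hypotheses of Theorem \ref{th uniqueness hermite} are satisfied.

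Second, I would invoke the explicit description of $ \mathcal{J}_S $ provided in the final remark of Subsection \ref{subsec interpretation bound}, which for this particular $ \mathcal{J} $ gives
$$ \mathcal{J}_S = \left\lbrace (i_1, i_2, \ldots, i_m) \in \mathbb{N}^m : i_j < r_j \# S_j, \ j = 1, 2, \ldots, m \right\rbrace. $$
Consequently, the $ \mathbb{F} $-vector space $ \left\langle \mathcal{J}_S \right\rangle_\mathbb{F} $ is precisely the space of polynomials of the form
$$ F(\mathbf{x}) = \sum_{i_1 = 0}^{r_1\# S_1 - 1} \sum_{i_2 = 0}^{r_2\# S_2 - 1} \cdots \sum_{i_m = 0}^{r_m\# S_m - 1} F_\mathbf{i} \mathbf{x}^\mathbf{i}, $$
matching exactly the family appearing in the statement.

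Finally, Theorem \ref{th uniqueness hermite} asserts that the evaluation map $ {\rm Ev}: \left\langle \mathcal{J}_S \right\rangle_\mathbb{F} \longrightarrow \mathbb{F}^{\# S \cdot \# \mathcal{J}} $ is a vector space isomorphism. Hence any prescribed collection of values $ (b_{\mathbf{j}, \mathbf{a}})_{\mathbf{j} \in \mathcal{J}, \mathbf{a} \in S} $ in $ \mathbb{F}^{\# S \cdot \# \mathcal{J}} $ has a unique preimage $ F(\mathbf{x}) $ of the stated form, which is the desired interpolating polynomial. There is no genuine obstacle: once the coordinate-wise $ \mathcal{J}_S $ is identified via the remark, the corollary is a direct reformulation of Theorem \ref{th uniqueness hermite}.
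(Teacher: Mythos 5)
Your argument is correct and is exactly the intended derivation: the corollary is the specialization of Theorem~\ref{th uniqueness hermite} to the decreasing set $\mathcal{J} = \{\mathbf{j} : j_k < r_k, \forall k\}$, combined with the explicit description of $\mathcal{J}_S$ given in the final remark of Subsection~\ref{subsec interpretation bound}. The paper leaves this proof implicit, and your write-up matches that implicit reasoning.
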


\subsection{Evaluation codes with consecutive derivatives}

In this subsection, we extend the notion of \textit{evaluation code} from the theory of error-correcting codes (see \cite[Section 2]{weightedRM} and \cite[Section 4.1]{handbook}, for instance) to evaluation codes with consecutive derivatives. By doing so, we generalize \textit{multiplicity codes} \cite{multiplicitycodes}, which have been shown to achieve good parameters in decoding, local decoding and list decoding \cite{kopparty-multiplicity, multiplicitycodes}. We compute the dimensions of the new codes and give a lower bound on their minimum Hamming distance.

\begin{definition} \label{def evaluation codes with derivatives}
Given a decreasing finite set $ \mathcal{J} \subseteq \mathbb{N}^m $ and a set of monomials $ \mathcal{M} \subseteq \mathcal{J}_S $, we define the $ \mathbb{F} $-linear code (that is, the $ \mathbb{F} $-linear vector space)
$$ \mathcal{C}(S, \mathcal{M}, \mathcal{J}) = {\rm Ev} \left( \left\langle \mathcal{M} \right\rangle _\mathbb{F} \right) \subseteq \mathbb{F}^{\# S \cdot \# \mathcal{J}}, $$
where $ {\rm Ev} $ is the evaluation map from Definition \ref{def evaluation map}.
\end{definition}

As in \cite{multiplicitycodes}, we will consider these codes over the alphabet $ \mathbb{F}^{\# \mathcal{J}} $, that is, each evaluation $ \left( F^{(\mathbf{i})} \left( \mathbf{a} \right) \right)_{\mathbf{i} \in \mathcal{J}} \in \mathbb{F}^{\# \mathcal{J}} $, for $ \mathbf{a} \in S $, constitutes one symbol of the alphabet. Thus each codeword has length $ \# S $ over this alphabet. This leads to the following definition of minimum Hamming distance of an $ \mathbb{F} $-linear code:

\begin{definition}
Given an $ \mathbb{F} $-linear code $ \mathcal{C} \subseteq \left( \mathbb{F}^{\# \mathcal{J}} \right) ^{\# S} $, we define its minimum Hamming distance as
$$ d_H(\mathcal{C}) = \min \left\lbrace {\rm wt_H}(\mathbf{c}) : \mathbf{c} \in \mathcal{C}, \mathbf{c} \neq \mathbf{0} \right\rbrace, $$
where, for any $ \mathbf{c} \in \left( \mathbb{F}^{\# \mathcal{J}} \right) ^{\# S} $, $ {\rm wt_H}(\mathbf{c}) $ denotes the number of its non-zero components over the alphabet $ \mathbb{F}^{\# \mathcal{J}} $.
\end{definition}

As a consequence of Theorem \ref{th uniqueness hermite}, we may exactly compute the dimensions of the codes in Definition \ref{def evaluation codes with derivatives} and give a lower bound on their minimum Hamming distance:

\begin{corollary}
The code in Definition \ref{def evaluation codes with derivatives} satisfies that
$$ \dim_\mathbb{F}(\mathcal{C}(S, \mathcal{M}, \mathcal{J})) = \# \mathcal{M}, \quad \textrm{and} $$
$$ d_H(\mathcal{C}(S, \mathcal{M}, \mathcal{J})) \geq \left\lceil \frac{\min \left\lbrace \# \Delta( \left\langle F(\mathbf{x}) \right\rangle _\mathcal{J}) : F(\mathbf{x}) \in \left\langle \mathcal{M} \right\rangle _\mathbb{F} \right\rbrace}{\# \mathcal{J}} \right\rceil . $$
\end{corollary}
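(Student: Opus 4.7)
The proof splits into a dimension claim and a distance claim, each a short application of an earlier result.

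For the dimension, I would note that since $\mathcal{M} \subseteq \mathcal{J}_S$, the space $\langle \mathcal{M}\rangle_\mathbb{F}$ is a subspace of $\langle \mathcal{J}_S\rangle_\mathbb{F}$. Theorem~\ref{th uniqueness hermite} establishes that ${\rm Ev}$ is a vector-space isomorphism from $\langle \mathcal{J}_S\rangle_\mathbb{F}$ onto $\mathbb{F}^{\# S \cdot \# \mathcal{J}}$, so its restriction to the subspace $\langle \mathcal{M}\rangle_\mathbb{F}$ is injective. Since the monomials in $\mathcal{M}$ are $\mathbb{F}$-linearly independent, $\dim_\mathbb{F} \langle \mathcal{M}\rangle_\mathbb{F} = \# \mathcal{M}$, and consequently $\dim_\mathbb{F} \mathcal{C}(S, \mathcal{M}, \mathcal{J}) = \dim_\mathbb{F} {\rm Ev}(\langle \mathcal{M}\rangle_\mathbb{F}) = \# \mathcal{M}$.

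For the minimum distance, the plan is to apply Theorem~\ref{th footprint bound} to the principal ideal generated by an arbitrary non-zero element of $\langle \mathcal{M}\rangle_\mathbb{F}$. Fix a non-zero $F \in \langle \mathcal{M}\rangle_\mathbb{F}$ and look at the codeword $\mathbf{c} = {\rm Ev}(F) \in (\mathbb{F}^{\# \mathcal{J}})^{\# S}$. Unpacking the definitions, the symbol of $\mathbf{c}$ at position $\mathbf{a} \in S$ is the vector $(F^{(\mathbf{i})}(\mathbf{a}))_{\mathbf{i} \in \mathcal{J}}$, which vanishes as an element of the alphabet $\mathbb{F}^{\# \mathcal{J}}$ precisely when $\mathbf{a} \in \mathcal{V}_\mathcal{J}(\langle F \rangle)$. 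Hence ${\rm wt}_H(\mathbf{c}) = \# S - \# \mathcal{V}_\mathcal{J}(\langle F \rangle)$. Theorem~\ref{th footprint bound}, invoked with $I = \langle F \rangle$, yields $\# \mathcal{V}_\mathcal{J}(\langle F \rangle) \cdot \# \mathcal{J} \leq \# \Delta(\langle F \rangle_\mathcal{J})$, which translates directly into a lower bound on ${\rm wt}_H(\mathbf{c})$ in terms of $\# \Delta(\langle F \rangle_\mathcal{J})/\# \mathcal{J}$. Taking the minimum over all non-zero $F \in \langle \mathcal{M}\rangle_\mathbb{F}$ then produces the stated universal lower bound on $d_H(\mathcal{C}(S,\mathcal{M},\mathcal{J}))$, where the ceiling accommodates the integrality of the Hamming weight.

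The only step meriting care is the translation between block-zero positions of a codeword and multiplicity-$\mathcal{J}$ zeros of $\langle F \rangle$; this is a direct verification from Definitions~\ref{def general multiplicity}, \ref{def evaluation map} and~\ref{def evaluation codes with derivatives}, and is not delicate. Beyond that, the proof is essentially a bookkeeping exercise combining Theorems~\ref{th uniqueness hermite} and~\ref{th footprint bound}, and I foresee no substantive obstacle.
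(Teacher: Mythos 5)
Your dimension argument is correct and is the intended one: $\mathcal{M} \subseteq \mathcal{J}_S$, so by Theorem~\ref{th uniqueness hermite} the restriction of $\mathrm{Ev}$ to $\langle\mathcal{M}\rangle_\mathbb{F}$ is injective, and linear independence of monomials gives $\dim_\mathbb{F}\mathcal{C} = \#\mathcal{M}$.

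For the distance bound there is a genuine gap, and it is at the very step you wave through. You correctly identify that for a nonzero $F\in\langle\mathcal{M}\rangle_\mathbb{F}$ the symbol of $\mathrm{Ev}(F)$ at position $\mathbf{a}$ vanishes exactly when $\mathbf{a}\in\mathcal{V}_\mathcal{J}(\langle F\rangle)$ (the nontrivial direction uses the Leibniz formula of Lemma~\ref{lemma leibniz formula} together with the fact that $\mathcal{J}$ is decreasing, exactly as in Proposition~\ref{prop properties of ideals of zeros}), so ${\rm wt}_H(\mathrm{Ev}(F)) = \#S - \#\mathcal{V}_\mathcal{J}(\langle F\rangle)$. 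Theorem~\ref{th footprint bound} then gives
$$ {\rm wt}_H(\mathrm{Ev}(F)) \;\geq\; \#S - \frac{\#\Delta(\langle F\rangle_\mathcal{J})}{\#\mathcal{J}} \;=\; \frac{\#S\cdot\#\mathcal{J} - \#\Delta(\langle F\rangle_\mathcal{J})}{\#\mathcal{J}}, $$
and minimizing over nonzero $F$ yields a lower bound on $d_H$ whose numerator is $\#S\cdot\#\mathcal{J} - \#\Delta(\langle F\rangle_\mathcal{J}) = \#\bigl(\mathcal{J}_S \setminus \Delta(\langle F\rangle_\mathcal{J})\bigr)$ (compare with the first displayed inequality in the proof of Theorem~\ref{th generalization demillo lipton et al}). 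This is \emph{not} the quantity appearing in the corollary as printed, where the numerator is $\#\Delta(\langle F\rangle_\mathcal{J})$ itself. Note that these cannot coincide in general: a larger footprint $\#\Delta(\langle F\rangle_\mathcal{J})$ corresponds to a codeword ${\rm Ev}(F)$ with \emph{more} zero blocks and hence \emph{smaller} weight, so a lower bound on $d_H$ should decrease, not increase, in $\#\Delta$. Your phrase that the footprint bound ``translates directly into a lower bound $\ldots$ in terms of $\#\Delta(\langle F\rangle_\mathcal{J})/\#\mathcal{J}$'' obscures this sign reversal, and the conclusion ``produces the stated universal lower bound'' is simply asserted rather than derived. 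The bound that actually comes out of your argument replaces the numerator by $\#S\cdot\#\mathcal{J} - \#\Delta(\langle F\rangle_\mathcal{J})$; the corollary as written appears to contain a misprint, and you should have flagged that your computation does not land on the printed formula rather than claiming that it does.
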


\begin{remark} \label{remark weighted multiplicity codes}
Given a vector of positive weights $ \mathbf{w} = (w_1, w_2, \ldots, w_m) \in \mathbb{N}_+^m $, a positive integer $ r \in \mathbb{N}_+ $, and a set of monomials 
$$ \mathcal{M} \subseteq \left\lbrace x_1^{i_1} x_2^{i_2} \cdots x_m^{i_m} : \sum_{j=1}^m \left\lfloor \frac{i_j}{\# S_j} \right\rfloor w_j < r \right\rbrace, $$
we may define, as a particular case of the codes in Definition \ref{def evaluation codes with derivatives}, the corresponding weighted multiplicity code as the $ \mathbb{F} $-linear code 
$$ \mathcal{C}(S, \mathcal{M}, \mathbf{w}, r) = {\rm Ev} \left( \left\langle \mathcal{M} \right\rangle _\mathbb{F} \right) \subseteq \left( \mathbb{F}^{{\rm B}(\mathbf{w} ; r)} \right)^{\# S} . $$
Observe that weighted multiplicity codes contain as particular cases classical Reed-Muller codes (see \cite[Section 13.2]{pless}), by choosing $ \mathbf{w} = (r,r, \ldots, r) $ for a given $ r \in \mathbb{N}_+ $, and classical multiplicity codes \cite{multiplicitycodes} by choosing $ \mathbf{w} = (1,1, \ldots, 1) $ and an arbitrary $ r \in \mathbb{N}_+ $. Therefore, choices of $ \mathbf{w} \in \mathbb{N}^m $ such that $ 1 \leq w_i \leq r $, for $ i =1,2, \ldots, m $, give codes with the same length but intermediate alphabet sizes between those of Reed-Muller and multiplicity codes. This has the extra flexibility (see \cite[Section 1.2]{multiplicitycodes}) of choosing alphabets of sizes $ \# \left( \mathbb{F}^{{\rm B}(\mathbf{w}; r)} \right) $ (whenever $ \mathbb{F} $ is finite), where
$$ 1 \leq {\rm B}(\mathbf{w}; r) \leq \binom{m+r-1}{m}. $$
\end{remark}

\subsection{Bounds by DeMillo, Lipton, Zippel, Alon and F{\"u}redi}

In this subsection, we obtain a weaker but more concise version of the bound (\ref{general bound}) for a single polynomial, which has as particular cases the bounds by DeMillo and Lipton \cite{demillo}, Zippel \cite[Theorem 1]{zippel-first}, \cite[Proposition 3]{zippel}, and Alon and F{\"u}redi \cite[Theorem 5]{alon-furedi}. We observe that Counterexample 7.4 in \cite{grid} shows that a straightforward extension of these bounds to standard multiplicities as in (\ref{eq in intro 1}) is not possible, in contrast with the bound given by Schwartz in \cite[Lemma 1]{schwartz}, which has been already extended in \cite[Lemma 8]{extensions}.

\begin{theorem} \label{th generalization demillo lipton et al}
For any decreasing finite set $ \mathcal{J} \subseteq \mathbb{N}^m $ and any polynomial $ F(\mathbf{x}) \in \mathbb{F}[\mathbf{x}] $, if $ \mathbf{x}^\mathbf{i} = {\rm LM}(F(\mathbf{x})) \in \mathcal{J}_S $, for some monomial ordering, then it holds that
\begin{equation}
\# \left( S \setminus \mathcal{V}_\mathcal{J}(F(\mathbf{x})) \right) \# \mathcal{J} \geq \# \left\lbrace \mathbf{j} \in \mathcal{J}_S : \mathbf{j} \succeq \mathbf{i} \right\rbrace.
\label{eq bound generalization of demillo-lipton}
\end{equation}
\end{theorem}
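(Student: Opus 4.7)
The plan is to derive the inequality directly from Theorem \ref{th footprint bound} applied to the principal ideal $\langle F(\mathbf{x}) \rangle$, together with the cardinality identity $\# \mathcal{J}_S = \# S \cdot \# \mathcal{J}$ of Lemma \ref{lemma auxiliary applications footprint 1}, and the containment of $\Delta(\langle F(\mathbf{x}) \rangle_\mathcal{J})$ inside $\{\mathbf{x}^\mathbf{k} : \mathbf{k} \in \mathcal{J}_S\}$ observed in Subsection \ref{subsec interpretation bound}. The proof is essentially a counting reformulation of the footprint bound.

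First I would apply Theorem \ref{th footprint bound} to $I = \langle F(\mathbf{x}) \rangle$, which yields
$$ \# \mathcal{V}_\mathcal{J}(F(\mathbf{x})) \cdot \# \mathcal{J} \leq \# \Delta(\langle F(\mathbf{x}) \rangle_\mathcal{J}). $$
Subtracting both sides from $\# S \cdot \# \mathcal{J} = \# \mathcal{J}_S$ immediately gives
$$ \# \bigl( S \setminus \mathcal{V}_\mathcal{J}(F(\mathbf{x})) \bigr) \cdot \# \mathcal{J} \geq \# \mathcal{J}_S - \# \Delta(\langle F(\mathbf{x}) \rangle_\mathcal{J}), $$
so the entire task reduces to bounding the right-hand side below by $\# \{ \mathbf{j} \in \mathcal{J}_S : \mathbf{j} \succeq \mathbf{i} \}$.

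For this last step, I would exploit that $\Delta(\langle F(\mathbf{x}) \rangle_\mathcal{J}) \subseteq \{\mathbf{x}^\mathbf{k} : \mathbf{k} \in \mathcal{J}_S\}$, so identifying monomials with their multiindices, the set $\mathcal{J}_S \setminus \Delta(\langle F(\mathbf{x}) \rangle_\mathcal{J})$ consists precisely of those $\mathbf{k} \in \mathcal{J}_S$ such that $\mathbf{x}^\mathbf{k}$ is divisible by some element of $\mathrm{LM}(\langle F(\mathbf{x}) \rangle_\mathcal{J})$. Since $F(\mathbf{x}) \in \langle F(\mathbf{x}) \rangle \subseteq \langle F(\mathbf{x}) \rangle_\mathcal{J}$, the monomial $\mathbf{x}^\mathbf{i} = \mathrm{LM}(F(\mathbf{x}))$ lies in $\mathrm{LM}(\langle F(\mathbf{x}) \rangle_\mathcal{J})$. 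Therefore every $\mathbf{j} \in \mathcal{J}_S$ with $\mathbf{j} \succeq \mathbf{i}$ (equivalently, every $\mathbf{j} \in \mathcal{J}_S$ with $\mathbf{x}^\mathbf{i} \mid \mathbf{x}^\mathbf{j}$) belongs to $\mathcal{J}_S \setminus \Delta(\langle F(\mathbf{x}) \rangle_\mathcal{J})$, giving the inclusion and hence
$$ \# \mathcal{J}_S - \# \Delta(\langle F(\mathbf{x}) \rangle_\mathcal{J}) \geq \# \{ \mathbf{j} \in \mathcal{J}_S : \mathbf{j} \succeq \mathbf{i} \}. $$
Chaining this with the previous inequality produces (\ref{eq bound generalization of demillo-lipton}).

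The argument is short and the main point of care is the bookkeeping in the second step: one must use the hypothesis $\mathbf{i} \in \mathcal{J}_S$ (so that the set counted on the right is actually a subset of $\mathcal{J}_S$) and the fact that $\Delta(\langle F(\mathbf{x}) \rangle_\mathcal{J}) \subseteq \{\mathbf{x}^\mathbf{k} : \mathbf{k} \in \mathcal{J}_S\}$, which is valid for any monomial ordering. Once these are in place, the counting is immediate.
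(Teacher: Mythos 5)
Your proof is correct and follows essentially the same route as the paper: apply the footprint bound (\ref{general bound}) to $\langle F(\mathbf{x}) \rangle$, rewrite via $\# \mathcal{J}_S = \# S \cdot \# \mathcal{J}$ (Lemma \ref{lemma auxiliary applications footprint 1}), and observe that every $\mathbf{j} \in \mathcal{J}_S$ with $\mathbf{j} \succeq \mathbf{i}$ lies outside $\Delta(\langle F(\mathbf{x}) \rangle_\mathcal{J})$ because $\mathbf{x}^\mathbf{i} \in \mathrm{LM}(\langle F(\mathbf{x}) \rangle_\mathcal{J})$. You merely make explicit the final counting step that the paper delegates to the discussion in Subsection \ref{subsec interpretation bound}.
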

\begin{proof}
First, from the bound (\ref{general bound}) and Lemma \ref{lemma auxiliary applications footprint 1}, we obtain that
\begin{equation}
 \# \left( S \setminus \mathcal{V}_\mathcal{J}(F(\mathbf{x})) \right) \# \mathcal{J} \geq \# S \# \mathcal{J} - \# \Delta( \left\langle F(\mathbf{x}) \right\rangle_\mathcal{J}) = \# \left( \mathcal{J}_S \setminus \Delta( \left\langle F(\mathbf{x}) \right\rangle_\mathcal{J}) \right) ,
\label{eq proof bound demillo-lipton}
\end{equation}
where we consider $ \Delta( \left\langle F(\mathbf{x}) \right\rangle_\mathcal{J}) \subseteq \mathbb{N}^m $ by abuse of notation. As explained in Subsection \ref{subsec interpretation bound}, we may lower bound $ \# \left( \mathcal{J}_S \setminus \Delta( \left\langle F(\mathbf{x}) \right\rangle_\mathcal{J}) \right) $ by the number of multiindices $ \mathbf{j} \in \mathcal{J}_S $ satisfying $ \mathbf{j} \succeq \mathbf{i} $, and we are done.
\end{proof}

The following consequence summarizes the results by DeMillo and Lipton \cite{demillo}, and Zippel \cite[Theorem 1]{zippel-first}, \cite[Proposition 3]{zippel}:

\begin{corollary}[\textbf{\cite{demillo, zippel-first, zippel}}]
Let $ F(\mathbf{x}) \in \mathbb{F}[\mathbf{x}] $ be such that its degree in the $ j $-th variable is $ d_j \in \mathbb{N} $, for $ j = 1,2, \ldots, m $. If $ d_j < \# S_j $, for $ j = 1,2, \ldots, m $, then the number of non-zeros in $ S $ of $ F(\mathbf{x}) $ is at least
$$ \prod_{j=1}^m \left( \# S_j - d_j \right). $$
\end{corollary}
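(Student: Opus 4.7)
The plan is to derive this as the $\mathcal{J} = \{\mathbf{0}\}$ instance of Theorem \ref{th generalization demillo lipton et al}. With this choice, $\mathcal{V}_{\{\mathbf{0}\}}(F(\mathbf{x}))$ is exactly the set of zeros of $F(\mathbf{x})$ in the grid $S$, so the quantity $\#(S \setminus \mathcal{V}_{\{\mathbf{0}\}}(F(\mathbf{x})))$ on the left of (\ref{eq bound generalization of demillo-lipton}) is precisely the number of non-zeros we want to bound from below, and $\#\mathcal{J} = 1$.

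First I would compute $\mathcal{J}_S$ in this case. Either directly from the definition or from Lemma \ref{lemma auxiliary applications footprint 0}, one finds
$$ \mathcal{J}_S = \{(i_1, i_2, \ldots, i_m) \in \mathbb{N}^m : 0 \leq i_j < \# S_j, \ j = 1,2,\ldots,m\}, $$
since $\mathbf{i} \nsucceq \#S_j \mathbf{e}_j$ for every $j$ is equivalent to $i_j < \#S_j$ for every $j$, and these are the minimal constraints coming from $\mathbf{r} \notin \mathcal{J} = \{\mathbf{0}\}$.

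Next I would pick any monomial ordering (for instance the lexicographic one) and let $\mathbf{x}^{\mathbf{i}} = {\rm LM}(F(\mathbf{x}))$ with $\mathbf{i} = (i_1, \ldots, i_m)$. Because the degree of $F(\mathbf{x})$ in the variable $x_j$ is $d_j$, every monomial occurring in $F(\mathbf{x})$ has $x_j$-exponent at most $d_j$; in particular $i_j \leq d_j < \#S_j$ for all $j$. Hence $\mathbf{i} \in \mathcal{J}_S$, so the hypothesis of Theorem \ref{th generalization demillo lipton et al} is satisfied.

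Finally, I would observe that the set $\{\mathbf{j} \in \mathcal{J}_S : \mathbf{j} \succeq \mathbf{i}\}$ is simply the box $\prod_{k=1}^m [i_k, \#S_k)$, which has cardinality $\prod_{k=1}^m (\#S_k - i_k)$. Combining this with the inequality from Theorem \ref{th generalization demillo lipton et al} and using $i_k \leq d_k$ gives
$$ \#(S \setminus \mathcal{V}(F(\mathbf{x}))) \geq \prod_{k=1}^m (\#S_k - i_k) \geq \prod_{k=1}^m (\#S_k - d_k), $$
which is the desired bound. No genuine obstacle is expected: the only thing to be careful about is to note that any monomial ordering suffices because the coordinate-wise bound $i_j \leq d_j$ on the leading monomial's exponents follows purely from the definition of degree in each variable and does not depend on the ordering.
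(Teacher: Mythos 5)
Your proof is correct and follows essentially the same route as the paper: both specialize Theorem \ref{th generalization demillo lipton et al} to $\mathcal{J} = \{\mathbf{0}\}$, identify $\mathcal{J}_S$ with the box $\prod_j [0, \#S_j)$, use the coordinate-wise bound $i_j \leq d_j < \#S_j$ on the leading monomial's exponents (valid for any monomial ordering), and count the box $\{\mathbf{j} \in \mathcal{J}_S : \mathbf{j} \succeq \mathbf{i}\}$. The paper's proof is just terser, stating these ingredients without the explicit verification you supply.
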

\begin{proof}
The result is the particular case $ \mathcal{J} = \{ \mathbf{0} \} $ of the previous theorem using any monomial ordering and the facts that $ \mathcal{J}_S = S $ and $ i_j \leq d_j $, for $ j = 1,2, \ldots, m $. 
\end{proof}

The following is a similar bound due to Alon and F{\"u}redi \cite[Theorem 5]{alon-furedi}:

\begin{corollary}[\textbf{\cite{alon-furedi}}]
Let $ F(\mathbf{x}) \in \mathbb{F}[\mathbf{x}] $. If not all points in $ S $ are zeros of $ F(\mathbf{x}) $, then the number of its non-zeros in $ S $ is at least
$$ \min \left\lbrace \prod_{j=1}^m y_j : 1 \leq y_j \leq \# S_j, \sum_{j=1}^m y_j \geq \sum_{j=1}^m \# S_j - \deg(F(\mathbf{x})) \right\rbrace. $$
\end{corollary}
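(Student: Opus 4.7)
The plan is to derive this from Theorem \ref{th generalization demillo lipton et al} applied to $\mathcal{J} = \{\mathbf{0}\}$, in which case $\# \mathcal{J} = 1$ and $\mathcal{J}_S = \prod_{j=1}^m \{0, 1, \ldots, \# S_j - 1\}$. A direct application to $F(\mathbf{x})$ is not legal, however, because the leading monomial of $F(\mathbf{x})$ may have some exponent $\geq \# S_j$ and hence fail to lie in $\mathcal{J}_S$. The first task is therefore to replace $F(\mathbf{x})$ by a reduction modulo the vanishing ideal of $S$ while keeping control of the total degree.

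First I would fix a graded monomial ordering (for instance, graded lexicographic). By Theorem \ref{th groebner general} applied to $\mathcal{J} = \{\mathbf{0}\}$, the family $\{G_1(x_1), G_2(x_2), \ldots, G_m(x_m)\}$ is a Gr\"obner basis of $I(S; \{\mathbf{0}\})$ with respect to this ordering. Dividing $F(\mathbf{x})$ by this basis yields a representation $F(\mathbf{x}) = \sum_{j=1}^m H_j(\mathbf{x}) G_j(x_j) + \widetilde{F}(\mathbf{x})$, where every monomial of $\widetilde{F}(\mathbf{x})$ lies in $\mathcal{J}_S$. Because each $G_j(x_j)$ vanishes on $S_j$, we have $\widetilde{F}(\mathbf{s}) = F(\mathbf{s})$ for every $\mathbf{s} \in S$; in particular $\widetilde{F}(\mathbf{x}) \neq 0$ since $F(\mathbf{x})$ does not vanish on all of $S$. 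Using the graded ordering, each reduction step replaces a term $c \cdot \mathbf{x}^{\mathbf{k}}$ with $c \cdot \mathbf{x}^{\mathbf{k} - \# S_j \mathbf{e}_j}(x_j^{\# S_j} - G_j(x_j))$, which has total degree at most $|\mathbf{k}|$; therefore $\deg(\widetilde{F}(\mathbf{x})) \leq \deg(F(\mathbf{x}))$ and the leading monomial $\mathbf{x}^{\mathbf{i}}$ of $\widetilde{F}(\mathbf{x})$ satisfies $\mathbf{i} \in \mathcal{J}_S$ and $|\mathbf{i}| \leq \deg(F(\mathbf{x}))$.

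Second, applying Theorem \ref{th generalization demillo lipton et al} to $\widetilde{F}(\mathbf{x})$ with $\mathcal{J} = \{\mathbf{0}\}$ gives
$$ \# \bigl( S \setminus \mathcal{V}(\widetilde{F}(\mathbf{x})) \bigr) \geq \# \left\lbrace \mathbf{j} \in \mathcal{J}_S : \mathbf{j} \succeq \mathbf{i} \right\rbrace = \prod_{k=1}^m ( \# S_k - i_k ), $$
where the equality counts multiindices $\mathbf{j} = (j_1, \ldots, j_m)$ with $i_k \leq j_k < \# S_k$. Setting $y_k = \# S_k - i_k$, the inclusion $\mathbf{i} \in \mathcal{J}_S$ gives $1 \leq y_k \leq \# S_k$, while $|\mathbf{i}| \leq \deg(F(\mathbf{x}))$ yields $\sum_{k=1}^m y_k = \sum_{k=1}^m \# S_k - |\mathbf{i}| \geq \sum_{k=1}^m \# S_k - \deg(F(\mathbf{x}))$. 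Hence $\prod_{k=1}^m y_k$ is at least the minimum in the statement, and since $\mathcal{V}(F(\mathbf{x})) \cap S = \mathcal{V}(\widetilde{F}(\mathbf{x})) \cap S$, the conclusion follows.

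The main point to get right is the preliminary reduction: one must simultaneously guarantee that the leading monomial of the reduced polynomial lands in $\mathcal{J}_S$ and that its total degree does not exceed $\deg(F(\mathbf{x}))$. Both are ensured by choosing a graded monomial ordering and dividing by the Gr\"obner basis $\{G_1(x_1), \ldots, G_m(x_m)\}$ furnished by Theorem \ref{th groebner general}; after that, the rest is a direct rewriting of the bound in Theorem \ref{th generalization demillo lipton et al} in terms of the variables $y_k = \# S_k - i_k$.
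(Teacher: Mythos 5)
Your proof is correct and takes the same basic route as the paper's (one-line) proof: apply Theorem \ref{th generalization demillo lipton et al} with $\mathcal{J} = \{\mathbf{0}\}$ and set $y_j = \#S_j - i_j$. You go further, however, by spelling out the preliminary reduction of $F(\mathbf{x})$ modulo the Gr\"obner basis $\{G_1(x_1), \ldots, G_m(x_m)\}$ under a graded ordering, and this step is genuinely needed: Theorem \ref{th generalization demillo lipton et al} requires $ {\rm LM}(F(\mathbf{x})) \in \mathcal{J}_S $, and unlike the DeMillo--Lipton--Zippel corollary (to which the paper's proof defers with ``as in the previous corollary''), the Alon--F\"uredi statement carries no hypothesis such as $d_j < \#S_j$ guaranteeing this. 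Your verification that the reduction preserves values on $S$, keeps the leading monomial inside $\mathcal{J}_S$, and does not increase total degree (because each division step under a graded ordering is non-increasing in degree) is precisely what makes $(y_1, \ldots, y_m)$ feasible for the minimization, i.e.\ ensures $1 \leq y_j \leq \#S_j$ and $\sum_{j=1}^m y_j \geq \sum_{j=1}^m \#S_j - \deg(F(\mathbf{x}))$. So the proposal is a correct and more complete version of the paper's argument.
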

\begin{proof}
The result follows from Theorem \ref{th generalization demillo lipton et al} as in the previous corollary, taking any monomial ordering and considering $ y_j = \# S_j - i_j $, for $ j = 1,2, \ldots, m $.
\end{proof}

We omit the case of weighted multiplicities. In the next section, we will give an extension of the bound given by Schwartz in \cite[Lemma 1]{schwartz} to weighted multiplicities in the sense of (\ref{eq in intro 1}), which is stronger than the bound in Corollary \ref{corollary footprint for weighted multi} in some cases.

We conclude with the case of multiindices bounded on each coordinate separately:

\begin{corollary}
Let $ F(\mathbf{x}) \in \mathbb{F}[\mathbf{x}] $ with $ \mathbf{x}^\mathbf{i} = {\rm LM}(F(\mathbf{x})) $, $ \mathbf{i} = (i_1, i_2, \ldots, i_m) $, for some monomial ordering. If $ i_j < r_j \# S_j $, for $ j = 1,2, \ldots, m $, then the number $ N $ of elements $ \mathbf{s} \in S $ such that $ F^{(\mathbf{j})}(\mathbf{s}) \neq 0 $, for some $ \mathbf{j} = (j_1, j_2, \ldots, j_m) \in \mathbb{N}^m $ with $ j_k < r_k $, for all $ k = 1,2, \ldots, m $, satisfies
$$ N \cdot \prod_{j=1}^m r_j \geq \prod_{j=1}^m \left( r_j \# S_j - i_j \right). $$
\end{corollary}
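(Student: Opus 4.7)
My plan is to obtain this corollary as a direct specialisation of Theorem \ref{th generalization demillo lipton et al}. I would take $ \mathcal{J} = \{ \mathbf{j} = (j_1, j_2, \ldots, j_m) \in \mathbb{N}^m : j_k < r_k, \; k = 1,2,\ldots,m \} $, which is clearly a decreasing finite subset of $ \mathbb{N}^m $. With this choice, the set $ S \setminus \mathcal{V}_\mathcal{J}(F(\mathbf{x})) $ consists exactly of those $ \mathbf{s} \in S $ for which some $ F^{(\mathbf{j})}(\mathbf{s}) $ with $ \mathbf{j} \in \mathcal{J} $ is non-zero, so its cardinality equals the quantity $ N $ in the statement.

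Next, I would verify the hypothesis of Theorem \ref{th generalization demillo lipton et al}, namely that $ \mathbf{i} \in \mathcal{J}_S $. According to the final remark of Subsection \ref{subsec interpretation bound}, for this $ \mathcal{J} $ one has
$$ \mathcal{J}_S = \left\{ (i_1, i_2, \ldots, i_m) \in \mathbb{N}^m : i_j < r_j \# S_j, \; j=1,2,\ldots,m \right\}, $$
so the assumption $ i_j < r_j \# S_j $ for all $ j $ is literally $ \mathbf{i} \in \mathcal{J}_S $, as required. The condition $ \mathbf{x}^\mathbf{i} = {\rm LM}(F(\mathbf{x})) $ is assumed directly.

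Finally, I would evaluate the two cardinalities appearing in (\ref{eq bound generalization of demillo-lipton}). Since $ \mathcal{J} $ is itself a Cartesian box, $ \# \mathcal{J} = \prod_{j=1}^m r_j $. Moreover, the set $ \{ \mathbf{j} \in \mathcal{J}_S : \mathbf{j} \succeq \mathbf{i} \} $ equals the rectangular product $ \prod_{k=1}^m \{ i_k, i_k+1, \ldots, r_k \# S_k - 1 \} $, whose cardinality is $ \prod_{k=1}^m (r_k \# S_k - i_k) $. Substituting both counts into (\ref{eq bound generalization of demillo-lipton}) yields the stated inequality $ N \cdot \prod_{j=1}^m r_j \geq \prod_{j=1}^m (r_j \# S_j - i_j) $.

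I do not anticipate any real obstacle; the argument is entirely parallel to the specialisations of Theorem \ref{th generalization demillo lipton et al} already used to derive the DeMillo--Lipton--Zippel and Alon--F\"uredi bounds, the only change being that the box dimensions become $ r_j \# S_j $ rather than $ \# S_j $ because we are accounting for an $ r_1 \times r_2 \times \cdots \times r_m $ box of consecutive Hasse derivatives at each point. The only step deserving a quick sanity check is the explicit rectangular shape of $ \mathcal{J}_S $ and of its upper corner above $ \mathbf{i} $, both of which are supplied verbatim by the remark in Subsection \ref{subsec interpretation bound}.
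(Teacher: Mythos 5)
Your argument is correct and is exactly the specialisation the paper intends: Theorem \ref{th generalization demillo lipton et al} with $ \mathcal{J} $ the box $ \prod_{k=1}^m \{0,\ldots,r_k-1\} $, the explicit description of $ \mathcal{J}_S $ from the remark at the end of Subsection \ref{subsec interpretation bound}, and the two box-cardinality counts $ \#\mathcal{J} = \prod_j r_j $ and $ \#\{\mathbf{j} \in \mathcal{J}_S : \mathbf{j} \succeq \mathbf{i}\} = \prod_j (r_j \# S_j - i_j) $. The paper leaves this corollary unproved, but your derivation mirrors the proofs it gives for the DeMillo--Lipton--Zippel and Alon--F\"uredi specialisations in the same subsection.
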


\subsection{The Schwartz-Zippel bound on the whole grid}

In the next section, we will give an extension of bound given by Schwartz in \cite[Lemma 1]{schwartz} for weighted multiplicities that can be proven as the extensions to standard multiplicities given in \cite[Lemma 8]{extensions} and \cite[Theorem 5]{weightedRM}. In this subsection, we observe that the case where all points in $ S $ are zeros of a given weighted multiplicity follows from Corollary \ref{theorem comb nulls weighted}:

\begin{corollary} \label{corollary weak schwartz on whole grid}
Let $ F(\mathbf{x}) \in \mathbb{F}[\mathbf{x}] $, let $ \mathbf{w} = (w_1, w_2, \ldots, w_m) \in \mathbb{N}_+^m $, let $ r \in \mathbb{N}_+ $, and assume that $ s = \# S_1 = \# S_2 = \ldots = \# S_m $. If all points in $ S = S_1 \times S_2 \times \cdots \times S_m $ are zeros of $ F(\mathbf{x}) $ of weighted multiplicity at least $ r $, then
$$ r \# S \leq \deg_\mathbf{w} (F(\mathbf{x})) s^{m-1}. $$
\end{corollary}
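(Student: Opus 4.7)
The strategy is to derive Corollary \ref{corollary weak schwartz on whole grid} directly from the weighted combinatorial Nullstellensatz (Corollary \ref{theorem comb nulls weighted}) by contraposition. Since $ \# S = s^m $, the stated inequality $ r \# S \leq \deg_\mathbf{w}(F(\mathbf{x})) s^{m-1} $ is equivalent to $ rs \leq \deg_\mathbf{w}(F(\mathbf{x})) $, so this is what I would aim to prove.

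First, I would pick a multiindex $ \mathbf{i} = (i_1, i_2, \ldots, i_m) \in \mathbb{N}^m $ such that the coefficient of $ \mathbf{x}^\mathbf{i} $ in $ F(\mathbf{x}) $ is nonzero and $ \mid \mathbf{i} \mid_\mathbf{w} = \deg_\mathbf{w}(F(\mathbf{x})) $, which exists by Definition \ref{def weighted degree}. Then I would argue by contradiction: suppose $ \deg_\mathbf{w}(F(\mathbf{x})) < rs $. Under this assumption, I want to verify the hypothesis of Corollary \ref{theorem comb nulls weighted}, namely that for every $ \mathbf{r} = (r_1, r_2, \ldots, r_m) \in \mathbb{N}^m $ with $ \mid \mathbf{r} \mid_\mathbf{w} \geq r $, there exists some index $ j $ with $ r_j \# S_j = r_j s > i_j $.

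The verification follows by contrapositive: if $ r_j s \leq i_j $ for every $ j $, then, since $ w_j \geq 1 $ and the $ r_j $ are nonnegative integers,
\begin{equation*}
\mid \mathbf{r} \mid_\mathbf{w} = \sum_{j=1}^m r_j w_j \leq \sum_{j=1}^m \frac{i_j}{s} w_j = \frac{\mid \mathbf{i} \mid_\mathbf{w}}{s} = \frac{\deg_\mathbf{w}(F(\mathbf{x}))}{s} < \frac{rs}{s} = r,
\end{equation*}
contradicting $ \mid \mathbf{r} \mid_\mathbf{w} \geq r $. Hence the hypothesis of Corollary \ref{theorem comb nulls weighted} is met, and we conclude that there exist $ \mathbf{s} = (s_1, \ldots, s_m) \in S $ and $ \mathbf{j} \in \mathbb{N}^m $ with $ \mid \mathbf{j} \mid_\mathbf{w} < r $ such that $ F^{(\mathbf{j})}(\mathbf{s}) \neq 0 $. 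This contradicts the assumption that every point of $ S $ is a zero of $ F(\mathbf{x}) $ of weighted multiplicity at least $ r $, completing the proof.

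There is no serious obstacle here; the main thing to be careful about is that the combinatorial Nullstellensatz in Corollary \ref{theorem comb nulls weighted} is stated asymmetrically (the quantifier is over all $ \mathbf{r} $ with $ \mid \mathbf{r} \mid_\mathbf{w} \geq r $), so one has to pass to the contrapositive to get a clean inequality on $ \mid \mathbf{i} \mid_\mathbf{w} $. The common value $ s = \# S_1 = \cdots = \# S_m $ is precisely what allows the bound $ r_j s \leq i_j $ to be summed into a uniform estimate $ \mid \mathbf{r} \mid_\mathbf{w} \leq \mid \mathbf{i} \mid_\mathbf{w}/s $; without this uniformity one would only obtain a mixed bound involving the individual $ \# S_j $'s.
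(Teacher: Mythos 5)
Your proof is correct and follows essentially the same approach as the paper: the paper also assumes the inequality fails, picks a monomial $\mathbf{x}^{\mathbf{i}}$ achieving $\deg_{\mathbf{w}}(F)$, verifies the hypothesis of Corollary \ref{theorem comb nulls weighted} from $sr > |\mathbf{i}|_{\mathbf{w}}$, and derives a contradiction. The only cosmetic difference is that you verify that hypothesis by contraposition (assuming $r_j s \leq i_j$ for all $j$) whereas the paper argues directly from $\sum_j s r_j w_j > \sum_j i_j w_j$ that some term must satisfy the strict inequality.
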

\begin{proof}
Assume that the bound does not hold, take $ \mathbf{x}^\mathbf{i} $ such that $ \mid \mathbf{i} \mid_\mathbf{w} = \deg_\mathbf{w}(F(\mathbf{x})) $ and whose coefficient in $ F(\mathbf{x}) $ is not zero, and take a vector $ \mathbf{r} = (r_1, r_2, \ldots, r_m) \in \mathbb{N}^m $ with $ \mid \mathbf{r} \mid_\mathbf{w} \geq r $. Then
$$ s w_1 r_1 + s w_2 r_2 + \cdots + s w_m r_m \geq sr > \deg_\mathbf{w} (F(\mathbf{x})) = \mid \mathbf{i} \mid_\mathbf{w}, $$
hence there exists a $ j $ such that $ r_j \# S_j > i_j $. By Corollary \ref{theorem comb nulls weighted}, some element in $ S $ is not a zero of $ F(\mathbf{x}) $ of weighted multiplicity at least $ r $, which contradicts the assumptions and we are done.
\end{proof}

\section{The Schwartz-Zippel bound for weighted multiplicities}

As we will see in Proposition \ref{proposition zippel implies schwartz}, the bound given by Schwartz in \cite[Lemma 1]{schwartz} can be derived by those given by DeMillo and Lipton \cite{demillo}, and Zippel \cite[Theorem 1]{zippel-first}, \cite[Proposition 3]{zippel}, and is usually referred to as the Schwartz-Zippel bound. This bound has been recently extended to standard multiplicities in \cite[Lemma 8]{extensions}, and further in \cite[Theorem 5]{weightedRM}. In this section, we observe that it may be easily extended to weighted multiplicities (see Definition \ref{def weighted multiplicity}), due to the additivity of weighted order functions. We show the sharpness of this bound and compare it with the bound (\ref{general bound}) with an example, whenever it makes sense to compare both bounds.

\subsection{The bound}

\begin{theorem} \label{th S-Z bound}
Let $ \mathbf{w} = (w_1, w_2, \ldots, w_m ) \in \mathbb{N}_+^m $ be a vector of positive weights, let $ F(\mathbf{x}) \in \mathbb{F}[\mathbf{x}] $ and let $ \mathbf{x}^{\mathbf{i}} = {\rm LM}(F(\mathbf{x})) $, $ \mathbf{i} = (i_1, i_2, \ldots, i_m) $, with respect to the lexicographic ordering. It holds that
\begin{equation}
\sum_{\mathbf{a} \in S} m_{\mathbf{w}}(F(\mathbf{x}), \mathbf{a}) \leq \#S \sum_{j=1}^m \frac{i_j w_j}{\# S_j}.
\label{eq S-Z bound}
\end{equation}
\end{theorem}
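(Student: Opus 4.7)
The plan is to induct on the number of variables $m$. The base case $m=1$ is essentially the univariate bound recalled in \eqref{eq in intro 1}: for $F\in\mathbb{F}[x_1]$, the condition $|\mathbf{j}|_{\mathbf{w}}=jw_1$ only attains multiples of $w_1$, so a quick check of Definition~\ref{def weighted multiplicity} gives $m_{w_1}(F,a)=w_1\,m(F,a)$, and hence
$$\sum_{a\in S_1} m_{w_1}(F,a)\;=\;w_1\sum_{a\in S_1} m(F,a)\;\leq\; w_1\deg F\;=\;w_1i_1,$$
which matches $\#S_1\cdot\frac{i_1w_1}{\#S_1}$.

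For the inductive step, view $F$ as a polynomial in $x_1$ over $\mathbb{F}[x_2,\ldots,x_m]$ and write $F=\sum_{j=0}^{i_1}x_1^jG_j(x_2,\ldots,x_m)$. The lex ordering with $x_1\succ\cdots\succ x_m$ is used here in a crucial way: it forces $G_{i_1}\neq 0$ and $\mathrm{LM}(G_{i_1})=x_2^{i_2}\cdots x_m^{i_m}$ with respect to the induced lex ordering on $\mathbb{F}[x_2,\ldots,x_m]$. Put $\mathbf{w}'=(w_2,\ldots,w_m)$, $S'=S_2\times\cdots\times S_m$. For each $\mathbf{a}'\in S'$, set $q(\mathbf{a}'):=m_{\mathbf{w}'}(G_{i_1},\mathbf{a}')$ and pick $\mathbf{j}'_0\in\mathbb{N}^{m-1}$ with $|\mathbf{j}'_0|_{\mathbf{w}'}=q(\mathbf{a}')$ and $G_{i_1}^{(\mathbf{j}'_0)}(\mathbf{a}')\neq 0$. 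Define the univariate polynomial $P(x_1):=F^{(0,\mathbf{j}'_0)}(x_1,\mathbf{a}')=\sum_j x_1^j G_j^{(\mathbf{j}'_0)}(\mathbf{a}')$, whose leading coefficient $G_{i_1}^{(\mathbf{j}'_0)}(\mathbf{a}')$ is nonzero, so $\deg P=i_1$.

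The engine of the proof is the pointwise inequality
$$m_{\mathbf{w}}\bigl(F,(a_1,\mathbf{a}')\bigr)\;\leq\; q(\mathbf{a}')+m_{w_1}(P,a_1)\quad\text{for all }a_1\in S_1,$$
which follows at once from the identity $P^{(j_1)}(a_1)=F^{(j_1,\mathbf{j}'_0)}(a_1,\mathbf{a}')$ and the fact that a Hasse-derivative witness for $m_{w_1}(P,a_1)$ of weighted order exactly $w_1\lfloor m_{w_1}(P,a_1)/w_1\rfloor$ supplies a derivative witness for $F$ at $(a_1,\mathbf{a}')$ of weighted order $m_{w_1}(P,a_1)+q(\mathbf{a}')$. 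Summing this inequality over $a_1\in S_1$ and applying the base case to $P$ yields $\sum_{a_1\in S_1} m_{\mathbf{w}}(F,(a_1,\mathbf{a}'))\leq \#S_1\cdot q(\mathbf{a}')+i_1w_1$. Summing next over $\mathbf{a}'\in S'$ and applying the inductive hypothesis to $G_{i_1}$ (with leading monomial $x_2^{i_2}\cdots x_m^{i_m}$) bounds $\sum_{\mathbf{a}'\in S'} q(\mathbf{a}')\leq \#S'\sum_{j=2}^m\frac{i_jw_j}{\#S_j}$, and a one-line arithmetic check using $i_1w_1\#S'=\#S\cdot\frac{i_1w_1}{\#S_1}$ collapses the two contributions into the desired $\#S\sum_{j=1}^m\frac{i_jw_j}{\#S_j}$.

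The main obstacle is entirely contained in establishing the pointwise inequality above and in verifying that the lex hypothesis on $\mathrm{LM}(F)$ propagates cleanly to $\mathrm{LM}(G_{i_1})$, so that the inductive hypothesis is genuinely applicable with the correct leading multi-index; once these two points are set up, the rest is a double-summation that telescopes into the stated bound.
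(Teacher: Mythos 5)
Your proof is correct and follows essentially the same route as the paper's: induct on the number of variables, split off the lex-largest variable, establish a pointwise inequality bounding $m_{\mathbf{w}}(F,\mathbf{a})$ by the multiplicity of the leading coefficient plus a univariate multiplicity, then sum and invoke the base case and the inductive hypothesis. The only cosmetic differences are that you label the lex-largest variable $x_1$ (the paper uses $x_m$) and that you verify the key pointwise inequality directly via an explicit derivative witness $(j_1,\mathbf{j}'_0)$, whereas the paper packages the two ingredients into a separate auxiliary lemma (Lemma~\ref{lemma SZ}).
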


When $ w_1 = w_2 = \ldots = w_m = 1 $, observe that \cite[Theorem 5]{weightedRM} is recovered from this theorem, and \cite[Lemma 8]{extensions} is recovered from the next corollary. Observe also that this corollary is stronger than Corollary \ref{corollary weak schwartz on whole grid}.

\begin{corollary}
Let $ F(\mathbf{x}) \in \mathbb{F}[\mathbf{x}] $ and $ \mathbf{w} \in \mathbb{N}_+^m $. If $ s = \# S_1 = \# S_2 = \ldots = \# S_m $, then
$$ \sum_{\mathbf{a} \in S} m_{\mathbf{w}}(F(\mathbf{x}), \mathbf{a}) \leq \deg_{\mathbf{w}}(F(\mathbf{x})) s^{m-1}. $$
\end{corollary}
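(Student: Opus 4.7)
The plan is to reduce the corollary directly to Theorem \ref{th S-Z bound} by plugging in the hypothesis $\#S_1 = \#S_2 = \cdots = \#S_m = s$, and then to control the quantity $\sum_{j=1}^m i_j w_j$ (which is $\mid \mathbf{i} \mid_\mathbf{w}$) by the weighted degree of $F(\mathbf{x})$. There is essentially nothing to invent: the theorem already does the hard analytic work, and the corollary is a clean algebraic simplification.

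First, I would apply Theorem \ref{th S-Z bound} with $\mathbf{x}^\mathbf{i} = {\rm LM}(F(\mathbf{x}))$ taken with respect to the lexicographic ordering, obtaining
\begin{equation*}
\sum_{\mathbf{a} \in S} m_\mathbf{w}(F(\mathbf{x}), \mathbf{a}) \;\leq\; \#S \sum_{j=1}^m \frac{i_j w_j}{\#S_j}.
\end{equation*}
Since by assumption $\#S_j = s$ for every $j$, and therefore $\#S = s^m$, the right-hand side becomes
\begin{equation*}
s^m \cdot \frac{1}{s} \sum_{j=1}^m i_j w_j \;=\; s^{m-1} \mid \mathbf{i} \mid_\mathbf{w}.
\end{equation*}

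Second, I would observe that the coefficient of $\mathbf{x}^\mathbf{i}$ in $F(\mathbf{x})$ is nonzero, precisely because $\mathbf{x}^\mathbf{i}$ is a (in fact the) leading monomial of $F(\mathbf{x})$. By Definition \ref{def weighted degree}, this forces $\mid \mathbf{i} \mid_\mathbf{w} \leq \deg_\mathbf{w}(F(\mathbf{x}))$. Substituting into the previous bound yields $s^{m-1} \deg_\mathbf{w}(F(\mathbf{x}))$, which is the claimed inequality.

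There is really no obstacle here; the only point that deserves a word of care is that the leading monomial in Theorem \ref{th S-Z bound} is defined with respect to the lexicographic ordering, not with respect to the weighted order, so the exponent vector $\mathbf{i}$ need not itself attain the weighted degree. The monotonicity step $\mid \mathbf{i} \mid_\mathbf{w} \leq \deg_\mathbf{w}(F(\mathbf{x}))$ is nevertheless valid for any exponent vector appearing in $F(\mathbf{x})$, and that is all that is needed to close the argument.
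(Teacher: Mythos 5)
Your proof is correct and is precisely the intended reduction: apply Theorem \ref{th S-Z bound} with $\#S_j = s$ for all $j$, simplify the right-hand side to $s^{m-1}\mid\mathbf{i}\mid_\mathbf{w}$, and then bound $\mid\mathbf{i}\mid_\mathbf{w}$ by $\deg_\mathbf{w}(F(\mathbf{x}))$ since $\mathbf{x}^\mathbf{i}$ has a nonzero coefficient in $F$. The paper states the corollary without proof because this is exactly the argument it has in mind, and your remark that the lexicographic leading monomial need not attain the weighted degree correctly identifies the one subtlety worth flagging.
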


To prove Theorem \ref{th S-Z bound}, we need an auxiliary lemma, whose proof can be directly translated from those of \cite[Lemma 5]{extensions} and \cite[Corollary 7]{extensions}:

\begin{lemma} \label{lemma SZ}
If $ F(\mathbf{x}) \in \mathbb{F}[\mathbf{x}] $ and $ \mathbf{a} = (a_1, a_2, \ldots, a_m) \in \mathbb{F}^m $, then
\begin{enumerate}
\item
$ m_{\mathbf{w}} \left( F^{(\mathbf{i})}(\mathbf{x}), \mathbf{a} \right) \geq m_{\mathbf{w}}(F(\mathbf{x}), \mathbf{a}) - \mid\mathbf{i}\mid_{\mathbf{w}} $, for all $ \mathbf{i} \in \mathbb{N}^m $, and
\item
$ m_{\mathbf{w}} \left( F(\mathbf{x}), \mathbf{a} \right) \leq m_{w_m}(F(a_1,a_2, \ldots, a_{m-1},x_m), a_m) $.
\end{enumerate}
\end{lemma}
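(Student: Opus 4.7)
The plan is to handle the two items separately, since each reduces to a short computation with Hasse derivatives. The core algebraic fact I will rely on is the composition law $(F^{(\mathbf{i})})^{(\mathbf{j})}(\mathbf{x}) = \binom{\mathbf{i}+\mathbf{j}}{\mathbf{j}} F^{(\mathbf{i}+\mathbf{j})}(\mathbf{x})$, where $\binom{\mathbf{i}+\mathbf{j}}{\mathbf{j}} = \prod_{k=1}^m \binom{i_k+j_k}{j_k}$. This identity follows by expanding $F(\mathbf{x}+\mathbf{y}+\mathbf{z})$ in two ways using Definition \ref{def Hasse derivative}: once by introducing $\mathbf{z}$ first and then $\mathbf{y}$ via the definition applied to $F^{(\mathbf{i})}$, and once by viewing $\mathbf{y}+\mathbf{z}$ as a single increment and using the binomial expansion of $(\mathbf{y}+\mathbf{z})^{\mathbf{k}}$. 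Comparing coefficients of $\mathbf{y}^{\mathbf{j}}\mathbf{z}^{\mathbf{i}}$ yields the identity. The key point is that this identity holds over \emph{any} field, even one of positive characteristic (the binomial coefficients may vanish but the identity remains valid), which is precisely why Hasse derivatives are the correct tool in this context.

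For item (1), set $r=m_{\mathbf{w}}(F(\mathbf{x}),\mathbf{a})$. I need to check that every Hasse derivative of $F^{(\mathbf{i})}$ of weighted order strictly less than $r-|\mathbf{i}|_{\mathbf{w}}$ vanishes at $\mathbf{a}$. Take such a multiindex $\mathbf{j}$. By the composition law,
$$
(F^{(\mathbf{i})})^{(\mathbf{j})}(\mathbf{a}) = \binom{\mathbf{i}+\mathbf{j}}{\mathbf{j}} F^{(\mathbf{i}+\mathbf{j})}(\mathbf{a}).
$$
Since the weighted order is additive, $|\mathbf{i}+\mathbf{j}|_{\mathbf{w}} = |\mathbf{i}|_{\mathbf{w}} + |\mathbf{j}|_{\mathbf{w}} < r$, and the definition of $m_{\mathbf{w}}(F,\mathbf{a})$ forces $F^{(\mathbf{i}+\mathbf{j})}(\mathbf{a})=0$. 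Hence $(F^{(\mathbf{i})})^{(\mathbf{j})}(\mathbf{a})=0$, which is exactly the required lower bound on the weighted multiplicity of $F^{(\mathbf{i})}$ at $\mathbf{a}$.

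For item (2), write $G(x_m) = F(a_1,\ldots,a_{m-1},x_m) \in \mathbb{F}[x_m]$. The observation is that for any non-negative integer $i_m$ the univariate Hasse derivative $G^{(i_m)}(a_m)$ coincides with $F^{(\mathbf{0},i_m)}(\mathbf{a})$, where $(\mathbf{0},i_m)$ denotes the multiindex with $i_m$ in the last coordinate and zeros elsewhere. This follows directly from Definition \ref{def Hasse derivative} by substituting $x_j=a_j$ and $z_j=0$ for $j<m$ in the expansion of $F(\mathbf{x}+\mathbf{z})$. Now let $r=m_{\mathbf{w}}(F(\mathbf{x}),\mathbf{a})$. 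For every $i_m$ with $i_m w_m < r$, the multiindex $(\mathbf{0},i_m)$ has weighted order $i_m w_m < r$, so $F^{(\mathbf{0},i_m)}(\mathbf{a})=0$, hence $G^{(i_m)}(a_m)=0$. This precisely says that the (univariate) weighted multiplicity of $G$ at $a_m$ with weight $w_m$ is at least $r$, proving item (2).

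I do not anticipate any genuine obstacle: both statements are essentially unpacking definitions once the composition law for Hasse derivatives is in hand. The only subtlety worth flagging in the write-up is that the binomial factor $\binom{\mathbf{i}+\mathbf{j}}{\mathbf{j}}$ may vanish in positive characteristic, but this is harmless for item (1) because the direction of the inequality only requires $F^{(\mathbf{i}+\mathbf{j})}(\mathbf{a})=0$ to imply $(F^{(\mathbf{i})})^{(\mathbf{j})}(\mathbf{a})=0$, and not the converse.
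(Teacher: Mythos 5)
Your proof is correct and follows the same route the paper intends: the paper does not write out the argument but refers the reader to the proofs of Lemma 5 and Corollary 7 of \cite{extensions} (for the unweighted case), and those proofs rest on precisely the composition law for Hasse derivatives $(F^{(\mathbf{i})})^{(\mathbf{j})}=\binom{\mathbf{i}+\mathbf{j}}{\mathbf{j}}F^{(\mathbf{i}+\mathbf{j})}$ and the specialization $G^{(i_m)}(a_m)=F^{(\mathbf{0},i_m)}(\mathbf{a})$ that you use, adapted by the additivity of $|\cdot|_{\mathbf{w}}$. Your remark that the binomial coefficient may vanish in positive characteristic, and that this is harmless for the implication needed in item~(1), is exactly the right thing to flag.
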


We may now prove Theorem \ref{th S-Z bound}. We follow closely the steps given in the proof of \cite[Lemma 8]{extensions}.

\begin{proof}[Proof of Theorem \ref{th S-Z bound}]
We will prove the result by induction on $ m $, where the case $ m = 1 $ follows from (\ref{eq in intro 1}). Fix then $ m > 1 $. We may assume without loss of generality that $ x_1 \prec_l x_2 \prec_l \ldots \prec_l x_m $, where $ \preceq_l $ is the lexicographic ordering. Write $ \mathbf{x}^\prime = (x_1, x_2, \ldots, x_{m-1}) $. There are unique polynomials $ F_j(\mathbf{x}^\prime) \in \mathbb{F}[\mathbf{x}^\prime] $, for $ j = 1,2, \ldots, t $, such that
$$ F(\mathbf{x}) = \sum_{j=0}^t F_j(\mathbf{x}^\prime) x_m^j, $$
where $ {\rm LM}(F(\mathbf{x})) = {\rm LM}(F_t(\mathbf{x}^\prime)) x_m^t $. Let $ \mathbf{a} = (a_1, a_2, \ldots, a_m) \in S $ and write $ \mathbf{a}^\prime = (a_1, a_2, \ldots, a_{m-1}) $ and $ \mathbf{w}^\prime = (w_1, w_2, \ldots, $ $ w_{m-1}) $. Take $ \mathbf{k} \in \mathbb{N}^{m-1} $ such that $ \mid\mathbf{k}\mid_{\mathbf{w}^\prime} =  m_{\mathbf{w}^\prime}(F_t(\mathbf{x}^\prime), \mathbf{a}^\prime) $ and $ F_t^{(\mathbf{k})}(\mathbf{a}^\prime) \neq 0 $. By the previous lemma, we see that
\begin{equation*}
\begin{split}
m_{\mathbf{w}}(F(\mathbf{x}), \mathbf{a}) & \leq \mid(\mathbf{k},0)\mid_{\mathbf{w}} + m_{\mathbf{w}} \left( F^{(\mathbf{k},0)}(\mathbf{x}), \mathbf{a} \right) \\
 & \leq m_{\mathbf{w}^\prime}(F_t(\mathbf{x}^\prime), \mathbf{a}^\prime) +  m_{w_m} \left( F^{(\mathbf{k},0)}(\mathbf{a}^\prime, x_m), a_m \right) .
\end{split}
\end{equation*}
Summing these inequalities over all $ a_m \in S_m $ and applying the case $ m=1 $, we obtain that
\begin{equation*}
\sum_{a_m \in S_m} m_{\mathbf{w}}(F(\mathbf{x}), \mathbf{a}) \leq  m_{\mathbf{w}^\prime}(F_t(\mathbf{x}^\prime), \mathbf{a}^\prime) \# S_m + w_mt.
\end{equation*}
Using this last inequality, summing over $ a_i \in S_i $, for $ i = 1,2, \ldots, m-1 $, and applying the case of $ m-1 $ variables, it follows that
\begin{equation*}
\begin{split}
\sum_{\mathbf{a} \in S} m_{\mathbf{w}}(F(\mathbf{x}), \mathbf{a}) & \leq \sum_{a_1 \in S_1} \cdots \sum_{a_{m-1} \in S_{m-1}} m_{\mathbf{w}^\prime}(F_t(\mathbf{x}^\prime), \mathbf{a}^\prime) \# S_m + w_mt \frac{\# S}{\#S_m} \\
& \leq \sum_{j=1}^{m-1} w_ji_j \frac{\# S}{\# S_j} + w_mt \frac{\# S}{\#S_m},
\end{split}
\end{equation*}
and the result follows.
\end{proof}

\subsection{Sharpness of the bound}

In this subsection, we prove the sharpness of the bound (\ref{eq S-Z bound}), whose proof can be translated word by word from that of \cite[Proposition 7]{moreresults}. Therefore, we only present a sketch of the proof:

\begin{proposition}
For all finite sets $ S_1, S_2, \ldots, S_m \subseteq \mathbb{F} $, $ S = S_1 \times S_2 \times \cdots \times S_m $, all vectors of positive weights $ \mathbf{w} = (w_1, w_2, \ldots, w_m) \in \mathbb{N}_+^m $ and all $ \mathbf{i} = (i_1, i_2, \ldots, i_m) \in \mathbb{N}^m $, there exists a polynomial $ F(\mathbf{x}) \in \mathbb{F}[\mathbf{x}] $ such that $ \mathbf{x}^\mathbf{i} = {\rm LM}(F(\mathbf{x})) $ with respect to the lexicographic ordering, and such that
$$ \sum_{\mathbf{a} \in S} m_{\mathbf{w}}(F(\mathbf{x}), \mathbf{a}) = \#S \sum_{j=1}^m \frac{i_j w_j}{\# S_j}. $$
\end{proposition}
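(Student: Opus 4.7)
The plan is to construct an explicit separable example of the form $F(\mathbf{x}) = \prod_{j=1}^m F_j(x_j)$, where each $F_j(x_j)\in \mathbb{F}[x_j]$ has degree exactly $i_j$, leading term $x_j^{i_j}$, and all of its roots lie in $S_j$ counted with multiplicity. Concretely, writing $i_j = q_j \#S_j + r_j$ with $0 \leq r_j < \#S_j$ by Euclidean division, and picking any $r_j$ distinct elements $s_{j,1},\ldots,s_{j,r_j} \in S_j$, I would take
$$F_j(x_j) = G_j(x_j)^{q_j} \prod_{k=1}^{r_j}(x_j - s_{j,k}).$$
Then $\deg F_j = i_j$ and $\sum_{a_j \in S_j} m(F_j(x_j),a_j) = i_j$, i.e.\ univariate equality in (\ref{eq in intro 1}) holds over $S_j$. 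The leading monomial of $F(\mathbf{x})$ under any monomial ordering (in particular lex) is $\prod_j x_j^{i_j} = \mathbf{x}^{\mathbf{i}}$, as required.

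Next I would reduce the computation of the weighted multiplicity at a point $\mathbf{a} = (a_1,\ldots,a_m) \in S$ to the univariate data. Using Definition \ref{def Hasse derivative} together with $F(\mathbf{x}+\mathbf{z}) = \prod_j F_j(x_j + z_j)$, one obtains the separation $F^{(\mathbf{k})}(\mathbf{x}) = \prod_j F_j^{(k_j)}(x_j)$ for every multiindex $\mathbf{k} \in \mathbb{N}^m$. Setting $\mu_j = m(F_j(x_j), a_j)$, the evaluation $F^{(\mathbf{k})}(\mathbf{a}) = \prod_j F_j^{(k_j)}(a_j)$ is non-zero if and only if $k_j \geq \mu_j$ for every $j$. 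The minimum value of $|\mathbf{k}|_\mathbf{w} = \sum_j w_j k_j$ subject to these coordinate-wise constraints is attained at $\mathbf{k} = (\mu_1,\ldots,\mu_m)$, yielding the key identity
$$m_\mathbf{w}(F(\mathbf{x}),\mathbf{a}) = \sum_{j=1}^m w_j\, m(F_j(x_j), a_j).$$

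Finally, I would sum this identity over all $\mathbf{a} \in S$. Since for each fixed $j$ and each $a_j \in S_j$ the term $w_j\, m(F_j(x_j), a_j)$ contributes $\prod_{k\neq j}\#S_k = \#S/\#S_j$ times, the total becomes
$$\sum_{\mathbf{a}\in S} m_\mathbf{w}(F(\mathbf{x}),\mathbf{a}) = \sum_{j=1}^m \frac{w_j\, \#S}{\#S_j}\sum_{a_j\in S_j} m(F_j(x_j),a_j) = \#S\sum_{j=1}^m \frac{i_j w_j}{\#S_j},$$
which gives equality in (\ref{eq S-Z bound}). There is no real obstacle here: the delicate point is the product-form identity for the weighted multiplicity, but the decoupling of Hasse derivatives under separated variables makes this essentially automatic. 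Everything else reduces to univariate counting with equality, which is guaranteed by our choice of $F_j$.
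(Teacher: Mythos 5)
Your construction and final counting are correct, and the overall strategy is the same as the paper's: take a separated-variables polynomial whose univariate factors achieve equality in the one-variable bound, and observe that the weighted multiplicity decouples additively. The paper leaves the root multiplicities $r^{(j)}_k$ arbitrary subject to $\sum_k r^{(j)}_k = i_j$, whereas you fix them via Euclidean division; that is just a specific valid choice. Where you genuinely diverge is in how the per-point multiplicity is computed. The paper's sketch translates each grid point to the origin and invokes the Gr{\"o}bner basis from Corollary \ref{corollary groebner weighted}, while you compute directly via the factorization $F^{(\mathbf{k})}(\mathbf{x}) = \prod_j F_j^{(k_j)}(x_j)$ coming from $F(\mathbf{x}+\mathbf{z}) = \prod_j F_j(x_j+z_j)$ and Definition \ref{def Hasse derivative}. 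Your route is more elementary and self-contained; it does not lean on the earlier Gr{\"o}bner machinery of Section 4.

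One small imprecision: the claim that $F^{(\mathbf{k})}(\mathbf{a}) = \prod_j F_j^{(k_j)}(a_j)$ is non-zero \emph{if and only if} $k_j \geq \mu_j$ for all $j$ is false in the ``if'' direction, since a univariate Hasse derivative of order strictly greater than the multiplicity can vanish (e.g.\ in positive characteristic, $F_j(x) = x^4 + x^2$ over $\mathbb{F}_2$ has $\mu_j = 2$ at $0$ but $F_j^{(3)}(0) = 0$). What you actually need, and what is true, is the pair of facts: (i) if $|\mathbf{k}|_\mathbf{w} < \sum_j w_j \mu_j$ then some $k_j < \mu_j$, so the product vanishes; and (ii) at $\mathbf{k} = (\mu_1, \ldots, \mu_m)$ every factor $F_j^{(\mu_j)}(a_j)$ is non-zero. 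Together these give $m_\mathbf{w}(F,\mathbf{a}) = \sum_j w_j \mu_j$ directly from Definition \ref{def weighted multiplicity}, so the conclusion stands; just rephrase the intermediate step so it does not assert the false converse.
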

\begin{proof}[Sketch of proof]
Denote $ s_j = \# S_j $ and $ S_j = \left\lbrace a^{(j)}_1, a^{(j)}_2, \ldots, a^{(j)}_{s_j} \right\rbrace $, and choose $ r^{(j)}_k \in \mathbb{N} $ such that $ i_j = r^{(j)}_1 + r^{(j)}_2 + \cdots + r^{(j)}_{s_j} $, for $ k = 1,2, \ldots, s_j $ and $ j = 1,2, \ldots, m $. Now define
$$ F(\mathbf{x}) = \prod_{j = 1}^m \prod_{k = 1}^{s_j} \left( x_j - a^{(j)}_k \right)^{r^{(j)}_k}. $$
Now, fixing integers $ 1 \leq k_j \leq s_j $, for $ j = 1,2, \ldots, m $, translating the point $ \left( a^{(1)}_{k_1}, a^{(2)}_{k_2}, \ldots, \right. $ $ \left. a^{(m)}_{k_m} \right) $ to the origin $ \mathbf{0} $, and using the Gr{\"o}bner basis from Corollary \ref{corollary groebner weighted}, we see that
$$ m_\mathbf{w} \left( F \left( \mathbf{x} \right), \left( a^{(1)}_{k_1}, a^{(2)}_{k_2}, \ldots, a^{(m)}_{k_m} \right) \right) = r^{(1)}_{k_1} w_1 + r^{(2)}_{k_2} w_2 + \cdots + r^{(m)}_{k_m} w_m, $$
for all $ k_j = 1,2, \ldots, s_j $ and all $ j = 1,2, \ldots, m $. The result then follows by summing these multiplicities.
\end{proof}

\subsection{Comparison with the footprint bound}

In this subsection, we will compare the bounds (\ref{general bound}) and (\ref{eq S-Z bound}) whenever it makes sense to do so. To that end, we will write them as follows: fix a vector of positive weights $ \mathbf{w} = (w_1, w_2, \ldots, w_m) \in \mathbb{N}_+^m $, a positive integer $ r \in \mathbb{N}_+ $, and a polynomial $ F(\mathbf{x}) \in \mathbb{F}[\mathbf{x}] $ such that $ \mathbf{x}^\mathbf{i} = {\rm LM}(F(\mathbf{x})) $, $ \mathbf{i} = (i_1, i_2, \ldots, i_m) $, with respect to the lexicographic ordering. We first consider the footprint bound as in Corollary \ref{corollary footprint for weighted multi}:
\begin{equation}
\# \mathcal{V}_{\geq r, \mathbf{w}}(F(\mathbf{x})) \cdot {\rm B}(\mathbf{w}; r) \leq \# \Delta \left( \left\langle \left\lbrace F(\mathbf{x}) \right\rbrace \bigcup \left\lbrace \prod_{j=1}^m G_j(x_j)^{r_j} : \sum_{j=1}^m r_jw_j \geq r \right\rbrace \right\rangle \right).
\label{eq comparison 1st bound}
\end{equation}
And next we consider the bound (\ref{eq S-Z bound}) as follows:
\begin{equation}
\# \mathcal{V}_{\geq r, \mathbf{w}}(F(\mathbf{x})) \cdot r \leq \#S \sum_{j=1}^m \frac{i_j w_j}{\# S_j}.
\label{eq comparison 2nd bound}
\end{equation}

First we observe that the bound (\ref{eq comparison 1st bound}) also holds for any other monomial ordering, and not only the lexicographic one, as is the case with (\ref{eq comparison 2nd bound}). Second we observe that (\ref{eq comparison 2nd bound}) gives no information whereas (\ref{eq comparison 1st bound}) does, whenever
\begin{equation}
\sum_{j=1}^m \left\lfloor \frac{i_j}{\# S_j} \right\rfloor w_j < r \leq \sum_{j=1}^m \frac{i_j w_j}{\# S_j},
\label{eq region where no information}
\end{equation}
by the discussion in Subsection \ref{subsec interpretation bound}.

Next, we observe that when we do not count multiplicities, that is, $ w_1 = w_2 = \ldots = w_m = r = 1 $, then (\ref{eq comparison 1st bound}) implies (\ref{eq comparison 2nd bound}) via Theorem \ref{th generalization demillo lipton et al}:

\begin{proposition} \label{proposition zippel implies schwartz}
If $ w_1 = w_2 = \ldots = w_m = r = 1 $, that is, $ \mathcal{J} = \{ \mathbf{0} \} $, it holds that $ {\rm B}(\mathbf{w}; r) = 1 $ and 
$$ \# \Delta \left( \left\langle F(\mathbf{x}), G_1(x_1), G_2(x_2), \ldots, G_m(x_m) \right\rangle \right) \leq \# S - \prod_{j=1}^m \left( \# S_j - i_j \right) \leq \#S \sum_{j=1}^m \frac{i_j}{\# S_j}. $$
In particular, (\ref{eq comparison 1st bound}) implies (\ref{eq comparison 2nd bound}) in this case.
\end{proposition}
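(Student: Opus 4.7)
The plan is to verify the three assertions in the displayed formula separately and then combine them with the footprint bound of Theorem \ref{th footprint bound}. First, $ {\rm B}(\mathbf{w}; r) = 1 $ is immediate from the definition, since the only $ \mathbf{j} \in \mathbb{N}^m $ with $ \mid\mathbf{j}\mid_\mathbf{w} < 1 $ is $ \mathbf{j} = \mathbf{0} $.

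For the first inequality $ \#\Delta(\langle F, G_1, \ldots, G_m\rangle) \leq \#S - \prod_{j=1}^m(\#S_j - i_j) $, I would argue directly on the footprint, reproducing in this special case the argument in the proof of Theorem \ref{th generalization demillo lipton et al}. Since $ G_1, \ldots, G_m $ lie in the ideal and have respective leading monomials $ x_j^{\#S_j} $ for any monomial ordering, the footprint $ \Delta $ is contained in the box $ R = \{ \mathbf{k} \in \mathbb{N}^m : k_j < \#S_j, \, j = 1, \ldots, m \} $ of cardinality $ \#S $. Moreover, since $ \mathbf{x}^\mathbf{i} = {\rm LM}(F) $ is a leading monomial in $ \langle F, G_1, \ldots, G_m \rangle $, every $ \mathbf{k} $ with $ \mathbf{k} \succeq \mathbf{i} $ is excluded from $ \Delta $; when $ i_j \leq \#S_j $ for all $ j $, the number of such $ \mathbf{k} $ lying in $ R $ is exactly $ \prod_{j=1}^m(\#S_j - i_j) $. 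Subtracting gives the claim. The degenerate case where some $ i_j > \#S_j $ is trivial, since then $ \prod_{j=1}^m(\#S_j - i_j) \leq 0 $ and $ \#\Delta \leq \#S $.

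For the second inequality, set $ p_j = i_j / \#S_j \in [0, 1] $. The claim then becomes the Weierstrass product inequality $ 1 - \prod_{j=1}^m(1 - p_j) \leq \sum_{j=1}^m p_j $, which I would prove by induction on $ m $. The base $ m = 1 $ is an equality; for the inductive step one may assume $ \sum_{j=1}^m p_j \leq 1 $ (otherwise the equivalent form $ \prod(1 - p_j) \geq 1 - \sum p_j $ is trivial, the left-hand side being non-negative and the right-hand side non-positive), and then
\[
\prod_{j=1}^m(1 - p_j) = (1 - p_m)\prod_{j=1}^{m-1}(1 - p_j) \geq (1 - p_m)\left(1 - \sum_{j=1}^{m-1} p_j\right) \geq 1 - \sum_{j=1}^m p_j,
\]
where the first inequality uses the induction hypothesis together with $ 1 - p_m \geq 0 $, and the second drops the non-negative term $ p_m \sum_{j=1}^{m-1} p_j $.

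Finally, with $ \mathcal{J} = \{\mathbf{0}\} $ the inequality (\ref{eq comparison 1st bound}) specializes to $ \#\mathcal{V}(F) \leq \#\Delta(\langle F, G_1, \ldots, G_m\rangle) $, and chaining this with the two inequalities above yields $ \#\mathcal{V}(F) \leq \#S \sum_{j=1}^m i_j/\#S_j $, which is precisely (\ref{eq comparison 2nd bound}) for $ r = 1 $ and $ \mathbf{w} = (1, 1, \ldots, 1) $. The only mild obstacle I anticipate is cleanly handling the boundary behaviour when some $ i_j $ exceeds $ \#S_j $ in the first inequality, but as noted this case is subsumed by the trivial bound $ \#\Delta \leq \#S $ and does not cause any real difficulty.
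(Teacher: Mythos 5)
Your reconstruction matches the paper's implied proof for the most part: the first inequality is precisely the footprint-counting step from the proof of Theorem \ref{th generalization demillo lipton et al} specialized to $\mathcal{J} = \{\mathbf{0}\}$, and you correctly supply (by induction) the Weierstrass product inequality for the second step, which the paper leaves unstated. The gap is in your dismissal of the case where some $i_j > \#S_j$. It is not true that $\prod_{j=1}^m(\#S_j - i_j) \leq 0$ in that case: an even number of negative factors makes the product positive. Worse, in that regime the first displayed inequality is actually false, not merely trivially subsumed by $\#\Delta \leq \#S$. For instance, take $m = 3$, $\#S_1 = \#S_2 = \#S_3 = 2$, and $F(\mathbf{x}) = x_1 x_2 G_1(x_1) G_2(x_2)$, so that $\mathbf{i} = (3,3,0)$; then $\#S - \prod_{j}(\#S_j - i_j) = 8 - 2 = 6$, while $\langle F(\mathbf{x}), G_1(x_1), G_2(x_2), G_3(x_3) \rangle = \langle G_1(x_1), G_2(x_2), G_3(x_3) \rangle$ has footprint of size $8$. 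The correct resolution, consistent with the hypothesis of Theorem \ref{th generalization demillo lipton et al} that the paper is invoking, is to read the Proposition with the standing hypothesis $\mathbf{i} \in \mathcal{J}_S$, i.e., $i_j < \#S_j$ for all $j$; under that hypothesis your main argument closes cleanly and the degenerate case does not arise.
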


Moreover, when $ m = 1 $ and we count multiplicities, all bounds coincide, giving (\ref{eq in intro 2}). In the following example we show that this is not the case in general. As we will see, each bound, (\ref{eq comparison 1st bound}) and (\ref{eq comparison 2nd bound}), can be tighter than the other one in different cases, hence complementing each other:

\begin{example}\label{ex1}
Consider $ m = 2 $, $ w_1 = 2 $, $ w_2 = 3 $, $ r = 5 $ and $ \# S_1 = \# S_2 = 4 $. Thus we have that
$$ \mathcal{J} = \{(0,0), (1,0),(0,1), (2,0)\}, \quad \textrm{and} $$
$$ \mathcal{J}_S = \left( [0,11] \times [0,3] \right) \cup \left( [0,3] \times [0,7] \right) . $$
Consider all pairs $ (i_1, i_2) \in \mathcal{J}_S $ and polynomials $ F(x_1,x_2) $ such that $ {\rm LM}(F(x_1,x_2)) = x_1^{i_1} x_2^{i_2} $, with respect to the lexicographic ordering. In Figure \ref{figone}, we show the upper bounds on the number of zeros of $ F(x_1, x_2) $ of weighted multiplicity at least $ 5 $ given by (\ref{eq comparison 1st bound}) (table above) and (\ref{eq comparison 2nd bound}) (table below), respectively. As is clear from the figure, in some regions of the set $ \mathcal{J}_S $, the first bound is tighter than the second (bold numbers in the table above) and vice versa (bold numbers in the table below). Furthermore the first bound gives non-trivial information in the region given by (\ref{eq region where no information}), where the second does not (depicted by dashes).
\end{example}

\begin{figure}[h]
\begin{center}

\begin{tabular}{r|rrrrrrrrrrrr}
$x_1^7$ & \bf{15}&\bf{15}&\bf{15}&\bf{15}\\
$x_1^6$ & 14&\bf{14}&\bf{15}&\bf{15}\\
$x_1^5$ & 13&13&\bf{14}&\bf{15}\\
$x_1^4$ & 12&13&14&15\\
$x_1^3$ & 9&10&11&12&14&\bf{14}&\bf{14}&\bf{14}&\bf{15}&\bf{15}&\bf{15}&\bf{15}\\
$x_1^2$ & 6&7&9&10&12&12&\bf{13}&\bf{13}&\bf{14}&\bf{14}&\bf{15}&\bf{15}\\
$x_1$ & 3&4&6&8&10&10&\bf{11}&\bf{12}&\bf{13}&\bf{13}&\bf{14}&\bf{15}\\
$1$ & 0&2&4&6&8&9&10&11&12&\bf{13}&\bf{14}&\bf{15}\\
\hline
 & $1$ & $x_2$ & $x_2^2$ & $x_2^3$ & $x_2^4$ & $x_2^5$ & $x_2^6$ & $x_2^7$ & $x_2^8$ & $x_2^9$ & $x_2^{10}$ & $x_2^{11}$  \\
  & \\
$x_1^7$ & --&--&--&--\\
$x_1^6$ & 14&--&--&--\\
$x_1^5$ & \bf{12}&13&15&--\\
$x_1^4$ & \bf{9}&\bf{11}&\bf{12}&\bf{14}\\
$x_1^3$ & \bf{7}&\bf{8}&\bf{10}&12&\bf{13}&15&--&--&--&--&--&--\\
$x_1^2$ & \bf{4}&\bf{6}&\bf{8}&\bf{9}&\bf{11}&12&14&--&--&--&--&--\\
$x_1$ & \bf{2}&4&\bf{5}&\bf{7}&\bf{8}&10&12&13&15&--&--&--\\
$1$ & 0&\bf{1}&\bf{3}&\bf{4}&\bf{6}&\bf{8}&\bf{9}&11&12&14&--&--\\
\hline
 & $1$ & $x_2$ & $x_2^2$ & $x_2^3$ & $x_2^4$ & $x_2^5$ & $x_2^6$ & $x_2^7$ & $x_2^8$ & $x_2^9$ & $x_2^{10}$ & $x_2^{11}$
\end{tabular}

\end{center}
\label{figone}
\caption{Upper bounds on the number of zeros of weighted multiplicity at least $ r = 5 $ when $ w_1 = 2 $, $ w_2 = 3 $ and $ \#S_1 = \#S_2 = 4 $, from Example \ref{ex1}. }
\end{figure}

\section*{Appendix: Proof of Lemma \ref{lemma hermite}} \label{appendix}

In this appendix, we give the proof of Lemma \ref{lemma hermite}. We first treat the univariate case ($ m = 1 $) in the classical form. The proof for Hasse derivatives can be directly translated from the result for classical derivatives:

\begin{lemma}
Let $ a_1, a_2, \ldots, a_n \in \mathbb{F} $ be pair-wise distinct and let $ M \in \mathbb{N}_+ $. There exist polynomials $ F_{i,j}(x) \in \mathbb{F}[x] $ such that 
$$ F_{i,j}^{(k)}(a_l) = \delta_{i,k} \delta_{j,l}, $$
for all $ i,k = 0,1,2, \ldots, M $ and all $ j,l = 1,2, \ldots, n $, where $ \delta $ denotes the Kronecker delta.
\end{lemma}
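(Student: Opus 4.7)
The plan is to prove the lemma by a clean dimension count on the evaluation map restricted to polynomials of bounded degree. Since the desired polynomials $F_{i,j}(x)$ form the preimages of the standard basis vectors under the evaluation map, and there are $n(M+1)$ conditions in total, the natural domain to consider is $V = \{F(x) \in \mathbb{F}[x] : \deg(F) < n(M+1)\}$, which has $\mathbb{F}$-dimension exactly $n(M+1)$.

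First, I would define the $ \mathbb{F} $-linear map
$$ \Phi : V \longrightarrow \mathbb{F}^{n(M+1)}, \qquad F(x) \longmapsto \left( F^{(k)}(a_l) \right)_{0 \leq k \leq M, \; 1 \leq l \leq n}, $$
and argue that it suffices to prove $\Phi$ is a vector space isomorphism, since then $F_{i,j}$ is obtained as the unique preimage of the standard basis vector indexed by $(i,j)$. Since the domain and codomain have the same finite $\mathbb{F}$-dimension $n(M+1)$, it is enough to prove injectivity.

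For injectivity, suppose $F(x) \in V$ satisfies $F^{(k)}(a_l) = 0$ for all $0 \leq k \leq M$ and $1 \leq l \leq n$. Using the Hasse-derivative Taylor expansion $F(x) = \sum_{k \geq 0} F^{(k)}(a_l)(x-a_l)^k$ centered at each $a_l$ (which is valid in any characteristic and follows directly from Definition \ref{def Hasse derivative}), the vanishing conditions at $a_l$ are equivalent to $(x-a_l)^{M+1}$ dividing $F(x)$. Since the $a_l$ are pairwise distinct, the polynomials $(x-a_l)^{M+1}$ for $l = 1,2,\ldots,n$ are pairwise coprime, so their product $\prod_{l=1}^n (x-a_l)^{M+1}$, which has degree $n(M+1)$, divides $F(x)$. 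But $\deg(F) < n(M+1)$, forcing $F(x) = 0$. This proves injectivity, hence $\Phi$ is an isomorphism, and the $F_{i,j}$ exist.

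The only delicate point — really just a bookkeeping issue — is the translation between the classical and Hasse formulations mentioned in the remark preceding the lemma: if one works with classical derivatives, the Taylor expansion carries factorials and the divisibility step requires $k! \neq 0$ in $\mathbb{F}$ for $k \leq M$, whereas with Hasse derivatives no such restriction is needed. Thus the Hasse-derivative version is in fact slightly more natural, and the argument above applies uniformly over any field $\mathbb{F}$.
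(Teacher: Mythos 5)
Your proof is correct. The paper itself does not actually prove this lemma: it states it in the appendix and remarks that ``the proof for Hasse derivatives can be directly translated from the result for classical derivatives,'' treating the classical univariate Hermite interpolation theorem (usually established via an explicit construction of the basis polynomials $F_{i,j}$) as known. You instead give a self-contained, non-constructive argument: restrict the evaluation map to $V = \{F : \deg F < n(M+1)\}$, observe the domain and codomain have equal finite dimension, and reduce to injectivity. The injectivity step is handled cleanly via the Hasse Taylor expansion $F(x) = \sum_{k\geq 0} F^{(k)}(a_l)(x-a_l)^k$ (which is indeed an immediate consequence of Definition \ref{def Hasse derivative} by substituting $x \mapsto a_l$, $z \mapsto x - a_l$), which converts the vanishing hypotheses at $a_l$ into divisibility by $(x-a_l)^{M+1}$; pairwise coprimality and the degree bound then force $F = 0$. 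Your closing remark about why Hasse derivatives avoid the characteristic issue is also correct: with classical derivatives the Taylor coefficients carry $k!$, and the equivalence ``vanishing of all derivatives of order $\leq M$ at $a_l$ iff $(x-a_l)^{M+1} \mid F$'' breaks down in small characteristic, whereas it holds unconditionally for Hasse derivatives. The trade-off relative to the explicit construction the paper alludes to is the usual one: your argument is shorter and uniform over all fields, but does not produce the polynomials $F_{i,j}$ explicitly, whereas the constructive route does.
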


Now, since $ \mathcal{J} $ is finite, we may fix an integer $ M $ such that $ \mathcal{J} \subseteq [0,M]^m $. Similarly, we may find a finite set $ S \subseteq \mathbb{F} $ such that $ T \subseteq S^m $. Denote then $ s = \# S $ and $ S = \{ a_1, a_2, \ldots, a_s \} $, and let $ F_{i,j,k}(x_k) \in \mathbb{F}[x_k] $ be polynomials as in the previous lemma in each variable $ x_k $, for $ i = 0,1,2, \ldots, M $, $ j = 1,2, \ldots, s $ and $ k = 1,2, \ldots, m $. Define now 
$$ F_{\mathbf{i},\mathbf{j}}(\mathbf{x}) = F_{i_1,j_1,1}(x_1) F_{i_2,j_2,2}(x_2) \cdots F_{i_m,j_m,m}(x_m) \in \mathbb{F}[\mathbf{x}], $$
for $ \mathbf{i} = (i_1, i_2, \ldots, i_m) \in [0,M]^m $ and $ \mathbf{j} = (j_1, j_2, \ldots, j_m) \in [1,s]^m $. By the previous lemma and Lemma \ref{lemma leibniz formula}, we see that
$$ F_{\mathbf{i},\mathbf{j}}^{(\mathbf{k})} \left( a_{l_1}, a_{l_2}, \ldots, a_{l_m} \right) = \left( \delta_{i_1, k_1} \delta_{i_2, k_2} \cdots \delta_{i_m, k_m} \right) \left( \delta_{j_1, l_1} \delta_{j_2, l_2} \cdots \delta_{j_m, l_m} \right) = \delta_{\mathbf{i}, \mathbf{k}} \delta_{\mathbf{j}, \mathbf{l}}, $$
for all $ \mathbf{i}, \mathbf{k} \in [0,M]^m $ and all $ \mathbf{j}, \mathbf{l} \in [1,s]^m $. Finally, given values $ b_{\mathbf{i},\mathbf{j}} \in \mathbb{F} $, for $ \mathbf{i} \in \mathcal{J} $ and $ \mathbf{j} \in T $, define 
$$ F(\mathbf{x}) = \sum_{\mathbf{i} \in \mathcal{J}} \sum_{\mathbf{j} \in T} b_{\mathbf{i},\mathbf{j}} F_{\mathbf{i}, \mathbf{j}}(\mathbf{x}) \in \mathbb{F}[\mathbf{x}]. $$
We see that $ {\rm Ev}(F(\mathbf{x})) = ((b_{\mathbf{i},\mathbf{j}})_{\mathbf{i} \in \mathcal{J}})_{\mathbf{j} \in T} $, and we are done.

\section*{Acknowledgement}

The authors gratefully acknowledge the support from The Danish Council for Independent Research (Grant No. DFF-4002-00367). The second listed author also gratefully acknowledges the support from The Danish Council for Independent Research via an EliteForsk-Rejsestipendium (Grant No. DFF-5137-00076B).

\bibliographystyle{plainnat}

\end{document}